\newcommand{\dem}{\operatorname{dem}}
\renewcommand{\cong}{\operatorname{cong}}
\title{Submodular Hypergraph Partitioning: Metric Relaxations and Fast Algorithms via an Improved Cut-Matching Game}
\author{Antares Chen\\
\texttt{antaresc@uchicago.edu}\\
University of Chicago
\and
Lorenzo Orecchia\\
\texttt{orecchia@uchicago.edu}\\
University of Chicago
\and
Erasmo Tani\\
\texttt{etani@uchicago.edu}
\\
University of Chicago
}
\date{\today}
\begin{document}
\maketitle

% -------------------------------------
% ABSTRACT
% -------------------------------------

\begin{abstract}
Despite there being significant work on developing spectral~\cite{chan2018spectral,lau2022cheeger,kwok2022cheeger}, and metric embedding~\cite{louis2016approximation} based approximation algorithms for hypergraph generalizations of conductance, little is known regarding the approximability of hypergraph partitioning objectives beyond this.

This work proposes algorithms for a general model of hypergraph partitioning that unifies both undirected and directed versions of many well-studied partitioning objectives. The first contribution of this paper introduces \emph{polymatroidal cut functions}, a large class of cut functions amenable to approximation algorithms via metric embeddings and routing multicommodity flows. We demonstrate an $O(\sqrt{\log n})$-approximation, where $n$ is the number of vertices in the hypergraph, for these problems by rounding relaxations to metrics of negative-type.

The second contribution of this paper generalizes the cut-matching game framework of Khandekar \etal \cite{khandekar2007cut} to tackle polymatroidal cut functions. This yields the first almost-linear time $O(\log n)$-approximation algorithm for standard versions of undirected and directed hypergraph partitioning~\cite{kwok2022cheeger}. A technical consequence of our construction is that a cut-matching game which greatly relaxes the set of allowed actions for both players can be used to partition hypergraphs with negligible impact on the approximation ratio. We believe this to be of independent interest.
\end{abstract}

\newpage

% -------------------------------------
% INTRODUCTION
% -------------------------------------
% --------------------------------------
% INTRODUCTION
% --------------------------------------
\section{Introduction}

In the past 20 years, increasing complexity in real-world data has necessitated generalizing graph models to represent higher order relations~\cite{agarwalHigherOrderLearning2006, catalyurekHypergraphpartitioningbasedDecompositionParallel1999}. Hypergraphs have served as the quintessential object to model relations involving multiple objects, and, as a consequence, the study of hypergraph algorithms has featured prominently in recent practical~\cite{bensonHigherorderOrganizationComplex2016,tsourakakis2017scalable} and theoretical developments~\cite{chan2018spectral,chan2012linear, louis2015hypergraph,louis2016approximation}.

Hypergraph partitioning is a fundamental algorithmic problem in this broader landscape. From an unsupervised machine learning perspective, hypergraph partitioning enables detecting significant clusters in complex higher-order networks, such as social~\cite{tsourakakis2017scalable,yang2017hypergraph} or biological networks~\cite{feng2021hypergraph, klamt2009hypergraphs}. From a theoretical perspective, partitioning algorithms are a fundamental primitive to be exploited in the development of divide-and-conquer methods for other hypergraphs problems. In contrast to graphs, where there is only one way for a partition to intersect an edge, hypergraphs admit multiple definitions for the cost of cutting a hyperedge. This yields a broad spectrum partitioning objectives, and allows the end-user to better design partitioning problems that fit specific applications. While there has been significant work on developing spectral approximation algorithms for hypergraph generalizations of conductance~\cite{chan2018spectral,lau2022cheeger,kwok2022cheeger}, with few exceptions~\cite{louis2016approximation}, little is known about the approximability of large classes of hypergraph partitioning objectives beyond this. Even less is known regarding fast algorithms for these tasks.

This paper proposes \emph{a general model for partitioning hypergraphs}. The first half of this paper identifies a large class of hypergraph partitioning objectives, which we call \emph{polymatroidal cut functions}. This class is a subset of submodular cut functions previously considered by~\cite{li2017inhomogeneous, liSubmodularHypergraphsPLaplacians2018} and captures both undirected and directed versions of almost all partitioning measures found in prior theoretical work. We show that these objectives possess properties that imply they are particularly well-suited for \emph{approximation methods based on metric embeddings and routing multi-commodity flows}. This leads us to polynomial time $O(\sqrt{\log n})$-approximations for these objectives based on rounding $\ell_2^2$-metric relaxations. The second half of this paper constructs fast algorithms that achieve similar $O(\log n)$-approximations by \emph{generalizing the cut-matching framework} of Khandekar \etal~\cite{khandekar2007cut}. This produces the first almost-linear time polylogarithmic approximation for both undirected and directed hypergraph conductance.

% --------------------------------------
% HYPERGRAPHS TECHNICALS
% --------------------------------------
Let us now define concepts relevant to hypergraph partitioning. A \emph{hypergraph} is a tuple $G = (V, E)$, where $E \subseteq 2^V$ is a set of \emph{hyperedges}. The \emph{rank} of a hyperedge $h \in E$ is its cardinality $\lvert h \rvert$. If every $h \in E$ satisfies $\lvert h \rvert = k$, the hypergraph $G$ is called \emph{$k$-uniform}. Hence, an undirected graph is a $2$-uniform hypergraph. A key step in formalizing hypergraph cut problems is to define a suitable \emph{cut function} $\delta_h: 2^h \to [0,1]$ on each hyperedge $h \in E$. This expresses the cost incurred when $h$ is cut by $(A, \bar{A})$ into $A \cap h$ and $\bar{A} \cap h$. Central to our work are submodular cut functions, defined by~\cite{li2017inhomogeneous, liSubmodularHypergraphsPLaplacians2018}:

\begin{definition}[Submodular Cut Function] \label{def.cut-function}
A function $\delta_h: 2^h \to [0,1]$ is a {\it submodular cut function} if it is submodular and satisfies $\delta_h(\varnothing) = \delta_h(h) = 0$.
\end{definition}

\noindent
The space of cut functions for an edge $\{ i, j \}$ is spanned by the undirected edge cut function
$\delta_{\{i,j\}}(S) = \lvert \vone^S_i - \vone^S_j\rvert$ and the directed edge cut function $\delta_{(i,j)}(S) = (\vone^S_i - \vone^S_j)_+$, giving a complete classification of cut functions for rank-$2$ hyperedges. Higher-rank hyperedges support broader choices of cut functions.
The most common example is the {\it standard hypergraph cut function}
\cite{chan2018spectral, louis2015hypergraph}: $\delta^{\cut}_h(S) \defeq \min\{ 1, \lvert S \rvert, \lvert h\setminus S \rvert \}$, which takes value $1$ if $h$ is cut by $S$ and $0$ otherwise. If the hypergraph $G = (V, E)$ is associated with positive hyperedge weights $\vw \in \Z^E_{> 0}$, we can succintly write the cut function for the entire graph as $\delta_G(S) \defeq \sum_{h \in E} w_h
\cdot \delta_h(S)$.

With these definitions, we are interested in minimizing the \emph{submodular hypergraph ratio-cut objective}.

\begin{definition}[Submodular Hypergraph Ratio-Cut] Given a hypergraph $G = (V, E)$ with positive hyperedge weights $\vw \in \mathbb{Z}^E_{>0}$, non-negative vertex weights $\vmu \in \mathbb{Z}^V_{\geq 0}$, and a collection of submodular cut functions $\{\delta_h\}_{h \in E}$, the \emph{ratio-cut objective} on $G$ is defined for all $S \subseteq V$ as:
\begin{equation*}
\Psi_G(S)
\defeq \frac{\delta_G(S)}{\min\{\mu(S),\mu(\bar{S})\}}
= \frac{\sum_{h \in E} w_h \cdot \delta_h(S\cap h)}{\min\{\mu(S),\mu(\bar{S})\}} \, .
\end{equation*}
The \emph{ratio-cut} of $G$ is $\Psi^*_G \defeq \min_{S \subset V} \Psi_G(S)$.
\end{definition}

\noindent
When specialized to graphs\footnote{In the rest of the paper, for any undirected (resp. directed) graphs $G$, unless otherwise specified, we assume that $\delta_G$ and $\Psi_G$ are formed using the undirected (resp. directed) cut functions.} ratio-cut objectives include  both graph expansion ($\vmu= \ones$), graph conductance ($\forall i \in V, \mu_i = \sum_{h \in E : i \in h} w_h$). It also captures their directed counterparts, by replacing the undirected cut function $\delta_{\{i,j\}}$ with its directed analogue $\delta_{(i,j)}$.

% --------------------------------------
% PREVIOUS WORK: HYPERGRAPH PARTITIONING
% --------------------------------------
\paragraph{Previous Work on Submodular Hypergraph Partitioning}
The standard hypergraph partitioning problem was first studied in the contexts of parallel numerical algorithms~\cite{catalyurekHypergraphpartitioningbasedDecompositionParallel1999} and scientific computing~\cite{devineParallelHypergraphPartitioning2006}.

Since then, many high-quality approximation algorithms have been developed for approximating hypergraph expansion and conductance. Louis and Makarychev~\cite{louis2016approximation} gave the first randomized polynomial-time $O\big( \sqrt{\log n} \big)$-approximation algorithm for hypergraph expansion. This matches the best known approximation for the graph expansion~\cite{ARV2009}, and uses the same relaxation to metrics of negative-type. Louis~\cite{louis2015hypergraph} and Chan \etal~\cite{chan2018spectral} developed a spectral approach towards approximation hypergraph conductance, achieving Cheeger-like guarantees. Their algorithms require solving semidefinite programs (SDP) to approximate the hypergraph spectral gap, returning a partition with conductance at most $O\big( \sqrt{\Psi^*_G \log \max_{h \in E} |h|} \big)$. Their results are known to be tight under the small-set expansion conjecture. In more recent work, Lau \etal used a similar approach to obtain a $O(\sqrt{\Psi^*_G \log(1/\Psi^*_G})$ guarantee for directed graph conductance~\cite{lau2022cheeger}.

Much less is known about the approximability of hypergraph ratio-cut objectives for general submodular cut functions. Following the work of Yoshida~\cite{yoshidaCheegerInequalitiesSubmodular2018}, Li and Milenkovic~\cite{li2017inhomogeneous, liSubmodularHypergraphsPLaplacians2018} proposed submodular cut functions. For submodular hypergraph conductance, rounding an SDP relaxation yields algorithms with conductance guarantees $O\big( \sqrt{\Psi^*_G \cdot \log |V|} \big)$~\cite{yoshidaCheegerInequalitiesSubmodular2018} and $O\big( \sqrt{\Psi^*_G \cdot \max_{h \in E} |h|} \big)$~\cite{liSubmodularHypergraphsPLaplacians2018}. Li and Milenkovic~\cite{liSubmodularHypergraphsPLaplacians2018} further conjecture that improving the dependence on hypergraph rank is $\ClassNP$-hard. The work of Yoshida~\cite{yoshidaCheegerInequalitiesSubmodular2018} and the survey by Veldt \etal~\cite{veldtHypergraphCutsGeneral2020} provide excellent discussions of submodular cut functions and their applications.

% --------------------------------------
% PREVIOUS WORK: CUT-MATCHING GAMES
% --------------------------------------
\paragraph{Previous Work on Cut-Matching Games}
Algorithms for the above settings require solving SDP relaxations over $\lvert V \rvert \times \lvert V \rvert$ symmetric matrices, with at least $\lvert E \rvert$ constraints. No faster algorithms are currently known beyond generic SDP solvers~\cite{jiangFasterInteriorPoint2020}, which run in $\Omega\big( \lvert V \rvert^2 \lvert E \rvert \big)$ time. Even for the simplest form of standard hypergraph partitioning, the resulting SDP relaxation is a mixed packing-and-covering program for which no almost-linear time algorithm is currently known~\cite{jambulapatiPositiveSemidefiniteProgramming2021}.

In order to develop fast approximation algorithms, one then considers primal-dual methods that solve the dual to metric relaxations for ratio-cut problems. This poses an immediate issue; the dual to these relaxations are given by multi-commodity flow problems whose demands are encoded by a $\vmu$-weighted complete graph. Such problems may require quadratic time to solve~\cite{aroraSqrtLognApproximation2010} due to the large number of demands. The cut-matching game~\cite{khandekarGraphPartitioningUsing2009} was originally a method of approximately reducing this multi-commodity flow to a polylogarithmic number of exact single-commodity flow computations. Each single-commodity flow problem would route a perfect matching in the graph, until the union of matchings approximates the desired demand graph.

Through the cut-matching game, Khandekar \etal~\cite{khandekarGraphPartitioningUsing2009} obtained a $O(\log^2 |V|)$-approximation to undirected graph expansion by exactly solving $O(\log^3 |V|)$ maximum flow problems. Later, Orecchia \etal~\cite{orecchiaPartitioningGraphsSingle2008} reduced both of these measures by $O(\log n)$. Subsequent works have vastly generalized the applicability of the cut-matching game: Louis~\cite{louis2010cut} extends~\cite{khandekarGraphPartitioningUsing2009} to approximate directed graph expansion, Long and Saranurak extend~\cite{khandekar2007cut} to approximate vertex and hypergraph expansion, and~\cite{nanongkai2017dynamic} demonstrate how to use the cut-matching game using approximate single-commodity flow solves. The cut-matching game has since become ubiquitous in designing fast deterministic algorithms for various static, and dynamic graph problems~\cite{bernstein2022deterministic, chuzhoyDistancedMatchingGame2023, chuzhoy2020deterministic, chuzhoyNewAlgorithmDecremental2019}.

% --------------------------------------
% CONTRIBUTIONS
% --------------------------------------
\paragraph{Our Contributions}

In this paper, we achieve the following.
\begin{itemize}
\item We identify the class of polymatroidal cut functions, a subset of submodular cut functions amenable to metric and flow techniques. For this class, we develop notions of hypergraph flows and flow-embeddings of graphs into hypergraphs which certify lower bounds to the ratio-cut objective.

\item Using metric embedding techniques, we give polynomial-time $O(\sqrt{\log n})$-approximation algorithms for the minimum ratio-cut problem over submodular hypergraphs with polymatroidal cut functions. Our algorithms generalize the seminal result of Arora, Rao and Vazirani~\cite{aroraSqrtLognApproximation2010} for sparsest cut to the largest class of problems to date.

\item We extend the cut-matching game framework to approximate minimum ratio-cuts on submodular hypergraphs with polymatroidal cut functions. Our approach is based on approximating a continuous non-convex formulation of the minimum ratio-cut via a family of ``local'' convex submodular minimization problems. We then use the cut-matching framework to ``boost'' solutions to problems in this family, producing an $O(\log n)$-approximation algorithm for minimum ratio-cut that only requires a polylogarithmic number of approximate submodular minimization solves. This amounts to the first almost-linear time polylogarithmic approximation for both undirected, and directed hypergraph conductance.
\end{itemize}

A technical byproduct of our cut-matching game is that it allows for a cut player to play arbitrary disjoint vertex sets, and requires only a single $O(1)$-approximate submodular minimization solve per round. This may impact the deployment of cut-matching games in practice by lowering the required precision for the current running-time bottleneck of algorithms based on this framework~\cite{orecchia2022practical, veldt2023cut} as previous applications required either a single $o_n(1)$-approximate solves, or a polylogarithmic number of $O(1)$-approximate solves.

Our cut-matching game additionally avoids combinatorial restrictions found in previous applications of the framework. For example, the submodular minimization problem present in approximating hypergraph conductance is equivalent to a single-commodity flow whose demand graph is neither a perfect matching in the undirected case, nor Eulerian in the directed case. This may simplify the analyses of down-stream applications of the cut-matching game such as those found in~\cite{bernstein2022deterministic, chuzhoyDistancedMatchingGame2023, chuzhoy2020deterministic, chuzhoyNewAlgorithmDecremental2019, long2022near, nanongkai2017dynamic}.

% --------------------------------------
% CONCURRENT WORK
% --------------------------------------
\paragraph{Concurrent Work}
Subsequent to initial versions of this paper~\cite{ameranisEfficientFlowbasedApproximation2023}, Veldt~\cite{veldt2023cut} provided a generalization of the cut-matching framework to approximate minimum ratio-cut problems specified by cardinality-based symmetric submodular cut functions, a subset of polymatroidal cut functions. Their approach utilizes exact maximum flow solves to compute hypergraph cut preservers, a gadget which lower bounds the ratio-cut objective only when the cut function is cardinality-based~\cite{veldtHypergraphCutsGeneral2020, veldt2021approximate}. Unlike this work,~\cite{veldt2023cut} provides compelling empirical evaluations which should certainly be considered.

Independent of this work, Lau~\etal\cite{lau2023fast} give an almost-linear time $O(\sqrt{\log n})$-approximation for standard undirected and directed hypergraph cut functions. Their algorithm solves the dual to a relaxation of hypergraph conductance given by adding $\ell_2^2$-metric constraints to a minimum reweighted eigenvalue problem~\cite{lau2022cheeger}. This yields a multi-commodity flow problem, which they demonstrate to be approximable using a sequence of single-commodity flows via an interesting generalization of Sherman's algorithmic chaining~\cite{shermanBreakingMulticommodityFlow2009}. Their algorithm is not based on the cut-matching game, and require similar combinatorial restrictions when routing the single-commodity flow as found in previous works.

% --------------------------------------
% PAPER ORGANIZATION
% --------------------------------------
\paragraph{Paper Organization}
The paper is organized as follows:
\begin{itemize}
\item Section~\ref{sec.results} provides a technical overview of the results, and proof techniques found in this paper.

\item Section~\ref{sec.preliminaries} outlines preliminary tools and notation.

\item Section~\ref{sec.properties.and.examples} introduces polymatroidal cut functions, and provides key examples of functions in this class.

\item Sections~\ref{sec.hypergraph-flows} and \ref{sec.hypergraph.flow.embeddings} introduce hypergraph flow, and hypergraph flow embeddings. These serve as fundamental objects in constructing certificates of approximate optimality in the cut-matching game.

\item Section~\ref{sec.sdp-algorithm} details our $O(\sqrt{\log n})$-approximation for minimum ratio-cut given polymatroidal cut functions, describing the $\ell_2^2$-metric relaxation, and embedding results required to round an integral solution.

\item Section~\ref{sec.ci} describes our non-convex approach to the minimum ratio-cut problem and its connection to the cut-matching game.

\item Section~\ref{sec.alg-cm} describes our cut-matching game framework and uses it to construct an $O(\log n)$-approximation algorithm to the minimum ratio-cut problem with polymatroidal cut functions.
\end{itemize}

\noindent
The proof of several technical statements encountered throughout the paper are postponed to the appendix.

% -------------------------------------
% RESULTS
% -------------------------------------
\section{Our Results}
\label{sec.results}

% -------------------------------------
% METRIC APPROACH
% -------------------------------------
\subsection{The Metric Approach to Submodular Hypergraph Ratio-Cut}
\label{sec.results.metric}

The best polynomial-time approximation algorithms for graph ratio-cut problems are obtained by combining metric relaxations, either to general metrics~\cite{leightonMulticommodityMaxflowMincut1999} or $\ell_2^2$-metrics~\cite{ARV2009}, with metric embedding results for rounding. They achieve respectively an $O(\log n)$ and a $O(\sqrt{\log n})$-approximation ratio.
We study the applicability of this approach to the submodular hypergraph ratio-cut problem. At the outset, we remark that we should not hope to obtain polylogarithmic approximation for general submodular cut functions without further assumptions, as Svitkina and Fisher (Section 3 in~\cite{svitkinaSubmodularApproximationSamplingbased2010}) exhibit a submodular cut function for which any $o(\sqrt{n})$-approximation requires exponentially many queries to a value oracle.

Given this negative result, it is necessary to restrict the problem to a smaller class of cut functions.
At a high level, for a metric relaxation to work, the cut function must exhibit some kind of monotonicity, so that its value is preserved under low-distortion metric embeddings. However, a cut function cannot be monotone without being equal to $0$, so the monotone structure must appear in some other way.  To solve this issue, we introduce the class of \emph{polymatroidal cut functions}, which are the infimal symmetrization of monotone non-decreasing submodular functions (see below).
We believe that this new class captures most cut functions that are amenable to approximation via metric relaxation. As partial evidence to this statement, we show in Section~\ref{sec.polymatroidal} that the fractional dual objects associated with polymatroidal cut functions are closely related to polymatroidal flows, a well-studied generalization of network flows with many of the same favorable flow-cut gap properties~\cite{chekuri2012multicommodity}.

\paragraph{Polymatroidal Cut Functions and Metric Embeddings}
 Our first contribution is to define {\it polymatroidal cut functions}, a subclass of submodular cut functions given by the infimal convolutions of non-decreasing submodular functions.

\begin{definition}[Polymatroidal Cut Function]\label{def.monotone-submodular-cut-function}
A cut function $\delta_h: 2^h \to \R_{\geq 0}$ is {\it polymatroidal} if, for all $S \subseteq h$, it can be expressed as
\begin{align*}
\delta_h(S) & = \min \big\{F^{-}_h(S),\;F^{+}_h(h \setminus S) \big\},  %& \mathrm{(Min-Symmetrization),}& \quad \mathit{ or}\\
%\delta_h(S) & = F(S)+F(h \setminus S) - F(h) & \mathrm{(Sum-Symmetrization),}&
\end{align*}
where the associated functions $F_h^{-},F^{+}_h: 2^h \rightarrow \R$ are non-decreasing submodular functions such that $F^{-}_h(\varnothing) = F^{+}_h(\varnothing)= 0.$ When the associated functions $F^-_h$ and $F^+_h$ are identical, we refer to the cut function $\delta_h$ as  {\it symmetric polymatroidal}.
\end{definition}

Note that $\delta_h$ penalizes a cut $(A,\overline{A})$ intersecting $h$ in a directed manner, with $F^{-}$ penalizing $A \cap h$ and $F^{+}$ penalizing $\overline{A} \cap h$.  For a symmetric polymatroidal cut function $\delta_h$, such penalty is the same, so that $\delta_h(A \cap h) = \delta_h(\overline{A} \cap h)$, generalizing the cut function of undirected graphs and hypergraphs.
We show in Section~\ref{sec.polymatroidal} that Definition~\ref{def.monotone-submodular-cut-function} captures most of the cut functions proposed in previous work, including all directed and undirected standard hypergraph cut functions. Moreover, our setup also captures possible combinations of all these types of cut functions into a single framework.
We highlight two additional examples that are relevant in applications:
\begin{itemize}
\item $F_h$ is a cardinality-based non-decreasing submodular function~\cite{veldt2021approximate}, i.e., a non-decreasing concave function of the cardinality. This is useful when we wish to modify the star-expansion cut function by diminishing the penalty on balanced partitions of $h,$ e.g., by taking $\delta_h(S) = \min\big\{ |S|^p,|h \setminus S|^p \big\}$ for $p \in (0,1).$
\item $F_h$ is the entropy of a subset of random variables associated with vertices of $h$. When the hypergraph $G$ is the factor graph of an undirected graphical model, this yields a variant of the minimum information partition~\cite{ hidakaFastExactSearch2018,narasimhanQClustering2005}.
\end{itemize}

In Section~\ref{sec.sdp-algorithm}, we investigate the polynomial-time approximability of ratio-cut problems over submodular hypergraphs with polymatroidal cut functions. In particular, we prove the following result.
\begin{theorem}\label{thm.main-metric-approx}
There exists a randomized polynomial-time $O\big( \sqrt{\log |V|} \big)$-approximation algorithm
for solving the minimum ratio-cut problem on weighted submodular hypergraphs equipped with polymatroidal cut functions.
\end{theorem}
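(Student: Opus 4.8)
The plan is to follow the ARV-style pipeline for the $\ell_2^2$-metric (negative-type metric) relaxation of the ratio-cut problem, exactly as Louis and Makarychev do for standard hypergraphs, but using the primal property of \mscf cut functions (Theorem~\ref{thm.submodular-relaxation}) to handle the objective. First I would write down the natural SDP relaxation: assign a vector $\vv_i \in \R^d$ to each vertex $i \in V$, constrain $\{\vv_i\}$ to form an $\ell_2^2$-metric (i.e. $\|\vv_i - \vv_j\|_2^2 + \|\vv_j - \vv_k\|_2^2 \geq \|\vv_i - \vv_k\|_2^2$ for all $i,j,k$), impose the normalization coming from the denominator $\min\{\vmu(S),\vmu(\bar S)\}$ (the standard way: a spreading constraint $\sum_{i,j} \mu_i \mu_j \|\vv_i-\vv_j\|_2^2 \geq \vmu(V) \cdot (\text{const})$, or equivalently work with the ratio directly), and minimize a convex surrogate of $\sum_{h} w_h \delta_h(S \cap h)$. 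The key point is that by Theorem~\ref{thm.submodular-relaxation}, the Lov\'asz extension $\hat\delta_h$ of each \mscf is, up to a constant factor, an absolute norm $\|\cdot\|_h$ on $\R^h \perp \vone$; since a norm squared evaluated on vector-valued arguments (i.e. $\|(\vv_i)_{i \in h}\|_h^2$ interpreted coordinatewise, then taking the induced quantity) is a convex function of the Gram matrix, this gives a genuine SDP whose optimum is within a constant of $\Psi_G^\star$ after integral-to-fractional comparison. Concretely, for any cut $S$ with characteristic vector $\vx = \vone^S$, setting $\vv_i$ to be $1$-dimensional equal to $x_i$ shows the SDP value is at most $O(\Psi_G^\star)$.

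Second, I would run the ARV rounding. Given an optimal SDP solution $\{\vv_i\}$ of value $\mathrm{SDP} \le O(\Psi_G^\star)$, the ARV structure theorem (for $\mu$-weighted $\ell_2^2$-metrics — the weighted version appears in Lee's and in Louis--Makarychev's treatments) produces, in randomized polynomial time, two sets $L, R \subseteq V$ with $\vmu(L), \vmu(R) = \Omega(\vmu(V))$ and $\|\vv_i - \vv_j\|_2^2 \ge \Delta = \Omega(1/\sqrt{\log n})$ for all $i \in L, j \in R$. From the separated sets one extracts, by sweeping over the distance scale, a single cut $S^\star$; the standard argument bounds $\delta_G(S^\star)/\min\{\vmu(S^\star),\vmu(\bar S^\star)\}$ by $O(\sqrt{\log n})$ times $\mathrm{SDP}$. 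The only non-generic ingredient is bounding the numerator $\sum_h w_h \delta_h(S^\star \cap h)$ by the SDP objective times the sweep loss. Here I would use that $\delta_h$, being an absolute norm on $\R^h \perp \vone$ up to constants, is $1$-Lipschitz (up to the norm-equivalence constant) with respect to the $\ell_\infty$, hence the $\ell_2^2$-induced, geometry: the contribution of hyperedge $h$ to a threshold cut at radius $r$, integrated over $r$, is controlled by $\int \delta_h$ which in turn is dominated by a constant times $\|(\vv_i)_{i\in h}\|$-type quantities appearing in the SDP objective — essentially the triangle-inequality / Lipschitz argument of Louis--Makarychev, now routed through the absolute-norm approximation rather than through the explicit form $\min\{1,|S|,|h\setminus S|\}$.

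The main obstacle I anticipate is the last step: making the hyperedge-by-hyperedge charging argument go through for a general absolute norm, since unlike the standard cut function (where a hyperedge is either cut or not, and one just needs to catch it at some threshold) a general \mscf can take a continuum of fractional values, so the "amount cut" of $h$ at a sweep level must be compared against the norm $\|\cdot\|_h$ evaluated on the geometric data, and one must verify that the constant-factor norm approximation of Theorem~\ref{thm.submodular-relaxation} interacts correctly with the squared ($\ell_2^2$) scaling used throughout ARV. I expect this to be handled by first observing that the absolute norm $\|\cdot\|_h$ restricted to a line (the sweep direction) behaves like a scalar multiple of absolute value, so on each hyperedge the analysis collapses to a one-dimensional statement identical in form to the graph/standard-hypergraph case, with the norm constant absorbed into the $O(\cdot)$. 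A secondary, more technical point is correctly incorporating the vertex measure $\vmu$ into both the spreading constraint and the ARV structure theorem (the original ARV is stated for uniform measure / expansion); I would cite the $\mu$-weighted generalization and verify the normalization bookkeeping, but this is routine. Combining the $O(\Psi_G^\star)$ bound on $\mathrm{SDP}$ with the $O(\sqrt{\log n})$ rounding loss yields the claimed $O(\sqrt{\log n})$-approximation in randomized polynomial time, completing the proof.
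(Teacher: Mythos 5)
Your high-level architecture matches the paper's (an $\ell_2^2$-metric relaxation, normalized by a $\mu$-weighted spreading constraint, with Theorem~\ref{thm.submodular-relaxation} supplying the absolute-norm approximation of the Lov\'asz extensions, followed by ARV-type rounding), and you correctly flag the numerator charging step as the crux. But that flagged step is a genuine gap, not a technicality to be smoothed over, and the paper resolves it with a device that your sketch does not contain. The relaxation~\eqref{eqn.vector-program} does not try to express $\sum_h w_h \bar\delta_h$ as a function of the Gram matrix of the $\vv_i$'s (your ``$\|(\vv_i)_{i\in h}\|_h^2$ interpreted coordinatewise'' idea does not yield a well-defined SDP objective, and in any case the objective the paper needs is first-power in the $F_h$-norm, not squared). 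Instead, the program introduces an auxiliary \emph{length vector} $\vell^h \in \R^h$ per hyperedge, takes the objective $\sum_h w_h\|\vell^h\|_{F_h}$, and couples lengths to the embedding only through the constraint $\|\vv_i - \vv_j\|_2^2 \leq \ell^h_i + \ell^h_j$ for $i,j\in h$. This is what makes everything downstream tractable.

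The rounding is correspondingly different from the ``two well-separated sets plus sweep'' route you propose. The paper applies the Feige--Hajiaghayi--Lee formulation of ARV (Theorem~\ref{thm:LeeFeigeEmbedding}): a single $1$-Lipschitz map $\phi:\{\vv_i\}\to\R$ preserving the $\mu\times\mu$-weighted average distance up to $O(\sqrt{\log n})$. Setting $x_i = \phi(\vv_i)$, Lipschitzness gives $|x_i - x_j| \leq \ell^h_i + \ell^h_j$; Lemma~\ref{lem:optimal-nu} then produces, per hyperedge, a shift $\nu_h$ with $|x_i - \nu_h| \leq \ell^h_i$ entrywise, and monotonicity of $\|\cdot\|_{F_h}$ (Fact~\ref{fact.f-norm-absolute-and-monotonic}) gives $\min_\nu\|\vx-\nu\vone\|_{F_h} \leq \|\vell^h\|_{F_h}$. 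Combined with Theorem~\ref{thm.submodular-relaxation} this bounds $\sum_h w_h\bar\delta_h(\vx)$ by $2\cdot\opt$, the denominator is handled by the embedding's average-distance guarantee, and Lemma~\ref{lem.continuous-to-discrete-equivalence} rounds $\vx$ to a cut. So there is no hyperedge-by-hyperedge Lipschitz charging against sweep cuts at all; the length variables and Lemma~\ref{lem:optimal-nu} replace it entirely. Without that device your plan stalls exactly where you said it would.
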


This result  extends the $O\big( \sqrt{\log n} \big)$-approximation of Louis and Makarychev~\cite{louis2016approximation} for standard hypergraph sparsest cut to all ratio-cut problems for the more general class of polymatroidal functions.
Our algorithm for Theorem~\ref{thm.main-metric-approx} solves a novel $\ell_2^2$-metric relaxation of the ratio-cut problem for weighted submodular hypergraphs (see the program~\ref{eqn.gen-vector-program} in Section~\ref{sec.sdp-algorithm}). This relaxation crucially exploits the form of the Lov\'asz extension of polymatroidal cut functions.
To the best of our knowledge, our results constitutes the broadest known generalization of the seminal result of Arora, Rao and Vazirani for sparsest cut, capturing a wealth of partitioning objectives, including directed and undirected graph ratio-cut, vertex-based ratio cut and hypergraph ratio-cut within a single simple analysis. It also improves on the best known approximation ratio for many other objectives in this class, such as cardinality-based cut functions~\cite{veldt2021approximate}.

\paragraph{Polymatroidal Cut Functions and Hypergraph Flows}
In the second part of the paper, we explore the efficient solution of the minimum submodular hypergraph ratio cut problem via flow methods based on the cut-matching game. We start by developing the notions of flow-cut duality and flow embeddings over hypergraphs. For this purpose, in Section~\ref{sec.hypergraph-flows}, we define a novel notion of \emph{hypergraph flows} for weighted submodular hypergraphs equipped with polymatroidal cut functions.
Specifically, for a weighted submodular hypergraph $G=(V,E, \vw, \vmu)$, we show that the subgradients of polymatroidal cut functions behave like network flows over the \emph{factor graph}, a graph analogue of the input submodular hypergraph that includes an auxiliary vertex for each hyperedge $h \in E$ and an edge $\{i,h\}$ for each $i \in h.$ For each hyperedge $h \in E$, the monotone functions $F_h^-$ and $F_h^+$ define capacity constraints on the flows entering and leaving the corresponding auxiliary node. In this process, we highlight new connections between submodular hypergraphs and the polymatroidal flows described by Lawler and Martell~\cite{lawler1982computing} and extended to the multi-commodity flow setting by Chekuri \etal~\cite{chekuri2012multicommodity}. The main algorithmic result in this part of the paper is a \emph{flow decomposition result for hypergraph flows} (Theorem~\ref{theorem.hypergraph-flow-decomposition}) which allows us to construct flow embeddings of directed and undirected graphs into hypergraphs while lower bounding the size of the hypergraph cut with the size of the corresponding graph cut. This will become a fundamental building block for our cut-matching game construction.

% -------------------------------------
% NON-CVX APPROACH
% -------------------------------------
\subsection{A Non-Convex Optimization Approach To Minimum Ratio-Cut}
\label{sec.results.ci}

To generalize the cut-matching game framework to the setting of submodular hypergraphs with polymatroidal cut functions, we devise a novel optimization approach based on a continuous non-convex formulation \eqref{eq.rc-noncvx} of the submodular hypergraph ratio-cut problem.
Given a weighted submodular hypergraph $G = (V, E, \vw, \vmu)$, and cut functions $\{ \delta_h \}_{h \in E}$ with Lov\'asz extensions $\{\bar{\delta}_h\}_{h \in E}$, we define~\eqref{eq.rc-noncvx} as the following non-convex optimization problem over $\R^V$:
\begin{align*}
\bar{\Psi}_G(\vx)
%&\defeq \frac{\overline{\delta}_G(\vx)}{\overline{\delta}_{V}(\vx)}
&\defeq \frac{\sum_{h \in E} w_h \bar{\delta}_h(\vx)}{\min_u \|\vx - u \vone\|_{\vmu,1}} \\
\bar{\Psi}_G^*
&\defeq \min_{\vx \in \R^V} \bar{\Psi}_G(\vx)
\stepcounter{equation}
\tag{\texttt{RC-NonCvx}}
\label{eq.rc-noncvx}
\end{align*}
The fact that solving~\eqref{eq.rc-noncvx} is equivalent to solving the minimum ratio-cut problem (expressed in the following lemma) is a direct consequence of the submodularity of the cut functions. It is proved in full in Appendix~\ref{sec.appendix.omitted}.

\begin{restatable}{lemma}{lemcontinuous}
\label{lem.continuous-to-discrete-equivalence}
Given a weighted hypergraph $G^*=(V,E,\boldsymbol\mu,\vw)$, we have $\Psi_G^* = \bar{\Psi}_G^*$. Furthermore, there exists an algorithm that, given any $\vx \in \R^V$, recovers a cut $S \subseteq V$ satisfying
\begin{equation*}
\Psi_G(S) \leq \bar{\Psi}_G(\vx)
\end{equation*}
in time $O\big( |V| \log |V| + \sparsity(G) \big)$, where the sparsity of $G$ is defined as $ \sparsity (G) \defeq \sum_{h\in E} |h|.$
\end{restatable}

The main idea in our approach is to address the non-convexity of the $\bar{\Psi}_G$ objective by replacing the non-concave denominator with a linear lower bound. In particular, we can exploit the dual characterization of the $\ell_1$-norm in the lower bound to write:
\begin{equation}\label{eq.calc}
\min_{u \in \R} \lVert \vx - u \ones \rVert_{1, \vmu}
= \min_{u \in \R} \max_{\lVert \vy \rVert_\infty \leq 1} \langle \vy, \vx - u \ones \rangle_{\vmu}
= \max_{\substack{\lVert \vy\rVert_\infty \leq 1 \\ \langle \vy, \ones \rangle_{\vmu} = 0}} \langle \vy, \vx \rangle_{\vmu}
%\geq \lvert \langle \vs, \vx \rangle_{\vmu} \rvert
\end{equation}

This calculation directly leads us to define a  novel \emph{localized, convex formulation} of the \eqref{eq.rc-noncvx} problem for each vector $\vs \in \R^V$ with $\|\vs\|_{\infty} \leq 1$ and $\langle \vs, \vone \rangle = 0,$ which we call a \emph{seed}:

\begin{align*}
\bar{\Psi}_{G,\vs}(\vx)
%&\defeq \frac{\overline{\delta}_G(\vx)}{\overline{\delta}_{V}(\vx)}
&\defeq \frac{\sum_{h \in E} w_h \bar{\delta}_h(\vx)}{\max\{0,\langle \vs, \vx \rangle_{\vmu}\}} \\
\bar{\Psi}_{G,\vs}^*
&\defeq \min_{\vx \in \R^V} \bar{\Psi}_{G,\vs}(\vx)
\stepcounter{equation}
\tag{\texttt{Local-RC}}
\label{eq.rc-local}
\end{align*}
Intuitively, the program~\eqref{eq.rc-local} seeks a distribution over cuts $\vx$ with small expected cut size and  high correlation with the seed $\vs.$
In Section~\ref{sec.ci}, we investigate the properties of this formulation, including its equivalence with an integral cut problem, the ratio-cut improvement problem (Definition~\ref{def.rc-improve} in Section~\ref{sec.ci}), which
generalizes the cut improvement problem of Andersen and Lang~\cite{Andersen-Lang} to submodular hypergraphs.
For polymatroidal cut functions, we show that the natural notion of a dual solution for the convex problem~\ref{eq.rc-local} is exactly a hypergraph flow with demand vector $\diag(\vmu) \vs.$
By applying our flow decomposition result, we can turn such a hypergraph flow into a \emph{dual graph certificate}, a graph $D$ such that $\Psi^*_{G,\vs} \cdot \delta_D(S) \leq \delta_G(S)$ for all $S \subseteq V.$
To construct these objects, we prove the following algorithmic result:
\begin{theorem}[Informal. See Theorems~\ref{thm.general-solver} and~\ref{thm.maxflow-solver}]
\label{thm.informal-ci}
An approximate primal-dual solution to $\eqref{eq.rc-local}$ can be computed by solving $O(\log |V|)$-many decomposable submodular minimization problems for general polymatroidal cut functions or $O(\log |V|)$-many $\nicefrac{1}{2}$-approximate maximum flow problems over a graph of size $O(\sparsity(G)).$
\end{theorem}

Finally, and crucially for our purposes, for all $\vx \in \R^V,$ we have $\bar{\Psi}_G(\vx) \leq \bar{\Psi}_{G,\vs}(\vx)$ by Equation~\ref{eq.calc}. Hence, any (not necessarily optimal) solution $\vx$ to \eqref{eq.rc-local} for a seed $\vs$ yields a solution to \eqref{eq.rc-noncvx} of value at most $\bar{\Psi}_{G,\vs}(\vx)$ and, via Lemma~\ref{lem.continuous-to-discrete-equivalence}, a cut $S$ with $\Psi_G(S) \leq \bar{\Psi}_{G,\vs}(\vx).$
In other words, if we can somehow find a seed $\vs$ such that $\bar{\Psi}^*_{G,\vs} \leq \alpha \Psi^*_G$, then solving the \eqref{eq.rc-local} will yield an $\alpha$-approximation algorithm for $\Psi^*_G.$

This reduction may not seem very useful, as finding the required seed may be as hard as solving the original problem.
However, we can now exploit the dual feedback obtained when solving \eqref{eq.rc-local} for a sequence of seeds $ \vS = (\vs_1, \vs_2, \ldots, \vs_T)$ to effectively guide our search for a good seed and boost our approximation ratio.
In particular, suppose we can adaptively construct a sequence of $T$ seeds $\vS$ such that the sum $H =\sum_{t=1}^T D_t$
of the corresponding dual demand graphs $D_t$ has large ratio-cut, i.e., $\Psi_H^* \geq \beta.$  Then, it is easy to show (see Theorem~\ref{thm.cm.approx} and its proof) that we must have $\min_{t=1}^T \Psi^*_{G,\vs_t} \leq \nicefrac{T}{\beta} \cdot \Psi^*_G .$ Hence, solving \eqref{eq.rc-local} for one of the seeds in the sequence $\vS$ must yield a $\nicefrac{T}{\beta}$-approximation to the minimum ratio-cut problem.
It turns out that the problem of constructing such sequence $\vS$ is precisely the task addressed by the cut-matching framework of Khandekar \etal~\cite{khandekarGraphPartitioningUsing2009}, which we generalize and strengthen in our next contribution.

% -------------------------------------
% CUT-MATCHING GAME
% -------------------------------------
\subsection{An Improved Cut-Matching Game}
\label{sec.results.cm}

We now describe our new definition of the cut-matching game.
A key distinction between our setup and the previous lies in what constitutes admissible actions played by the cut and matching players.

\begin{restatable}{definition}{cutaction}\label{def.cut-action}
Let $V$ be a set of vertices with vertex weights $\vmu \in \Z^{V}_{\geq 0}$. A {\it cut action} is a pair $(A, B)$ of non-empty disjoint subsets $A, B \subseteq V$ such that $\vmu(A) \leq \vmu(B).$
\end{restatable}

\noindent
The cut player is no longer restricted to returning a partition of the vertex set, let alone a bisection. The set of valid actions for a matching player is similarly relaxed, as the edges of the response do not need to form a perfect matching. It is only required that enough of the available degree is routed across the cut action.

\begin{restatable}{definition}{approximatematchingresponse}\label{def.approx-matching-action}
An \emph{approximate matching response} to a cut action $(A, B)$ is an weighted bipartite graph $D = \big( A \cup B, E_D, \vw^D, \vmu \big)$ satisfying the following conditions.
\begin{enumerate}
\item Bounded degree: for all $i \in V$
\begin{equation*}
\deg_D(i)
\leq \begin{cases}
  \mu_i \; &i \in A \\
  \frac{\vmu(A)}{\vmu(B)} \cdot \mu_i \;& i \in B
\end{cases} \, .
\end{equation*}

\item Largeness: $\vw^D \big( E(A, B) \big) \geq {\vmu(A) \over 2}$.
\end{enumerate}
We may emphasize that $D$ is a \emph{directed approximate matching response}, if the edges of $D$ are directed from $A$ to $B.$
\end{restatable}

\noindent
This definition is motivated by the fact that an approximate matching response naturally arises from the dual graph certificate obtained by approximately solving the \eqref{eq.rc-local} program above. The matching response will be undirected when we wish to approximate the minimum ratio-cut of a hypergraph with symmetric cut functions and directed in the general case of asymmetric cut functions. Let us now define the cut-matching game.

\begin{restatable}{definition}{generalizedcutmatchinggame}
\label{def.cut-matching-game}
Let $n > 0$, and $\vmu \in \Z^V_{\geq 0}$ be a weighting over a set of vertices $V$ where $\lvert V \rvert = n$. A $(n, \vmu)$-\emph{generalized cut-matching game} is a multi-round, two-player game between a cut player $\cC$, and a matching player $\cM$ that proceeds as follows. In the $t$-th round of the game, following definitions~\ref{def.cut-action} and \ref{def.approx-matching-action}:
\begin{enumerate}
\item $\cC$ plays a cut action $(A_t, B_t)$,

\item $\cM$ responds with an approximate matching response $D_t$ to $(A_t, B_t)$.
\end{enumerate}
The \emph{state graph} after $t$ rounds is $H_t = \sum_{s=1}^t D_s$ the edge-wise sum of the matching response seen thus far.
\end{restatable}

\noindent
For comparison, the original cut-matching game of~\cite{khandekarGraphPartitioningUsing2009} is captured by the above definition when one specifically considers graph expansion $\vmu = \ones$, restricts cut actions to be exact bisections $(A, \bar{A})$ where $\lvert A \rvert = \lvert \bar{A} \rvert$, and requires exact matching responses: $\vw^D\big( E(A, \bar{A}) \big) = \vmu(A)$.

To approximate minimum ratio-cut using our cut-matching games, we seek a cut player $\cC$ that outputs a sequence of cuts which, in as few cuts as possible, forces the matching player $\cM$ to place edges that make the minimum ratio-cut objective for the state graph $H_T$ large. We call a cut player \emph{good} if it can achieve this.

\begin{restatable}{definition}{goodcutstrategy}
\label{def.good-cut-strategy}
A \emph{cut strategy} is a randomized algorithm $\cA_{\cut}$ that takes as input the current state graph $H$, and outputs a cut action $(A, B)$. A cut strategy $\cA_{\cut}$ is $\big( f(n), g(n) \big)$-\emph{good} if a cut player $\cC$ using $\cA_{\cut}$ at every iteration, achieves:
$$
\Psi_{H_{g(n)}} \geq f(n) \, \,
$$
with constant probability, for any (potentially adaptive) sequence of approximate matching responses $D_1, \ldots, D_{g(n)}$.
\end{restatable}

Note that, when dealing with symmetric cut functions, $H$ will be undirected, so that  $\Psi_H$ refers to the undirected ratio-cut objective.  If the cut functions are asymmetric, then $\Psi_H$ is naturally replaced with the directed ratio-cut objective.

These definitions reveal the mechanism by which the cut-matching framework adaptively constructs a sequence of seeds to~\eqref{eq.rc-local}: the seed we choose at every iteration of cut-matching game is just a $\vmu$-weighted indicator vector for the cut action of that round. A good strategy guarantees that, no matter what dual graph certificates we receive for each seed, at the end of the game at $T=g(n)$ iterations, we can guarantee $H_T$ has ratio-cut objective at least $\beta = f(n)$. This shows that a good strategy yields an approximation ratio of $O(\nicefrac{g(n)}{f(n)})$
for minimum ratio cut. In Section~\ref{sec.alg-cm}, we formalize this result as Theorem~\ref{thm.cm.approx}.
Our main technical theorem (Theorem~\ref{thm.cm.good-cut-strategy}) regarding the new cut-matching game shows the existence of good strategies for the cut-player that yield $O(\log n)$-approximation algorithms for the minimum ratio-cut problem over submodular hypergraphs, for both the cases of symmetric and general polymatroidal cut functions. The proof of this theorem is based on a new geometric result about separated sets in vector embeddings, which we discuss below.
Together with Theorem~\ref{thm.informal-ci}, our results on good cut-player strategies directly imply the following results for the solution of the minimum ratio-cut problem. Their short proof appear in Appendix~\ref{sec.appendix.omitted.alg-cm}.

\begin{restatable}{corollary} {genpoly}
\label{thm.cm.approx-logn}
There exists an $O(\log n)$-approximation algorithm for the minimum ratio-cut problem over submodular hypergraphs with polymatroidal cut functions whose running time is dominated by the solution of a polylogarthmic number of decomposable submodular minimization problems over the hypergraph cut function.
\end{restatable}

\begin{restatable}{corollary}{polycut}
There exists an $O(\log n)$-approximation algorithm for the minimum ratio-cut problem over submodular hypergraphs equipped with the directed or standard hypergraph cut function whose running time is almost linear in the hypergraph sparsity.
\end{restatable}

\paragraph{Separated Sets}
In our new cut-matching game, the cut player must pay considerable more care in choosing a cut action $(A, B)$ than in the original version, as the approximate nature of the matching response means that the matching player may easily keep some small subset of vertices disconnected from the rest of the graph. Suppose for instance that the cut player restricted itself to play $\mu$-bisections. Then, the matching player could select a small cut $T$, with $\mu(T) \leq \nicefrac{1}{10} \cdot \mu(V)$ and never place any edge across the boundary of $T.$ The only way the cut player can force progress on $T$ is to depart from playing bisections and play a cut action $(A,B)$ where $A$ is close or equal to $T.$

Let us see how this intuition is formalized in our design of cut-player strategies. As in the analysis of Orecchia \etal~\cite{orecchiaPartitioningGraphsSingle2008}, our cut strategy maintains at each iteration $t$ a spectral embedding for the current state graph $H_t$. We make progress if we can force the matching player to connect vertices that are far away in this embedding. We show that we can achieve this, even against an approximate matching player, by choosing a cut action coming from the following notion of separated sets:
\begin{definition}[$\sigma$-separated sets]
\label{def.separated-sets}
    Given an embedding $\{\vv_i\}_{i\in [n]}$ and a measure $\vmu \in \R^n_{\geq 0}$ we say two sets $S, T \subseteq [n]$ are \emph{$\sigma$-separated} if:
    \[
        \vmu(S) \cdot \vmu(T) \cdot \min_{\substack{i\in S\\ j\in T}} \norm{\vv_i - \vv_j}^2 \geq \sigma \cdot \sum_{i,j \in \binom{V}{2}} \mu_i \mu_j \cdot \norm{\vv_i - \vv_j}^2.
    \]
\end{definition}

This definition generalizes the idea of well-separated sets~\cite{ARV2009}, which was applied only to large balanced sets, to possibly unbalanced sets, as long as $S$ and $T$ contribute a $\sigma$-fraction to the total variance of the embedding.
In Section~\ref{sec.separated}, we show how to efficiently find separated sets and prove the following result.

\begin{theorem}
\label{thm:separated}
Given an embedding $\{\vv_i\}_{i\in [n]} \subseteq \R^d$ and a measure $\vmu \in \Z^n_{\geq 0}$, with $\mu(V) = \poly(n)$, there exists a randomized algorithm running in time $\tilde{O}(nd)$ that outputs $\sigma$-separated sets $S$ and $T$ for $\sigma = \Omega(\nicefrac{1}{\log n})$ with high probability.
\end{theorem}

% -------------------------------------
% PRELIMINARIES
% -------------------------------------
\section{Preliminaries}\label{sec.preliminaries}

\paragraph{Hypergraphs} The sparsity $\sparsity(G)$ of a hypergraph is defined as $ \sparsity (G) \defeq \sum_{h\in E} |h|.$ Throughout the paper, we assume that the edge weights $\vw$ and vertex weights $\vmu$ of our instance submodular hypergraph $G=(V, E, \vw,\vmu)$ are polynomials in $|V|.$ In particular, we have ${\vmu}(V) \in \poly(|V|).$

% -------------------------------------
% LINEAR ALGEBRA NOTATION
% -------------------------------------
\paragraph{Linear Algebra Notation}
Throughout this paper, we adopt the convention of representing scalars $x$ in unbolded lowercase, vectors $\vx$ in boldface lowercase, and matrices $\mX$ as boldface uppercase.
Given vectors $\vx, \vy \in \R^n$, we will use $\vx \circ \vy$ to denote the Hadamard (entry-wise) product between $\vx$ and $\vy$.
We will  use $\lvert \vx \rvert$ to denote the vector whose entries are the absolute value of the entries in $\vx$.
We let $\langle \vx, \vy \rangle$ denote the standard Euclidean inner product, $\lVert \cdot \rVert$ the Euclidean norm, and $\lVert \cdot \rVert_p$ the $\ell_p$-norm.

We define an inner product between $\vx, \vy \in \R^n$ weighted by a non-negative vector $\vmu \in \R^n_{\geq 0}$ as
$
\langle \vx, \vy \rangle_{\vmu} \defeq \sum_{i=1}^n \mu_i \cdot x_i\cdot y_i \, .
$
The quantity $\lVert \vx \rVert_{\vmu} \defeq \sqrt{\langle \vx, \vx \rangle_{\vmu}}$ will denote the norm induced by the inner product $\langle \cdot, \cdot \rangle_{\vmu}$, while for any $p > 0$, we use $\lVert \vx \rVert_{\vmu, p}$ to denote the $\ell_p$-norm weighted by $\vmu$
$
\lVert \vx \rVert_{\vmu, p}
\defeq \left( \sum_{i=1}^n \mu_i \cdot \lvert x_i \rvert^p \right)^{1/p} \, .
$
Note consequently that $\lVert \vx \rVert_{\vmu} = \lVert \vx \rVert_{\vmu, 2}$. For $\vx, \vy \in \R^n$, we write $\vx \perp \vy$ to indicate the relation: $\langle\vx,\vy\rangle = 0$.

Given any vector $\vx \in \R^n$, we define the positive and negative parts of $\vx$ as the vectors $\vx_+,\vx_-\in \R^n$ with
        $\vx_+(i) \defeq  \max\{0,\vx(i)\}$ and
        $\vx_-(i) \defeq  \max\{0,-\vx(i)\}.$

If $\mX, \mY \in \R^{n \times n}$, then $\langle \mX, \mY \rangle = \sum_{i,j=1}^n X_{ij} Y_{ij}$ denotes the Frobenius inner product. We let $\cS^{n \times n}$ be the set of $n$ by $n$ symmetric matrices. Given $\mX,\mY \in \cS^{n \times n}$ we denote by $\mX \succeq 0$ the statement that $\mX$ is positive semidefinite, and by $\mX \preceq \mY$ the relation $0 \preceq \vY-\vX$.  We use $\diag(\vx) \in \R^{n \times n}$ to denote the real matrix with the coordinates of $\vx$ placed along its diagonal. We will reserve specifically the matrix $\mM \defeq \diag(\vmu)$ for the non-negative vector $\vmu \in \R^n_{\geq 0}$.

Given a finite set $V$, a vector $\vx \in \R^V$, and a subset $S \subseteq V$, we let $\vx(S) = \sum_{v\in S} x_v$. We also denote by $\vx_S \in \R^V$, the restriction of $\vx$ to coordinates in $S$.
In this way, the vector $\ones_S$ denotes the $\{0,1\}$-indicator of set $S$.

% -------------------------------------
% GRAPHS
% -------------------------------------
\paragraph{Preliminaries on Graphs}

Given a graph $H = (V, E, \vw)$, denote its combinatorial Laplacian as $\mL(H)$:
\begin{equation*}
\mL(H) \defeq \mD(H) - \mA(H),
\end{equation*}
If $H$ is directed, then we will associate with it a directed Laplacian $\mL_+$ given by
\begin{equation*}
\mL_{+}(H) = \sum_{(i,j) \in E} w_{ij} \cdot \big( \mL_{ij} + \mE_{ii} - \mE_{jj} \big)
\end{equation*}
where $\mL_{ij}$ is the Laplacian of an undirected graph $\big(V, \big\{ \{ i, j \} \big\} \big)$, and $\mE_{ii}$ is $1$ on $ii$-th entry and $0$ everywhere else. We will denote $\tilde{H}$ as the symmetrization of $H$, i.e. the weighted undirected graph formed by taking the edge-wise sum ${\mA(G) + \mA(G)^{\top}}$ of $H$ and its reverse graph.

% -------------------------------------
% HYPERGRAPH FLOWS
% -------------------------------------

\paragraph{Factor Graphs} A hypergraph $G = (V, E_G, \vw, \vmu)$ is naturally associated with its {\it factor graph}, an unweighted graph $\hat{G} = (V \cup E_G, \hat{E}_G)$ that is bipartite between the \emph{variable} vertices representing vertices $i \in V$, and \emph{factor} vertices representing hyperedges $h \in E_G$. For $i \in V$ and $h \in E_G$, the edge $\{ i, h \}$ belongs to $\hat{E}_G$ if and only if $i \in h$, i.e.:
\begin{equation*}
\hat{E}_G = \{ \{ i, h \} \in V \times E_G : i \in h \}.
\end{equation*}

\begin{figure}
    \centering
    \tikzset{every picture/.style={line width=0.75pt}} %set default line width to 0.75pt

\begin{tikzpicture}[x=0.75pt,y=0.75pt,yscale=-1,xscale=1]
%uncomment if require: \path (0,300); %set diagram left start at 0, and has height of 300

%Straight Lines [id:da6356051515776279]
\draw    (398,52) -- (453,96) ;
%Straight Lines [id:da239526132568607]
\draw    (329,69) -- (373,81) ;
%Shape: Polygon Curved [id:ds8042551652826818]
\draw   (38,60) .. controls (72,28) and (141,29) .. (136,50) .. controls (131,71) and (124,79) .. (132,116) .. controls (140,153) and (4,92) .. (38,60) -- cycle ;
%Shape: Polygon Curved [id:ds4337123900260562]
\draw   (106,117) .. controls (90,99) and (75.99,56.19) .. (111,43) .. controls (146.01,29.81) and (201,46) .. (224,52) .. controls (247,58) and (229,111) .. (216,130) .. controls (203,149) and (122,135) .. (106,117) -- cycle ;
%Shape: Circle [id:dp38600988731787567]
\draw  [color={rgb, 255:red, 0; green, 0; blue, 0 }  ,draw opacity=1 ][fill={rgb, 255:red, 0; green, 0; blue, 0 }  ,fill opacity=1 ][line width=0.75]  (44,67) .. controls (44,64.79) and (45.79,63) .. (48,63) .. controls (50.21,63) and (52,64.79) .. (52,67) .. controls (52,69.21) and (50.21,71) .. (48,71) .. controls (45.79,71) and (44,69.21) .. (44,67) -- cycle ;
%Shape: Circle [id:dp08959719314508696]
\draw  [color={rgb, 255:red, 0; green, 0; blue, 0 }  ,draw opacity=1 ][fill={rgb, 255:red, 0; green, 0; blue, 0 }  ,fill opacity=1 ][line width=0.75]  (113,50) .. controls (113,47.79) and (114.79,46) .. (117,46) .. controls (119.21,46) and (121,47.79) .. (121,50) .. controls (121,52.21) and (119.21,54) .. (117,54) .. controls (114.79,54) and (113,52.21) .. (113,50) -- cycle ;
%Shape: Circle [id:dp41730788518264084]
\draw  [color={rgb, 255:red, 0; green, 0; blue, 0 }  ,draw opacity=1 ][fill={rgb, 255:red, 0; green, 0; blue, 0 }  ,fill opacity=1 ][line width=0.75]  (115,114) .. controls (115,111.79) and (116.79,110) .. (119,110) .. controls (121.21,110) and (123,111.79) .. (123,114) .. controls (123,116.21) and (121.21,118) .. (119,118) .. controls (116.79,118) and (115,116.21) .. (115,114) -- cycle ;
%Shape: Circle [id:dp6595663728463333]
\draw  [color={rgb, 255:red, 0; green, 0; blue, 0 }  ,draw opacity=1 ][fill={rgb, 255:red, 0; green, 0; blue, 0 }  ,fill opacity=1 ][line width=0.75]  (214,62) .. controls (214,59.79) and (215.79,58) .. (218,58) .. controls (220.21,58) and (222,59.79) .. (222,62) .. controls (222,64.21) and (220.21,66) .. (218,66) .. controls (215.79,66) and (214,64.21) .. (214,62) -- cycle ;
%Shape: Circle [id:dp7995639246080091]
\draw  [color={rgb, 255:red, 0; green, 0; blue, 0 }  ,draw opacity=1 ][fill={rgb, 255:red, 0; green, 0; blue, 0 }  ,fill opacity=1 ][line width=0.75]  (201,123) .. controls (201,120.79) and (202.79,119) .. (205,119) .. controls (207.21,119) and (209,120.79) .. (209,123) .. controls (209,125.21) and (207.21,127) .. (205,127) .. controls (202.79,127) and (201,125.21) .. (201,123) -- cycle ;
%Shape: Polygon Curved [id:ds3226412533235671]
\draw   (106,52) .. controls (104,33) and (140,27) .. (180,31) .. controls (220,35) and (243,51) .. (236,69) .. controls (229,87) and (108,71) .. (106,52) -- cycle ;
%Shape: Circle [id:dp6922112305454933]
\draw  [color={rgb, 255:red, 0; green, 0; blue, 0 }  ,draw opacity=1 ][fill={rgb, 255:red, 0; green, 0; blue, 0 }  ,fill opacity=1 ][line width=0.75]  (325,69) .. controls (325,66.79) and (326.79,65) .. (329,65) .. controls (331.21,65) and (333,66.79) .. (333,69) .. controls (333,71.21) and (331.21,73) .. (329,73) .. controls (326.79,73) and (325,71.21) .. (325,69) -- cycle ;
%Shape: Circle [id:dp7628183037453462]
\draw  [color={rgb, 255:red, 0; green, 0; blue, 0 }  ,draw opacity=1 ][fill={rgb, 255:red, 0; green, 0; blue, 0 }  ,fill opacity=1 ][line width=0.75]  (394,52) .. controls (394,49.79) and (395.79,48) .. (398,48) .. controls (400.21,48) and (402,49.79) .. (402,52) .. controls (402,54.21) and (400.21,56) .. (398,56) .. controls (395.79,56) and (394,54.21) .. (394,52) -- cycle ;
%Shape: Circle [id:dp9212874083723274]
\draw  [color={rgb, 255:red, 0; green, 0; blue, 0 }  ,draw opacity=1 ][fill={rgb, 255:red, 0; green, 0; blue, 0 }  ,fill opacity=1 ][line width=0.75]  (396,116) .. controls (396,113.79) and (397.79,112) .. (400,112) .. controls (402.21,112) and (404,113.79) .. (404,116) .. controls (404,118.21) and (402.21,120) .. (400,120) .. controls (397.79,120) and (396,118.21) .. (396,116) -- cycle ;
%Shape: Circle [id:dp6198630156600673]
\draw  [color={rgb, 255:red, 0; green, 0; blue, 0 }  ,draw opacity=1 ][fill={rgb, 255:red, 0; green, 0; blue, 0 }  ,fill opacity=1 ][line width=0.75]  (495,64) .. controls (495,61.79) and (496.79,60) .. (499,60) .. controls (501.21,60) and (503,61.79) .. (503,64) .. controls (503,66.21) and (501.21,68) .. (499,68) .. controls (496.79,68) and (495,66.21) .. (495,64) -- cycle ;
%Shape: Circle [id:dp5288507961227535]
\draw  [color={rgb, 255:red, 0; green, 0; blue, 0 }  ,draw opacity=1 ][fill={rgb, 255:red, 0; green, 0; blue, 0 }  ,fill opacity=1 ][line width=0.75]  (482,125) .. controls (482,122.79) and (483.79,121) .. (486,121) .. controls (488.21,121) and (490,122.79) .. (490,125) .. controls (490,127.21) and (488.21,129) .. (486,129) .. controls (483.79,129) and (482,127.21) .. (482,125) -- cycle ;
%Straight Lines [id:da043043682976964925]
\draw    (373,81) -- (400,116) ;
%Straight Lines [id:da7549189449893412]
\draw    (453,96) -- (486,125) ;
%Straight Lines [id:da9525860925819913]
\draw    (400,116) -- (453,96) ;
%Straight Lines [id:da858515662707042]
\draw    (398,52) -- (373,81) ;
%Straight Lines [id:da04005007941147265]
\draw    (398,52) -- (446,59) ;
%Straight Lines [id:da9030627844256579]
\draw    (446,59) -- (499,64) ;
%Straight Lines [id:da29969928652620437]
\draw [fill={rgb, 255:red, 255; green, 255; blue, 255 }  ,fill opacity=1 ]   (499,64) -- (453,96) ;
%Shape: Circle [id:dp47580745031053784]
\draw  [color={rgb, 255:red, 0; green, 0; blue, 0 }  ,draw opacity=1 ][fill={rgb, 255:red, 255; green, 255; blue, 255 }  ,fill opacity=1 ][line width=0.75]  (369,81) .. controls (369,78.79) and (370.79,77) .. (373,77) .. controls (375.21,77) and (377,78.79) .. (377,81) .. controls (377,83.21) and (375.21,85) .. (373,85) .. controls (370.79,85) and (369,83.21) .. (369,81) -- cycle ;
%Shape: Circle [id:dp14173857349338004]
\draw  [color={rgb, 255:red, 0; green, 0; blue, 0 }  ,draw opacity=1 ][fill={rgb, 255:red, 255; green, 255; blue, 255 }  ,fill opacity=1 ][line width=0.75]  (442,59) .. controls (442,56.79) and (443.79,55) .. (446,55) .. controls (448.21,55) and (450,56.79) .. (450,59) .. controls (450,61.21) and (448.21,63) .. (446,63) .. controls (443.79,63) and (442,61.21) .. (442,59) -- cycle ;
%Shape: Circle [id:dp01843734178813694]
\draw  [color={rgb, 255:red, 0; green, 0; blue, 0 }  ,draw opacity=1 ][fill={rgb, 255:red, 255; green, 255; blue, 255 }  ,fill opacity=1 ][line width=0.75]  (449,96) .. controls (449,93.79) and (450.79,92) .. (453,92) .. controls (455.21,92) and (457,93.79) .. (457,96) .. controls (457,98.21) and (455.21,100) .. (453,100) .. controls (450.79,100) and (449,98.21) .. (449,96) -- cycle ;

\end{tikzpicture}
    \caption{A hypergraph (left) and its factor graph (right). In the factor graph image, the variable vertices are shown in solid black, while the factor vertices are shown as empty circles.}
    \label{fig:Factor Graph}
\end{figure}

% -------------------------------------
% SUBMODULAR FUNCTIONS
% -------------------------------------
\paragraph{Submodular Functions}
A set function $F: 2^V \rightarrow \R$ is submodular if for all $A, B \subseteq V$, we have:
\begin{equation*}
F(A \cup B) \leq F(A) + F(B) - F(A \cap B).
\end{equation*}
From the above, one can easily see that the set of submodular functions is closed under taking conic combinations.  Following standard conventions, we will always assume that $F(\varnothing) = 0$ unless specified otherwise.

% BASE POLYTOPE
Any submodular set function $F$ can be associated with a convex body $\cB(F)$, known as its base polytope:
\begin{equation}
\label{eq.base-polytope-definition}
\cB(F) \defeq \big\{
  \vx \in \R^V \, : \,
  \langle \vx, \ones \rangle = F(V) \, \wedge \,  \forall S \subseteq V,\,
  \langle \vx, \ones_S \rangle \leq F(S)
\big\}.
\end{equation}
%
% POSITIVE SUBMODULAR POLYHEDRON
When $F$ is monotone non-decreasing and $F(\emptyset) = 0,$ it is convenient to employ the positive submodular polyhedron:
\begin{equation*}
\cP_+(F) \defeq \big\{
  \vx \in \R^V_{\geq 0} \, : \, \forall S \subseteq V,\,
  \langle \vx, \ones_S \rangle \leq F(S) \big\}.
\end{equation*}
Crucially, a submodular set function $F$ can be extended to its convex closure, known as its Lov\'{a}sz extension and denoted by $\bar{F} : \R^V \rightarrow \R.$
There are a number of equivalent characterizations of the Lov\'{a}sz extension and its connection to the base polytope $\cB$ and the positive polyhedron $\cP_+$. Rather than listing them here, we will refer directly to the results in Chapter 3 and 4 of the excellent reference by Bach~\cite{bach2013learning}. We make an exception for the following fact, which plays a crucial role in the analysis of our approximation algorithms based on metric embeddings. See Proposition 4.8 in~\cite{bach2013learning} for a proof.
\begin{fact}\label{fct.lovaszmonotone}
Let $F: 2^V \to \R_{\geq 0}$ be a monotone non-decreasing submodular function with $F(\emptyset) = 0$. Its Lov\'asz extension $\bar{F}$ is monotone, non-decreasing over $\R^V.$
\end{fact}

% -------------------------------------
% ONLINE LINEAR OPTIMIZATION
% -------------------------------------
\paragraph{Online Linear Optimization and Regret Minimization}
Fix a convex set of \emph{actions} $\cX \subseteq \cS^{n \times n}$. The setting of \emph{online linear optimization} over symmetric matrices considers the following multi-round game: in each round $t$, a player plays an action $\mX_t \in \cX$, receives a feedback matrix $\mF_t \in \cS^{n \times n}$, and suffers a loss of $\langle \mF_t, \mX_t \rangle$. The goal of the player is to play actions which minimize total loss incurred. In general, one cannot hope for a strategy that is competitive against an arbitrary, adaptive adversary. Thus, one instead seeks to \emph{minimize regret}: the difference, up to a time horizon $T > 0$, between total loss and the loss suffered by the best fixed action in hindsight.

We will require a regret minimization strategy to produce a sequence of $\mX_t$ for this setting. Fortunately, many prior results~\cite{arora2007combinatorial, allen2015spectral, hazan2012near, kale2007efficient, LorenzosThesis, orecchia2012balanced_separator} have studied this, and a Matrix Multiplicative Weight Update algorithm works off-the-shelf.

\begin{theorem}[Theorem 3.3.3 in~\cite{LorenzosThesis}]
\label{thm.regret-bound}
Fix $T > 0$, and a weight vector $\vmu \in \Z_{\geq 0}^n$. Let $\mF_1, \ldots, \mF_T$ be a sequence of feedback matrices such that there exists $\rho > 0$ where $\mF_{t} \succeq - \frac{1}{\rho} \cdot \mM$ for every $t > 0$. For any step size $0 < \eta < \rho$, define $\mX_{t}$ to be given by the matrix multiplicative weight update
\begin{equation*}
\stepcounter{equation}
\tag{\texttt{matrix-mwu}}
\label{eq.mmwu}
\begin{aligned}
\mW_{t+1} &= \mmwu_{\eta, \vmu}( \mF_1, \ldots, \mF_t ) \\
&\defeq \mM^{-1/2} \exp \bigg( - \eta \cdot \sum_{k=1}^{t} \mM^{-1/2} \mF_k \mM^{-1/2} \bigg) \mM^{-1/2} \\
\mX_{t+1} &= \frac{\mW_{t+1}}{\langle \mM, \mW_{t+1} \rangle}
\end{aligned}
\end{equation*}
Then, the following inequalities hold for any $\mU \in \Delta^{n \times n}_{\vmu}$. We have:
\begin{equation}
\label{eq.regret-bound.standard}
\frac{1}{T} \sum_{t=1}^T \langle \mF_t, \mU \rangle
\geq \frac{1}{T} \sum_{t=1}^T \langle \mF_t, \mX_t \rangle - \frac{\eta}{T} \sum_{t=1}^T \, \big\lVert \mM^{-1/2} \mF_t \mM^{-1/2} \big\rVert_{\spectral}^2 - \frac{\log n}{\eta T} \, .
\end{equation}
Furthermore, if $\mF_{t} \succeq - \frac{1}{\rho} \cdot \mM$ for every $\rho > 0$ (i.e. $\mF_t \succeq \mZero$), and there exist $\epsilon > 0$ such that $\mF_t \preceq \epsilon \cdot \mM$, then:
\begin{equation}
\label{eq.regret-bound.local-norm}
\frac{1}{T} \sum_{t=1}^T \big\langle \mF_t, \mU \big\rangle
\geq \frac{1 - \epsilon \eta}{T} \cdot \sum_{t=1}^T \big\langle \mF_t, \mX_t \big\rangle - \frac{\log n}{\eta T} \, .
\end{equation}
\end{theorem}

\noindent
We remark that equation~\eqref{eq.regret-bound.standard} is the standard regret bound for the Matrix Multiplicative Weight Update algorithm, while~\eqref{eq.regret-bound.local-norm} is the regret bound proven using local norm convergence.

% -------------------------------------
% MONOTONE SUBMODULAR CUT FUNCTIONS
% -------------------------------------
\section{Polymatroidal Cut Functions and Hypergraph Flows}
\label{sec.polymatroidal}

In this section, we establish the main properties of the polymatroidal cut functions in Definition~\ref{def.monotone-submodular-cut-function}. In the second part of the section, we introduce the dual notion of hypergraph flows as natural lower bounds to polymatroidal cut functions and highlight their connection with previously studied polymatroidal flows~\cite{chekuri2012multicommodity}. Finally, we define the notion of flow embedding of a graph into a hypergraph.

\subsection{Properties and Examples}\label{sec.properties.and.examples}

We start by verifying that any polymatroidal cut function is submodular. A proof of the following fact can be found in Appendix \ref{sec.appendix.omitted}:

\begin{restatable}{fact}{submodularsymmetrization}
\label{fact.submodular-symmetrization}
Let $V$ be an arbitrary finite set and $F,G:2^V \to \R$ be monotone non-decreasing submodular functions. If $\delta:2^V \to \R$ is the function defined by:
\[
    \delta(S) \defeq \min\{F(S),G(V \setminus S)\},
\]
for all $S \subseteq V$, then $\delta$ is submodular.
\end{restatable}

\paragraph{Examples of Polymatroidal Cut Functions} We now show that the class of polymatroidal cut functions encompasses many popular cut objectives. Our first example is the directed edge cut function in graphs. This function is defined on a directed edge $(i,j)$ as:
\[
    \delta_{(i,j)}(S) = \begin{cases}
        1 &\text{if } i\in S\text{ and }j
        \in \overline{S},\\
        0 &\text{otherwise}.
    \end{cases}
\]

It is easy to check that this function is polymatroidal, since it can be rewritten as:
\[
    \delta_{(i,j)}(S) = \min\{ |S \cap \{i\}|, |(h\setminus S) \cap \{j\}| \}
\]
for all $S \subseteq h$.
Next, we consider the standard hyperedge cut function. Given a hyperedge $h \in E$, this function is defined as:
\[
    \delta^\cut_h(S) \defeq \begin{cases}0 &\text{if }S\cap h \in \{\emptyset, h\},\\
    1 &\text{otherwise}.\end{cases}
\]

Note that the undirected cut function in graphs is a special case of $\delta^\cut$. This function is symmetric polymatroidal, as we can let $F_h (S) = \min\{1, |h\cap S|\} = G_h(S)$ for any $h\in E$ to obtain:
\[
    \delta_h^{\cut}(S) = \min\{1, |h\cap S|, |h\cap \overline{S}|\},
\]
which is $1$ if and only if $S$ intersects $h$ non-trivially, and zero otherwise.

Moreover, the class of directed hypergraph cut functions recently introduced by Lau et al. (Section 6.2 of~\cite{lau2023fast}) to define directed hypergraph expansion is also a sub-class of polymatroidal cut functions, as we now show. Given a hyperedge $h\in E$ and a partition of $h$ into a tail set $T_h$ and a headset $H_h$, we can capture the directed cut function of Lau et al., by letting $F_h(S)= \min \{1,|T_h\cap S|\}$ and $G_h(\overline{S}) = \min \{1,|H_h\cap \overline{S}|\}$.

Finally, Veldt \etal \cite{veldt2021approximate} consider a class of cut functions which are cardinality-based and submodular. In Appendix B.2 of their paper, they argue that every such cut function $\delta_h$ can be written as:
\[
    \delta_h(S) = g_h(\min\{|S \cap h|, |h\cap \overline{S}|\}),
\]
for some concave and monotonically increasing function $g_h$. Since $g_h$ is monotonically increasing, the above gives:
\[
    \delta_h(S) = \min\{g_h(|S \cap h|), g_h(|h\cap \overline{S}|)\},
\]
and hence all functions in this class are also examples of polymatroidal cut functions (note that the composition of a monotone, concave function with the cardinality function is always monotone and submodular).

\paragraph{Lov\'asz Extensions of Polymatroidal Cut Functions}
We next establish the form of the Lov\'{a}sz extension of a polymatroidal cut function. We will make heavy use of this form in building our metric relaxations and in rounding them via low-distortion metric embeddings in Section~\ref{sec.sdp-algorithm}.
The proof of the following fact is found in Appendix~\ref{sec.appendix.omitted}:

\begin{restatable}{fact}{formoflovaszextension}
\label{fact.form-of-lovasz-extension}
    Given a polymatroidal cut function $\delta_h$, we have, for all $\vx \in \R^h$:
    \[
        \overline{\delta}_h(\vx) = \min_{\nu \in \R}\bar{F}^-_h((\vx - \nu \vone_h)_+ ) + \bar{F}^+_h((\vx - \nu \vone_h)_-).
    \]
\end{restatable}

It is instructive to consider how the previous results are instantiated for the two canonical examples of the directed edge cut function and the standard hypergraph cut function.
For the former on an arc $(i,j)$, we have $\bar{F}^{-}(\vx) = \vx_i$ and $\bar{F}^{+}(\vx) = \vx_j,$ so that:
$$
\forall \vx \in \R^{\{i,j\}} ,\; \overline{\delta}_{(i,j)} (\vx) = \min_{\nu \in \R} (\vx(i)-\nu)_{+} + (\vx(j)-\nu)_{-} = (\vx(i) - \vx(j))_+ .
$$
For the latter, on a hyperedge $h$:
$$
\forall \vx \in \R^h ,\;
\overline{\delta}^{\cut}_h(\vx )
= \min_{\nu \in \R} \|(\vx - \nu \vone)_{+}\|_\infty + \|(\vx - \nu \vone\|)_{-}\|_\infty = .
$$

In the case of a symmetric polymatroidal cut function, the characterization of Fact~\ref{fact.form-of-lovasz-extension} can take a more compact form, as shown in Appendix~\ref{sec.appendix.omitted}:
\begin{restatable}{fact}{formoflovaszextensionsym}
\label{fact.lovasz-extension-symmetric}
    Given a symmetric polymatroidal cut function $\delta_h$ with $F_h \defeq F^+_h = F^-_h$, we have, for all $\vx \in \R^h$:
    \[
        \min_{\nu \in \R} \bar{F}_h(|\vx - \nu \vone_h|) \leq \overline{\delta}_h(\vx) \leq 2 \cdot   \min_{\nu \in \R} \bar{F}_h(|\vx - \nu \vone_h|).
    \]
\end{restatable}

\subsection{Base Polytope, Hypergraph Flows and Polymatroidal Networks}
\label{sec.hypergraph-flows}

It is natural to attempt to lower bound the cost of cuts in a submodular hypergraph $G=(V,E_G, \vw, \vmu)$ by considering collections of vectors $\vY = (\vy_h \in \cB(\delta_h))_{h \in E_G}$, where each vector $\vy_h$ belongs to the base polytope of the polymatroidal cut function $\delta_h$ under consideration. In this section, we interpret these vectors as hypergraph flows over $G$ and show that they take the form of polymatroidal flows~\cite{lawler1982computing,chekuri2012multicommodity} over the factor graph $\hat{G}$.

\paragraph{Base Polytope of Polymatroidal Cut Function} We start by characterizing the base polytope $\cB(\delta_h)$ for a polymatroidal cut function $\delta_h.$ By the definition of base polytope in Equation~\ref{eq.base-polytope}, we have
$$
\cB(\delta_h) = \{ \vy \in \R^h : \langle \vy , \vone_h \rangle = 0 \, \wedge \, \forall S \subset h,\,
\langle \vy, \vone_S \rangle \leq \min\{F_h^{-}(S), F_h^+(h\setminus S)\}\}.
$$
The first constraint implies that for all $S \subset h,$ we must have $\langle \vy , \vone_S \rangle = - \langle \vy, \vone_{h\setminus S} \rangle.$ Hence, we can rewrite:
\begin{align}
\label{eq.base-polytope}
\cB(\delta_h) &= \{ \vy \in \R^h \perp \vone_h :  \forall S \subset h,\;
\langle \vy, \vone_S \rangle \leq F_h^{-}(S) \, \wedge \, - \langle \vy, \vone_{h \setminus S} \rangle \leq F_h^{+}(h \setminus S) \} \nonumber\\
              &= \{ \vy \in \R^h \perp \vone_h :  \forall S \subset h,\; - F_h^{+}(S) \leq \langle \vy, \vone_S \rangle \leq F_h^{-}(S)\}\\
             &= \{ \vy \in \R^h \perp \vone_h : (\vy)_+ \in \cP_+(F^{-}_h) \, \wedge \, (\vy)_- \in \cP_+(F^{+}_h)\}.\nonumber
\end{align}
We will interpret an element $\vy$ of $\cB(\delta_h)$ as a flow over the neighborhood of factor vertex $h$ in the factor graph $\hat{G}$, with $\vy_i > 0$ indicating flow going from $i$ to $h$ and $\vy_i < 0$ flow going from $h$ to $i$. The constraint $\vy \perp \vone_h$ simply represents flow conservation at $h.$ With this setup, for each set $S \subset h,$ the function $F_h^{-}$ (resp. $F_h^{+}$) constrains the amount of flow that can be routed from the set $S$ to $h$ (resp. to the set $S$ from $h$).

\paragraph{Example: Base Polytope for Standard Hypergraph Cut Function} Consider the case of the standard hypergraph cut function $\delta^{\cut}_h(S)$ defined in the previous section. Then, it is easy to see that:
\[
    \cB(\delta^{\cut}_h) = \{\vy \in \R^h\perp \vone_h \mid \norm{\vy}_1 \leq {2}\}.
\]
As discussed above, we can think of each entry $\vy(i)$ of $\vy$ for $i\in h$, as the amount of flow flowing into $h$ from vertex $i$, and the constraint $\norm{\vy}_1 \leq 2$ enforces that the total amount of flow through the hyperedge $h$ is no more than $1$. This is equivalent to a vertex capacity constraint on the factor vertex corresponding to the hyperedge $h$ in the factor graph $\hat{G}$.

\paragraph{Example: Base Polytope for Directed Hypergraph Cut Functions} Consider the class of directed hypergraph cut functions defined by Lau \etal in Section 6.2 of \cite{lau2023fast}. Each of these cut functions $\delta_h$ has an associated head set $H_h \subseteq h$ and an associated tail set $T_h \subseteq h$ and $\delta_h(S)$ has value one if and only if the head set intersects $H_h\cap S\neq \emptyset$ and $\overline{S}\cap \emptyset \neq \emptyset$. Their base polytope is then given by:
\[
    \cB(\delta_h) = \{\vy \in \R^h \perp \vone_h \mid \vy(H_h)\leq 1 \land \forall i\in T_h \setminus H_h : \vy(i)\leq 0\}.
\]

\noindent
The notion of hypergraph flow, which we introduce next, formalizes the network flow interpretation for all hyperedges $h \in E_G.$

\paragraph{Hypergraph Flows}
Given a weighted submodular hypergraph $G=(V,E, \vw \in \mathbb{Z}^E_{>0},  \vmu \in \mathbb{Z}^V_{\geq 0})$, a {\it hyperedge flow} over a hyperedge $h \in E$ is a vector $\vy_h \in \R^h \perp \vone_h.$ A {\it hypergraph flow} $\vY \in \bigoplus_{h\in E} \R^h \perp \vone_h$ is a direct sum of hyperedge flows $\vY = (\vy_h)_{h \in E}.$
Analogous to the case for graph flows, we define a notion of congestion based on the maximum violation of the polymatroidal constraints defining each $\cB(\delta_h)$:
\begin{equation}
\label{eq.congestion}
\cong_G(\vY) = \min \{ \rho \geq 0 : \forall h \in E_G,\, \vy_h \in \rho \cdot \vw_h \cdot \cB(\delta_h)\}.
\end{equation}
The demand vector $\dem(\vY) \in \R^V \perp \vone$ of a hypergraph flow $\vY = (\vy_h)_{h \in E}$ is
\begin{equation}
\label{eq.demand}
\dem_i(\vY) = \sum_{ h \in E: h \ni i} \vy_h(i).
\end{equation}
We will make use of the vector space structure inherited from $\vY \in \bigoplus \R^h \perp \vone_h$ to define linear combinations of hypergraph flows $c\vY_1 + \vY_2$ for any $c \in \R.$ The demand vector also behaves linearly as $\dem(c\vY_1+\vY_2) = c \cdot \dem(\vY_1) + \dem(\vY_2).$

Finally, notice that hypergraph flows over hyperedges of rank $2$ are equivalent to graph flows as, for an edge $\{i,j\} \in E$, $\vy_{ij}(i)$ represents the flow from $i$ to $j$, while $\vy_{ij}(j) = -\vy_{ij}(j)$ is the flow from $j$ to $i$.

\paragraph{Connection to Polymatroidal Flows}  In the polymatroidal network flow model \cite{chekuri2012multicommodity, lawler1982computing}, one has a directed (2-uniform) graph $H$ representing a flow network where the edges are not individually capacitated, but rather for every vertex $v\in V(H)$, there are monotone submodular functions $\rho^-$ and $\rho^+$ constraining the amount of flow entering and exiting a vertex respectively.

Given an hypergraph flow $\vY=\{\vy_h\}_{h\in E}$ over $G$, one can interpret it as a graph flow over the factor graph $\hat{G}$, where the flow from vertex $i$ to factor vertex $h$ over the edge $\{i,h\}$ is given by $\vy_h(i)$. When the congestion (as defined in \eqref{eq.congestion}) is required to be at most one, the constraints imposed on this flow on $\hat{G}$ can be cast as constraints defining a flow in a polymatroidal network supported on $\hat{G}$. Here, each constraint on the hyperedge flow for hyperedge $h \in E(G)$ defined by its base polytope, can be encoded as polymatroidal constraints on the graph flow into and out of the vertex $h$ in $\hat{G}$.

Despite this connection with polymatroidal network, we remark that, to the best of our knowledge, the hypergraph partitioning problems studied in this paper cannot be expressed in the setup proposed by Chekuri~\cite{chekuri2012multicommodity} for defining sparsest cut problems over polymatroidal networks.

\subsection{Hypergraph Flow Embeddings}\label{sec.hypergraph.flow.embeddings}

To construct efficient flow-based algorithms for the minimum ratio-cut problem in the second part of this paper, we will need a way to generalize the notion of flow embedding of a demand graph into a graph to the setting of hypergraphs. In this section, we leverage the definition of hypergraph flows to provide this generalization. In particular, we  define the flow embedding of a (directed) graph into a weighted submodular hypergraphs with polymatroidal cut functions. We also show that these embeddings yield lower bounds on the hypergraph cut functions in terms of the graph cuts of the embedded graph. Analogously to graphs, we also show that any hypergraph flow can be converted into a flow embedding via flow-path decompositions.

\begin{definition}[Hypergraph Flow Embedding]
\label{def.flow-embedding}
Let $G = \big( V, E_G, \vw^G, \vmu \big)$ be a weighted submodular hypergraph  equipped with a collection of polymatroidal cut functions $\{ \delta_h \}_{h \in E_G}$ and let  $H = \big( V, E_H, \vw^H \big)$ be a weighted directed graph on the same vertex set.  We say that the graph $H$  embeds into $G$ as a flow with congestion $\rho \geq  0$, denoted $H \preceq_\rho G$ if there exist hypergraph flows $\{\vY^e \}_{e \in E_H}$ such that:
\begin{enumerate}
\item For all $e=(i,j) \in E_H$, the flow $\vY^e$ routes $w^H_{e}$ units of flow from vertex $i$ to vertex $j$, i.e.,
$$
\dem(\vY^e)= w^H_{e} \cdot (\vone_i - \vone_j) \, .
$$
\item For each $h \in E_G$, the flows ${\vY^e}_{e \in E_H}$ respect the polymatroidal capacity constraints, i.e., for all $h \in E_G$:
\begin{align*}
\sum_{e \in E_h} (\vy_h^e)_+ \in \rho \cdot w_h \cdot \cP_+(F_h^-),\\
\sum_{e \in E_h} (\vy_h^e)_- \in \rho \cdot w_h \cdot \cP_+(F_h^+).
\end{align*}
\end{enumerate}
\end{definition}
The capacity constraints can be understood by analogy with the capacity constraints for multi-commodity flows over a graph. As in that case, the constraints in the definition ensure that flows from different commodity along the same arc do not cancel out.

We will use this notion of embeddability  to certify lower bounds to the ratio-cut objective for general polymatroidal cut functions. We will crucially make use of the following theorem in our development of the cut-matching game for solving ratio-cut problems over weighted submodular hypergraphs. Its proof is in Section~\ref{sec.appendix.omitted}.

\begin{restatable}{theorem}{flowembedding}
\label{thm.flow-embedding}
Let $G = \big( V, E_G, \vw^G, \vmu \big)$ be a weighted hypergraph equipped with a collection of polymatroidal cut functions $\{ \delta_h \}_{h \in E_G}$, and $H = \big( V, E_H, \vw^H \big)$ be a weighted directed graph. If $H \preceq_{\rho} G$, then
\begin{equation*}
\forall \vx \in \R^V\,,  \sum_{( i, j) \in E_H} w^H_{ij} \cdot (\vx_i - \vx_j)_{+}
\leq \rho \cdot \sum_{h \in E_G} w^G_h \cdot \overline{\delta}_h(\vx) \; \textrm{ and }\; \forall S \subseteq V\,,\, \delta_H(S) \leq \rho \cdot \delta_G(S).
\end{equation*}
For symmetric polymatroidal cut functions $\{ \delta_h \}_{h \in E_G}$ , then $H \preceq_{\rho} G$ implies the following bound on the undirected symmetrization $\tilde{H} = (V,E_{\tilde{H}}, \vw^{\tilde{H}})$:
$$
\forall \vx \in \R^V\,,  \sum_{\{i, j\} \in E_{\tilde{H}}} w^{\tilde{H}}_{ij} \cdot \lvert \vx_i - \vx_j \rvert
\leq 2 \cdot \rho \cdot \sum_{h \in E_G} w^G_h \cdot \overline{\delta}_h(\vx) \; \textrm{ and }\; \forall S \subseteq V\,,\, \delta_{\tilde{H}(S)} \leq 2 \cdot \rho \cdot \delta_G(S).
$$
\end{restatable}

We now state the main algorithmic result of this section, which allows us to construct hypergraph flow embedding from any single hypergraph flow. Just like in the case of graphs, this result is based on constructing a flow-path decomposition of the original flow. The particular form of polymatroidal cut functions plays monotonicity of the functions $F^+$ and $F^-$ plays a crucial role in bounding the congestion of the embedding, which cannot be controlled in the same way for general submodular cut functions. The proof appears in Appendix~\ref{sec.appendix.omitted}.

\begin{restatable}{theorem}{flowdecomposition}
\label{theorem.hypergraph-flow-decomposition}
Let $G = \big( V, E_G, \vw^G, \vmu \big)$ be a weighted hypergraph equipped with a collection of polymatroidal cut functions $\{ \delta_h \}_{h \in E_G}$. Given a hypergraph flow $\vY$ over $G$, there is an algorithm that, in time $\tilde{O}\big( \sparsity(G) \big)$, computes a weighted directed bipartite graph $H = \big( V, E_h, \vw^H \big)$ such that $H \preceq_{\cong_G(\vY)} G$ and $|E_H| = \tilde{O}\big( \sparsity(G) \big)$. Furthermore,
all vertices $i \in V$ with non-negative (resp. negative) demand $\dem_i(\vY)$ are source nodes (resp. sink nodes) in $H$, each with out-degree (resp. in-degree) equal $\dem_i(\vY).$
\end{restatable}

% -------------------------------------
% ARV-TYPE ALGORITHM
% -------------------------------------
\section{An $O(\sqrt{\log n})$-Approximation via $\ell_2^2$-Metric Embeddings}
\label{sec.sdp-algorithm}

In this section, we prove Theorem~\ref{thm.main-metric-approx} by giving an $\ell_2^2$-metric relaxation and rounding argument yielding a randomized polynomial-time $O(\sqrt{\log n})$-approximation algorithm for the minimum ratio-cut problem over submodular hypergraphs equipped with polymatroidal cut functions. As a warm-up, we give the simpler argument for symmetric polymatroidal cut functions, before proving the result for general polymatroidal cut functions in Section~\ref{sec.general}

\subsection{Warm-up: Symmetric Polymatroidal Cut Functions}

We begin by providing a relaxation for the symmetric version of the problem, in which $F_h^+ = F_h^-$ for every $h \in E$. All the results in this subsection will focus exclusively on this special case, but they will be extended to the general case of all polymatroidal cut functions in the following subsection.

Given a weighted submodular hypergraph $G = (V, E, \vw, \vmu)$, we construct a vector-program relaxation of the continuous non-convex formulation in Equation~\eqref{eq.rc-noncvx} by associating to each vertex $i \in V \cup E$ of the factor graph $\hat{G}$
a vector $\vv_i$. For a hyperedge $h \in E$ and $i \in h,$ we can then relax the terms $|\vx_i - \nu_h\vone_h|$, which make up the argument of $\bar{F}_h$ in  Fact~\ref{fact.lovasz-extension-symmetric}, with the distance $\vd^h_i \defeq \|\vv_i - \vv_h\|^2.$ We obtain the following semidefinite program in its vector embedding form:
\begin{equation*}
\stepcounter{equation}
\tag{\texttt{RC-Metric}}
\label{eqn.vector-program}
\begin{aligned}
& \underset{\vv_i}{\min}
& & \sum_{h \in E} w_h \cdot \bar{F}_h ( \vd^h) &\\
& \textup{s.t.}
&& \hspace{-2mm}\sum_{\{i,j\} \subseteq V} {\mu(i) \mu(j) \over \mu(V)} \cdot \|\vv_i - \vv_j\|_2^2 \geq 1 &\\
& & & \|\vv_i -\vv_j \|^2_2 \leq \|\vv_i - \vv_k \|_2^2 + \|\vv_k - \vv_j\|_2^2
& \forall \, i, j, k \in V \cup E \\
& & &\vd^h_i =   \|\vv_i - \vv_h\|^2 & \forall h \in E, i \in h \\
& & & \vv_i \in \R^n, \vd^h \in \R^h_{\geq 0}
& \hspace{1cm} \forall \, i \in V \cup E, \, \forall \, h \in E
\end{aligned}
\end{equation*}

The following simple lemma is proved in Appendix \ref{sec.appendix.omitted}.

\begin{restatable}{lemma}{symrelaxation}
\label{lem.relaxation}
The \eqref{eqn.vector-program} vector program is, up to a constant, a relaxation of the minimum ratio-cut program for the hypergraph $G=(V,E,\vw,\vmu)$ with symmetric polymatroidal cut functions $\{\delta_h\}_{h \in E}.$
\end{restatable}

\paragraph{Rounding via Metric Embedding} To round the convex program \eqref{eqn.vector-program}, we apply the following well-known embedding result, which is implicit in the seminal work of Arora, Rao and Vazirani~\cite{ARV2009}.

\begin{theorem}[\cite{ARV2009}]
\label{thm:embedding}
Let $\{\vv_i \in \R^d \}_{i \in V \cup E}$ be a feasible solution to the $\ell_2^2$-metric relaxation \eqref{eqn.vector-program}. Then, there exists an efficiently computable $1$-Lipschitz\footnote{Recall that, given a metric space $(X,d)$, a function $f:X \to \R $ is said to be $K$-Lipschitz for some constant $K$ if $|f(x) - f(y)| \leq K\cdot d(x,y)$ for every $x,y \in X$.} map $\phi:\R^d \to \R$ satisfying:
$$
  \frac{\sum_{i, j \in V} \mu(i) \mu(j) \cdot |\phi(\vv_i)- \phi(\vv_j)|}
  {\sum_{i, j \in V} \mu(i) \mu(j) \cdot \|\vv_i - \vv_j\|^2}
  \geq \Omega \left(\frac{1}{\sqrt{\log |V|}}\right).
$$
\end{theorem}

To complete the proof of Theorem~\ref{thm.main-metric-approx} for the special case of symmetric polymatroidal cut functions, we make use of the embedding in Theorem~\ref{thm:embedding}. The proof crucially relies on the monotonicity of the Lov\'asz extension $\bar{F}$ (Fact~\ref{fct.lovaszmonotone}). The ability to perform this step can be taken as a justification for our definition of polymatroidal cut functions.

\begin{lemma}\label{lem.sqrt-logn-rounding-symmetric}
There is a polynomial-time algorithm that given a solution $\{\vv_i\}_{i \in V \cup E}$ to \eqref{eqn.vector-program} of objective value $\kappa :=  \sum_{h\in E} w_h \cdot \bar{F}_h(\vd^h)$, returns a cut $S \subseteq V$ such that:
\[
    \Psi_G(S) \leq O \big( \kappa \cdot \sqrt{\log |V|} \big) \, .
\]
\end{lemma}
\begin{proof}
Apply Theorem \ref{thm:embedding} to obtain $\vx \in \R^{V}$ and $\veta \in R^E$ with  $\vx(i) \defeq \phi(\vv_i)$ for all $i \in V$ and $\boldsymbol{\eta}(h) \defeq \phi(\vv_h) $. The $1$-Lipschitzness of $\phi$ guarantees, for every $h \in E$ and $i \in h$:
\begin{equation*}
  \lvert \vx(i) - \boldsymbol{\eta}(h) \rvert =
  \lvert \phi(\vv_i) - \phi(\vv_h) \rvert \leq \|\vv_i - \vv_h\|^2 = \vd^h_i.
\end{equation*}
By the monotonicity of $\bar{F}_h$ in Fact~\ref{fct.lovaszmonotone}, we have:
\begin{equation}\label{eq:rounding-numerator}
  \sum_{h \in E} w_h \cdot \min_{\nu_h \in \R} \bar{F}_h(|\vx_h - \nu_h \vone_h|)
  \leq  \sum_{h \in E} w_h \cdot \bar{F}_h(|\vx_h - \boldsymbol{\eta}(h) \vone_h|)
  \leq \sum_{h \in E} w_h \cdot \bar{F}_h(\vd^h)
  = \kappa.
\end{equation}
It now suffices to obtain a lower bound on the denominator in the continuous formulation~\eqref{eq.rc-noncvx}:
\begin{equation}\label{eq:rounding-denominator}
  {\min_{\gamma \in \R} \|\vx - \gamma \ones\|_{1, \vmu}} \geq {1\over 2}{\sum_{i,j\in V} {\vmu(i) \vmu(j) \over \vmu(V)}\cdot |\vx_i - \vx_j|}
  \geq  \Omega\left( {1 \over \sqrt{\log |V|}}\right) {\sum_{i,j\in V} {\vmu(i) \vmu(j) \over \vmu(V)} \| \vv_i - \vv_j\|^2_2 }
  \geq \Omega\left( {1 \over \sqrt{\log |V|}}\right).
\end{equation}
Finally, Equations \eqref{eq:rounding-numerator} and \eqref{eq:rounding-denominator}, yield, together with Fact~\ref{fact.lovasz-extension-symmetric}:
\[
   {\sum_{h \in E} w_h  \bar{\delta}_h (\vx) \over \min_{\gamma \in \R} \|\vx - \gamma\ones\|_{1, \vmu}} \leq 2 \cdot
  {\sum_{h \in E} w_h \min_{\nu \in \R} \bar{F}_h(|\vx - \nu 1|) \over \min_{\gamma \in \R} \|\vx - \gamma\ones\|_{1, \vmu}}
  \leq
  O(\kappa \cdot \sqrt{\log n}).
\]
The vector $\vx$ can then be efficiently rounded to a subset $S\subseteq V$ by Lemma~\ref{lem.continuous-to-discrete-equivalence}.
\end{proof}

% -------------------------------------
% DIRECTED METRIC RELAXATION
% -------------------------------------
\subsection{General Polymatroidal Cut Functions}
\label{sec.general}

For the case of general polymatroidal cut function, our metric relaxation must be able to capture the signed terms $(\vx_h - \nu_h \vone_h)_+$ and $(\vx_h - \nu_h \vone_h)_-$ that appear in the Lov\'asz extension in Fact~\ref{fact.form-of-lovasz-extension}. For this purpose, we will use the directed $\ell_2^2$-semimetrics introduced by Charikar, Makarychev and Makarychev~\cite{charikar2006directed} and, for every $h \in E$ and $i \in h$, relax $(\vx_i - \nu_h)_+$ to $\vd^{h,-}_i \defeq \norm{\vv_i - \vv_h}^2 + \norm{\vv_i}_2^2- \norm{\vv_h}_2^2$ and $(\vx_i - \nu_h)_-$ to $\vd^{h,+}_i \defeq \norm{\vv_i - \vv_h}^2 + \norm{\vv_h}_2^2- \norm{\vv_i}_2^2$.
We obtain the following semidefinite program:
\begin{equation*}
\stepcounter{equation}
\tag{\texttt{RC-Directed-Semimetric}}
\label{eqn.gen-vector-program}
\begin{aligned}
& \underset{\vv_i, \vell^h}{\min}
& & \sum_{h \in E} w_h \cdot \left(\bar{F}_h^-(\vd_h^-) + \bar{F}_h^+(\vd_h^+) \right)\\
& \textup{s.t.}
& & \sum_{\{i,j\} \subseteq V} {\mu(i) \mu(j) \over \mu(V)} \cdot \|\vv_i - \vv_j\|_2^2 \geq 1 \\
& & & \|\vv_i -\vv_j \|^2_2 \leq \|\vv_i - \vv_k \|_2^2 + \|\vv_k - \vv_j\|_2^2
& \forall \, i, j, k \in V \cup E \\
& & & \vd^{h,-}_i = \norm{\vv_i - \vv_h}^2 + \norm{\vv_i}_2^2- \norm{\vv_h}_2^2
& \forall h \in E, i \in h\\
& & & \vd^{h,+}_i = \norm{\vv_i - \vv_h}^2 + \norm{\vv_h}_2^2- \norm{\vv_i}_2^2
& \forall h \in E, i \in h\\
& & & \vv_i \in \R^n, \vd^{h,+}, \vd^{h,-} \in \R^h_{\geq 0}
& \forall \, i \in V \cup E, \, \forall \, h \in E
\end{aligned}
\end{equation*}

We have the following lemma, which we prove in  Appendix~\ref{sec.appendix.omitted}.
\begin{restatable}{lemma}{genrelaxation}
    The \eqref{eqn.gen-vector-program} vector program is, up to a constant, a relaxation of the minimum submodular hypergraph ratio-cut problem for the hypergraph $G=(V,E,\vw,\vmu)$ with polymatroidal cut functions $\{\delta_h\}_{h \in E}.$
\end{restatable}

\paragraph{Rounding via Metric Embedding}
To round a solution to the program \eqref{eqn.gen-vector-program} to a solution of the original ratio-cut problem \eqref{eq.rc-noncvx}, we require a version of the embedding result of Theorem~\ref{thm:embedding} for directed semimetrics. The following result is a simple consequence of the rounding algorithm of Agarwal~\etal~\cite{agarwal2005log} for a similar relaxation of directed expansion. We prove it for completeness in Appendix~\ref{sec.appendix.omitted}.

\begin{restatable}{theorem}{directedembedding}
\label{thm:directed-embedding}
    Given a feasible solution $\{\vv_i \in \R^d\}_{i\in V\cup E}$ to \ref{eqn.gen-vector-program} and a measure $\vmu \in \R_{\geq 0}^V$, there exists a polynomial-time computable map $\phi:\R^d \to \R$ such that:
    \begin{enumerate}
        \item $(\phi(\vv)- \phi(\vw))_+ \leq \norm{\vv - \vw}^2 + \norm{\vv}^2 - \norm{\vw}^2$ for all $\vv, \vw \in \R^d$,
        \item and:
        \[
        \frac{\sum_{i,j \in V} \mu(i) \mu(j) |\phi(\vv_i) - \phi(\vv_j)|} {\sum_{i,j \in V} \mu(i) \mu(j) \norm{\vv_i - \vv_j}^2_2 } \geq \Omega \left(\frac{1}{\sqrt{\log |V|}}\right)
        \]
    \end{enumerate}
\end{restatable}

We can now complete the proof of Theorem~\ref{thm.main-metric-approx} for the general polymatroidal case by showing that the map $\phi$ in Theorem~\ref{thm:directed-embedding} yields a $O(\sqrt{\log n})$-approximate solution to the minimum ratio-cut problem.

\begin{lemma}
    Given a solution $(\{\vv_i\}_{i\in V \cup \in E})$ to \ref{eqn.gen-vector-program} of value $\kappa$, we can produce in polynomial time a set $S \subseteq V$ such that:
    $$
    \Psi_G(S) \leq O(\kappa \cdot \sqrt{\log |V|}).
    $$
\end{lemma}
\begin{proof}
Let $\phi$ be the map whose existence is guaranteed by Theorem~\ref{thm:directed-embedding}. Let $\vx\in \R^{V}$ and $\veta \in \R^E$  be the vector given by $\vx_i = \phi(\vv_i)$ for all $i \in V$ and $\eta_h = \phi(\vx_h)$ for all $h \in E.$ By the first condition in Theorem~\ref{thm:directed-embedding}, we have, for all $h \in E$ and $i \in h$:
$$
(\vx_i - \eta_h)_+ \leq \vd^{h,-}_i \, \textrm{ and } \, (\vx_i -\eta_h)_- \leq \vd^{h,+}_i.
$$
We can now exploit the monotonicity of the functions $F^{-}_h$ and $F^{+}_h$ associated with the polymatroidal cut functions $\delta_h$. By Fact~\ref{fact.form-of-lovasz-extension} and Fact~\ref{fct.lovaszmonotone}, we have:
\begin{align*}
    \sum_{h\in E} w_h \overline{\delta}_h(\vx) &=
    \sum_{h\in E} w_h \min_{\nu_h \in \R}\left\{\bar{F}^-_h((\vx_h-\nu_h\vone_h)_+) + \bar{F}^+_h((\vx_h-\nu_h\vone_h)_-)\right\} \\
    & \leq \sum_{h\in E} w_h \left\{\bar{F}^-_h((\vx_h-\eta_h\vone_h)_+) + \bar{F}^+_h((\vx_h-\eta_h \vone_h)_-)\right\}\\
    & \leq \sum_{h\in E} w_h \left(\bar{F}^-_h(\vd^{h,-}) + \bar{F}^+_h(\vd^{h,+})\right) = \kappa.
    \end{align*}
We can now lower bound the denominator in the objective of Equation~\eqref{eq.rc-noncvx} as in Equation~\ref{eq:rounding-denominator}. Finally, we have:
\[
   {\sum_{h \in E} w_h  \bar{\delta}_h (\vx) \over \min_{\gamma \in \R} \|\vx - \gamma\ones\|_{1, \vmu}}
  \leq
  O(\kappa \cdot \sqrt{\log n}).
\]
The vector $\vx$ can then be efficiently rounded to a subset $S\subseteq V$ by Lemma~\ref{lem.continuous-to-discrete-equivalence}.
\end{proof}

% -------------------------------------
% CUT IMPROVEMENT
% -------------------------------------
\section{Localized Convex Formulations and Ratio-Cut Improvement}
\label{sec.ci}

In this section, we consider the localized formulation~\eqref{eq.rc-local} introduced in Section~\ref{sec.results.ci}, which we rewrite as a convex program. We show that the dual solutions to this problem are hypergraph flows, which can be used to obtain hypergraph flow embeddings into the input graph. Finally, we give algorithms that construct approximate primal-dual solutions for the \eqref{eq.rc-local} problem. These algorithms will play an important part in our cut-matching game.

We start by introducing the \emph{ratio-cut improvement} objective, a local partitioning objective parametrized by a seed $\vs \in R^V$, which is the integral analogue of the \eqref{eq.rc-local} problem:
\begin{definition}[Ratio-Cut Improvement Problem]
\label{def.rc-improve}
Given a weighted submodular hypergraph $G = (V, E, \vw, \vmu)$, cut functions $\{ \delta_h \}_{h \in E}$, and a \emph{seed vector} $\vs \in \R^V$ with $\langle \vs, \ones \rangle_\mu = 0$ and $\lVert \vs \rVert_\infty \leq 1,$ the ratio-cut improvement objective $\Psi_{G,s}$ is defined for all $S \subset V$ as:
\begin{align*}
\Psi_{G, \vs}(S)
&
\defeq
\frac{\sum_{h \in E} w_{h} \cdot \delta_h(S\cap h)} {\max\{0,\langle \vs, \ones^S \rangle_{\mu}\}} \\
\Psi^*_{G, \vs}
&\defeq \min_{S \subseteq V} \Psi_{G, \vs}(S)
\end{align*}
The ratio-cut improvement problem asks to find $S \subseteq V$ achieving $\Psi^*_{G, \vs} = \Psi_{G, \vs}(S)$.
\end{definition}

The ratio-cut improvement problem can be thought of as attempting to find a cut $(S,\overline{S})$ which has both a small boundary and high correlation with the input seed $\vs$ simultaneously. In Appendix~\ref{sec.appendix.andersen-lang}, we demonstrate that the problem of minimizing $\Psi_{G,\vs}$ is precisely the generalization of the cut-improvement problem of Andersen and Lang~\cite{Andersen-Lang} to weighted submodular hypergraphs and fractional seeds.

The relation between the minimum ratio-cut problem, the ratio-cut improvement problem and their continuous formulations ~\eqref{eq.rc-noncvx} and ~\eqref{eq.rc-local} is formalized in the following lemma.

\begin{lemma}
\label{lem.rc-improve.relaxation}
For all $\vx \in \R^V$, and seeds $\vs \in \R^V$ satisfying $\lVert \vs \rVert_{\infty} \leq 1$ and $\langle \vs, \ones \rangle_{\vmu} = 0$, we have:
\begin{equation*}
\bar{\Psi}_G(\vx) \leq \bar{\Psi}_{G, \vs}(\vx) \, .
\end{equation*}
In particular, setting $\vx = \ones^S$ yields $\Psi_{G, \vs}^*$ is a relaxation of $\Psi_G^*$ for any $S \subset V$:
\begin{equation*}
\Psi_G(S) \leq \Psi_{G,\vs}(S) \, .
\end{equation*}
\end{lemma}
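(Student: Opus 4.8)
The plan is to prove the inequality $\bar{\Psi}_G(\vx) \leq \bar{\Psi}_{G,\vs}(\vx)$ by showing that the denominators satisfy the reverse inequality, i.e., $\lvert \langle \vs, \vx \rangle_{\vmu} \rvert \leq \bar{\delta}_V(\vx) = \min_u \lVert \vx - u\vone \rVert_{\vmu, 1}$, while the numerators $\bar{\delta}_G(\vx)$ are identical in both expressions. Once this denominator comparison is established, dividing the common numerator by the smaller quantity gives the larger ratio, yielding the claim (one must be slightly careful about the case $\langle \vs, \vx\rangle_\vmu = 0$, in which the right-hand side is $+\infty$ and the inequality is trivial).

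The key step is thus the estimate $\lvert \langle \vs, \vx\rangle_\vmu \rvert \leq \min_u \lVert \vx - u\vone\rVert_{\vmu,1}$. First I would use the hypothesis $\langle \vs, \vone \rangle_\vmu = 0$ to write, for any scalar $u$,
\begin{equation*}
\langle \vs, \vx \rangle_\vmu = \langle \vs, \vx - u\vone \rangle_\vmu = \sum_{i \in V} \mu_i \, s_i \, (x_i - u).
\end{equation*}
Taking absolute values and applying the triangle inequality together with $\lVert \vs \rVert_\infty \leq 1$ gives
\begin{equation*}
\lvert \langle \vs, \vx \rangle_\vmu \rvert \leq \sum_{i \in V} \mu_i \, \lvert s_i \rvert \, \lvert x_i - u \rvert \leq \sum_{i \in V} \mu_i \, \lvert x_i - u \rvert = \lVert \vx - u \vone \rVert_{\vmu, 1}.
\end{equation*}
Since this holds for every $u$, it holds for the minimizing $u$, which is exactly $\bar{\delta}_V(\vx) = \min_u \lVert \vx - u\vone\rVert_{\vmu,1}$ by the formula recorded just after the definition of~\eqref{eq.rc-noncvx}. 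This can also be phrased dually: $\vs$ lies in the unit ball of the $\vmu$-weighted $\ell_\infty$-norm restricted to $\vone^\perp_\vmu$, which is the dual object to $\bar{\delta}_V$, so $\langle \vs, \vx\rangle_\vmu$ is bounded by $\bar\delta_V(\vx)$ by conjugate-norm duality — but the elementary computation above suffices and needs no appeal to the polytope characterizations.

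I do not anticipate a serious obstacle here; the statement is essentially an unwinding of definitions plus a one-line Hölder-type bound. The only points requiring mild care are the degenerate case where $\langle \vs, \vx\rangle_\vmu = 0$ (handled by the convention that the ratio is $+\infty$), and the final specialization: setting $\vx = \vone^S$ gives $\bar{\delta}_G(\vone^S) = \delta_G(S)$, $\langle \vs, \vone^S\rangle_\vmu = \langle \vs, \vone^S\rangle_\vmu$, and $\bar\delta_V(\vone^S) = \min\{\vmu(S), \vmu(\bar S)\}$, so $\bar\Psi_G(\vone^S) = \Psi_G(S)$ and $\bar\Psi_{G,\vs}(\vone^S) = \Psi_{G,\vs}(S)$, from which $\Psi_G(S) \leq \Psi_{G,\vs}(S)$ follows directly from the vector inequality.
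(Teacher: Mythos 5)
Your proposal is correct and is essentially the same argument as the paper's: the paper derives the key bound $\lvert \langle \vs, \vx \rangle_{\vmu} \rvert \leq \min_u \lVert \vx - u\vone \rVert_{\vmu,1}$ via the conjugate-norm (min-max) characterization of the weighted $\ell_1$ norm, while you do the equivalent elementary computation (orthogonality to translate by $u\vone$, then H\"older) directly. You even note the duality phrasing yourself, so the two proofs are interchangeable presentations of the same idea; no gaps.
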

\begin{proof}
By the duality of conjugate norms, we have
\begin{equation*}
0 \geq \min_{u \in \R} \lVert \vx - u \ones \rVert_{1, \vmu}
= \min_{u \in \R} \max_{\lVert \vy \rVert_\infty \leq 1} \langle \vy, \vx - u \ones \rangle_{\vmu}
= \max_{\substack{\lVert \vy\rVert_\infty \leq 1 \\ \langle \vy, \ones \rangle_{\vmu} = 0}} \langle \vy, \vx \rangle
\geq  \langle \vs, \vx \rangle_{\vmu}
\end{equation*}
for all $\vx \in \R^V$. Consequently,
\begin{equation*}
\bar{\Psi}_G(\vx)
= \frac{\sum_{h \in E} w_h \bar{\delta}_h(\vx)}{\min_{u \in \R} \lVert \vx - u \ones \rVert_{1, \vmu}}
\leq \frac{\sum_{h \in E} w_h \bar{\delta}_h(\vx)}{\max\{0,\langle \vs, \vx \rangle_{\vmu}\}}
= \bar{\Psi}_{G, \vs}(\vx)
\end{equation*}
as required.
\end{proof}

Of critical, algorithmic, consequence is the fact that computing the value of $\bar{\Psi}_{G, \vs}^*$ can be written as a convex program. We refer to this program as~\eqref{eq.rc-improve.primal}:
\begin{equation*}
\stepcounter{equation}
\tag{\texttt{CI-Primal}}
\label{eq.rc-improve.primal}
\begin{aligned}
& \underset{\vx \in \R^V}{\min}
& & \sum_{h \in E} w_h \cdot \bar{\delta}_{h}(\vx) \\
& \textup{s.t.}
& & \langle \vs, \vx \rangle_{\mu} = 1
% & & & \vx \in \R^V
\end{aligned}
\end{equation*}

The following lemma is a simple consequence of the positive homogeneity of the Lov\'asz extensions of submodular functions (Proposition 3.1 in~\cite{bach2013learning}).
\begin{lemma}
\label{lem.rc-improve.primal-val}
The optimum value of \eqref{eq.rc-improve.primal} equals $\Psi^*_{G,\vs}$.
\end{lemma}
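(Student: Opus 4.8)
The plan is to show the two inequalities $\texttt{OPT}(\ref{eq.rc-improve.primal}) \leq \Psi^*_{G,\vs}$ and $\Psi^*_{G,\vs} \leq \texttt{OPT}(\ref{eq.rc-improve.primal})$ separately, using Fact~\ref{fact.lovasz-extension} to pass between the continuous and combinatorial formulations. Recall that $\texttt{OPT}(\ref{eq.rc-improve.primal}) = \bar{\Psi}^*_{G,\vs}$, since constraining $\langle \vs, \vx\rangle_\mu = 1$ and minimizing the (positively homogeneous) numerator $\bar{\delta}_G(\vx) = \sum_h w_h \bar{\delta}_h(\vx)$ is exactly the normalized form of minimizing the ratio $\bar{\delta}_G(\vx)/|\langle \vs, \vx\rangle_\mu|$; one needs to note $\bar\delta_G$ is even (Fact~\ref{fct.lovasz-even}) so the sign of the correlation can be fixed to $+1$ without loss of generality, and that $\bar\Psi_{G,\vs}$ is scale-invariant so the normalization loses nothing. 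So it suffices to prove $\bar{\Psi}^*_{G,\vs} = \Psi^*_{G,\vs}$.

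For the direction $\bar{\Psi}^*_{G,\vs} \leq \Psi^*_{G,\vs}$: given any cut $S\subseteq V$, plug $\vx = \ones^S$ into $\bar{\Psi}_{G,\vs}$. Since $\bar{\delta}_h(\ones^S) = \delta_h(S\cap h)$ by Fact~\ref{fact.lovasz-extension} (part 2 or 3, evaluated at a $0/1$ vector), the numerator is $\delta_G(S)$, and the denominator is $|\langle \vs, \ones^S\rangle_\mu|$, so $\bar{\Psi}_{G,\vs}(\ones^S) = \Psi_{G,\vs}(S)$. Taking the minimum over $S$ gives the inequality.

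For the reverse direction $\Psi^*_{G,\vs} \leq \bar{\Psi}^*_{G,\vs}$, the key tool is a threshold-rounding argument. Take any $\vx \in \R^V$ with $\langle \vs, \vx\rangle_\mu \neq 0$; by evenness of $\bar\delta_G$ and scaling we may assume $\langle \vs, \vx\rangle_\mu = 1$ and further, by shifting $\vx \mapsto \vx - u\ones$ (which changes neither $\bar{\delta}_G(\vx)$, since each $\bar\delta_h$ is invariant under adding multiples of $\ones_h$ as $\delta_h(\varnothing)=\delta_h(h)=0$, nor $\langle \vs,\vx\rangle_\mu$ since $\langle\vs,\ones\rangle_\mu = 0$) and rescaling coordinates into $[0,1]$ — actually the cleanest route is: for $\vx\in\R^V$ write $\vx$ in terms of its level sets $S_t(\vx) = \{i : x_i \geq t\}$ over $t\in\R$. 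By the (integral/expectation) formula in Fact~\ref{fact.lovasz-extension}, $\bar{\delta}_G(\vx) = \int \delta_G(S_t(\vx))\,dt$ over the relevant range of $t$, and similarly the linear functional satisfies $\langle \vs, \vx\rangle_\mu = \int \langle \vs, \ones^{S_t(\vx)}\rangle_\mu \, dt$ (linearity of the coordinate decomposition). Hence $\bar{\Psi}_{G,\vs}(\vx) = \frac{\int \delta_G(S_t)\,dt}{|\int \langle\vs,\ones^{S_t}\rangle_\mu\,dt|} \geq \frac{\int \delta_G(S_t)\,dt}{\int |\langle\vs,\ones^{S_t}\rangle_\mu|\,dt} \geq \min_t \frac{\delta_G(S_t)}{|\langle \vs, \ones^{S_t}\rangle_\mu|} \geq \Psi^*_{G,\vs}$, where the last step over the ratio of integrals uses the standard fact that a ratio of integrals of nonnegative numerators and (appropriately signed, then absolute-valued) denominators is at least the infimum of the pointwise ratio. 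Taking the infimum over $\vx$ gives $\bar\Psi^*_{G,\vs} \geq \Psi^*_{G,\vs}$, completing the proof.

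The main obstacle — and the one place requiring care — is making the level-set / ratio-of-integrals argument fully rigorous: one must be careful that $\langle \vs, \ones^{S_t(\vx)}\rangle_\mu$ can change sign as $t$ varies, so the passage from $|\int(\cdot)\,dt|$ to $\int|(\cdot)|\,dt$ must be justified (it only weakens the bound, which is the direction we want), and one must restrict attention to the (bounded) set of $t$ for which $S_t(\vx)$ is a nontrivial cut, discarding $t$ below $\min_i x_i$ or above $\max_i x_i$ where the level set is $V$ or $\varnothing$ and contributes $0$ to the numerator. This is essentially the content of Lemma~\ref{lem.continuous-to-discrete-equivalence} specialized to the modified objective, so much of the bookkeeping can be imported from there; indeed an alternative is to invoke Lemma~\ref{lem.continuous-to-discrete-equivalence}'s rounding algorithm directly on the numerator and handle the linear denominator by its exact decomposition into level sets.
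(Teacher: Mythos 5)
Your argument is correct, and since the paper declares this lemma ``a simple consequence of the submodularity and symmetry of $\delta_G$'' without writing out a proof, you have supplied exactly the detail the authors elide. The two steps you use are the intended ones: (i) positive homogeneity of $\bar{\delta}_G$ plus evenness (Fact~\ref{fct.lovasz-even}) to identify $\operatorname{OPT}$ of \eqref{eq.rc-improve.primal} with $\bar{\Psi}^*_{G,\vs}$, and (ii) the level-set decomposition via Fact~\ref{fact.lovasz-extension} together with the mediant inequality to collapse $\bar{\Psi}^*_{G,\vs}$ down to $\Psi^*_{G,\vs}$ — which is precisely the mechanism used in the paper's proof of Lemma~\ref{lem.continuous-to-discrete-equivalence}, adapted to the linear denominator $\langle\vs,\cdot\rangle_\mu$ in place of $\bar\delta_\mu$. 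Your attention to the sign-change issue (moving from $\lvert \int(\cdot)\rvert$ to $\int\lvert\cdot\rvert$, which only helps the inequality) and to trivial level sets (which contribute zero to both numerator and denominator and can be dropped from the min) are the right points to flag, and they are handled correctly. One could streamline step (ii) by invoking the rounding guarantee of Lemma~\ref{lem.continuous-to-discrete-equivalence} as a black box on the numerator and then arguing the denominator separately, but the direct computation you give is just as clean.
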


With this convex program, one can leverage convex duality to construct lower bounds for $\bar{\Psi}^*_{G, \vs}$, and consequently $\Psi^*_{G, \vs}$. We consider a dual formulation of \eqref{eq.rc-improve.primal} as a problem over hypergraph flows:

\begin{equation*}
\stepcounter{equation}
\tag{\texttt{CI-Dual}}
\label{eq.rc-improve.dual}
\begin{aligned}
&  {\max}
& & \alpha \\
& \textup{s.t.}
& & \dem(\vY) = \alpha \cdot \big( \vmu \circ \vs \big) \\
& & & \cong_G(\vY) \leq 1\\
& & &  \alpha \in \R, \vY =\{\vy_h \in \R^h \perp \vone_h\}_{h\in E}
\end{aligned}
\end{equation*}

In this setting, strong duality between \eqref{eq.rc-improve.primal} and \eqref{eq.rc-improve.dual} holds by virtue of the~\eqref{eq.rc-improve.primal} having a finite number of polyhedral constraints. The full proof of the following lemma can be found in Appendix~\ref{sec.appendix.omitted}.
\begin{restatable}{lemma}{improveduality}
\label{lem.rc-improve.strong-duality}
The programs \eqref{eq.rc-improve.primal} and~\eqref{eq.rc-improve.dual} are dual to each other. Strong duality holds.
\end{restatable}
\noindent

From this point onwards, we restrict our attention to the kind of seed vectors that we will encounter in our application to the cut-matching game. For this purpose, we define the following notation for disjoint sets $A, B \subseteq V$ with $\mu(A) \leq \mu(B)$:
$$
\vone^{A,B} = \vone_A - \frac{\mu(A)}{\mu(B)} \vone_B.
$$
Notice that $\vone^{A,B}$ is a valid seed for the local formulation of the minimum ratio-cut problem as $\|\vone^{A,B}\|_\infty \leq 1$ and $\langle \vone^{A,B} , \vone \rangle_{\vmu} = 0.$
For this kind of seed, the dual problem can be interpreted as a maximum concurrent single-commodity flow over the hypergraph $G$, in which we each vertex $i \in A$ is asked to route demand $\mu_i$ to $B$ and each vertex $j$ in $B$ is asked to receive demand $\nicefrac{\mu(A)}{\mu(B)} \cdot \mu_j$ from $A$. For a solution of value $\alpha$, each vertex will be able to route an $\alpha$ fraction of its demand and the total demand routed from $A$ to $B$ will be $\alpha \mu(A).$
Notice that, in the graph case, for $B = \bar{A}$, this construction specializes exactly to the undirected maximum concurrent flow problem used by Andersen and Lang~\cite{Andersen-Lang} to solve the ratio-cut improvement problem.

\paragraph{Approximate Solutions}
In Section~\ref{sec.solvers}, we will discuss how, for many kinds of polymatroidal cut functions, approximately feasible dual solutions to \eqref{eq.rc-improve.dual} can be obtained by approximately solving a maximum flow problem over the factor graph or an augmented version of it.
For this purpose, it is convenient to introduce a notion of approximate dual solution of value $\alpha$ to \eqref{eq.rc-improve.dual}.

\begin{definition}[Approximate dual solution]
\label{def.approximate-primal-dual-solution}
Let $G = \big( V, E_G, \vw^G, \vmu \big)$ be a weighted submodular hypergraph equipped with polymatroidal cut functions $\{\delta_h\}_{h\in E}$, and $A,B \subseteq V$ disjoint sets with $\mu(A) \leq \mu(B)$. An \emph{approximate dual solution of value $\alpha \geq 0$} for $(G,\vone^{A,B})$ is a hypergraph flow $\vY = \big(\vy^h \in \R^h \perp \vone_h\big)$
such that the following hold simultaneously:
\begin{enumerate}
\item $\vY$ has unit congestion, i.e., $\cong_G(\vY) \leq 1$\,,
\item $\vY$ routes a $\nicefrac{1}{2}$-fraction of the required flow from $A$ to $B$, i.e., at least $\nicefrac{1}{2} \cdot \alpha \cdot \mu(A) $  flow,
\item $\vY$ does not exceed the require demand at every vertex, i.e., $|\dem (\mY)| \leq \alpha \cdot |\vmu \circ \vone^{A,B}|$\,.
\end{enumerate}
\end{definition}

\subsection{Graph Certificates from Approximate Dual Solutions}

As discussed in Section~\ref{sec.results.ci}, our plan is to apply the cut-matching game to combine local lower bounds given by approximately optimal solutions to \eqref{eq.rc-improve.dual} into global lower bounds for the non-convex problem $\bar{\Psi}^*_{G}$.
To carry out this plan, we need to express our local lower bounds, i.e. dual solutions to \eqref{eq.rc-improve.dual} in terms of hypergraph embeddings of graphs into the instance hypergraph.
\begin{definition}
\label{def.graph}
Let $G = \big( V, E_G, \vw^G, \vmu \big)$ be a weighted submodular hypergraph equipped with polymatroidal cut functions $\{\delta_h\}_{h\in E}$, and $A,B \subseteq V$ disjoint sets with $\mu(A) \leq \mu(B).$ Let $\alpha = \Psi^*_{G, \vone^{A,B}}.$
An \emph{approximate dual graph certificate of value $\alpha \geq 0$} for the \eqref{eq.rc-local} problem with seed $\vone^{A,B}$ is a directed graph $D=(V,E_D, \vw^D)$ with the following properties:
\begin{enumerate}
    \item $D$ is sparse, i.e., $|E_D| \leq \tilde{O}(\sparsity(G)),$
    \item $\alpha \cdot D$ embeds into $G$ with congestion $1$, or equivalently $D \preceq_{\nicefrac{1}{\alpha}} G,$
    \item $D$ is bipartite from $A$ to $B$, i.e. all arcs of $D$ go from $A$ to $B$,
    \item for all $i \in A, j \in B$, we have the degree bounds $\deg_D(i) \leq \mu_i$ and $\deg_D(j) \leq \nicefrac{\mu(A)}{\mu(B)} \cdot \mu_j$,
    \item $D$ is large, i.e., $w^D(E(A,B)) \geq \nicefrac{1}{2} \cdot \mu(A)$.
\end{enumerate}
Following Theorem~\ref{thm.flow-embedding}, we may assume that $D$ is undirected if all cut functions are symmetric.
\end{definition}
In the cut-matching game framework, we will use  dual graph certificates to iteratively construct lower bounds to $\Psi^*_G$ through the application of Theorem~\ref{thm.flow-embedding}.
For example, given an approximate dual graph certificate $D$ of value $\alpha$, Theorem~\ref{thm.flow-embedding} yields the lower bound:
$$
\forall S \subseteq V\,, \delta_G(S) \geq \alpha \cdot \delta_D(S) = \alpha \cdot w^D(E(S \cap A, \bar{S} \cap B)).
$$
As we built $D$ from a dual solution for \eqref{eq.rc-improve.dual} with seed $\vone^{A,B}$, this lower bound is tighter for cuts $S$ that are well-correlated with $\vone^{A,B}$, reaching its maximum when $A \subseteq S \subseteq \bar{B}.$

We can now apply the flow-path decomposition of Theorem~\ref{theorem.hypergraph-flow-decomposition} to show how to efficiently construct an approximate dual graph certificate from an approximate dual solution. The straightforward proof is found in Appendix~\ref{sec.appendix.omitted}.
\begin{restatable}{lemma}{dualgraphcertificate}
\label{lem.dual-to-embedding}
Under the same assumptions of Definition~\ref{def.graph}, consider an approximate dual solution $\vY$ of value $\alpha$ to the ratio-cut improvement problem on $(G,\vone^{A,B}).$ Let $H$ be the directed graph obtained by applying the flow-path decomposition algorithm of Theorem~\ref{theorem.hypergraph-flow-decomposition} to the hypergraph flow $\vY$. Then, the scaled graph $\frac{1}{\alpha}\cdot H$ is an approximate dual graph certificate of value $\alpha.$
\end{restatable}

% -------------------------------------
% CUT IMPROVEMENT ALGORITHMS
% -------------------------------------
\subsection{Approximately Solving \eqref{eq.rc-local}}
\label{sec.solvers}

In this section, we discuss algorithms that construct approximate solutions to \eqref{eq.rc-improve.primal} and~\eqref{eq.rc-improve.dual}, the primal and dual convex formulations of \eqref{eq.rc-local} . Our specific focus is to describe algorithms that implement the following specification:
\begin{definition}
\label{def.oracle}
Given a weighted submodular hypergraph $G = (V,E,\vw, \vmu)$ with polymatroidal cut functions $\{\delta_h\}_{h\in E}$ and disjoint sets $A, B \subset V$ with $\mu(A) \leq \mu(B)$,
an \emph{approximate primal-dual oracle} problem is an algorithm $\cA_{\ci}$ that takes as input $G$ and the seed vector $\vone^{A,B}$.  For some $\alpha \geq 0,$ the algorithm $\cA_{\ci}$ outputs both:
\begin{enumerate}
    \item a cut $S \subseteq V$ with $\Psi_{G,\vone^{A,B}}(S) \leq 3\alpha,$
    \item an approximate dual graph certificate of value $\alpha.$
\end{enumerate}
\end{definition}

We prove the following two theorems in Appendix~\ref{sec.appendix.ci-algs} by a simple application of binary search over  $\alpha.$
The first thereom bounds the complexity of constructing an approximate primal-dual oracle for general polymatroidal cut functions.
\begin{theorem}
\label{thm.general-solver}
For submodular hypergraphs with general polimatroidal cut functions, an approximate primal-dual oracle can be implemented in time $\tilde{O}\big( |V|\cdot\sum_{h\in E}\Theta_{h}\big),$
where $\Theta_h$ is the running time for a quadratic optimization oracle for $\delta_h,$ by solving $O(\log |V|)$-many decomposable submodular minimization problems.
\end{theorem}

Our second theorem shows how this result can be significantly improved in the case of the standard hypergraph cut functions by relying on almost-linear-time maximum flow solvers~\cite{bernstein2022deterministic,chenMaximumFlowMinimumCost2022}.

\begin{theorem}
\label{thm.maxflow-solver}
For a submodular hypergraph $G$ equipped with the directed or standard hypergraph cut functions,
an approximate primal-dual oracle can be implemented in almost linear-time by solving $O(\log |V|)$-many $\nicefrac{1}{2}$-approximate maximum flow problems over a directed, vertex-capacitated version of the factor graph $\hat{G}.$
\end{theorem}

% -------------------------------------
% CUT-MATCHING
% -------------------------------------
% -------------------------------------
% CUT-MATCHING GAMES
% -------------------------------------
\section{An \texorpdfstring{$O(\log n)$}{O(log n)}-Approximation via Generalized Cut-Matching Games}
\label{sec.alg-cm}

We now describe how to use our cut-matching game to approximate minimum ratio-cut.
Let us begin by stating the precise approximation guarantee garnered by running a game between a good cut strategy, and an approximate primal-dual solver for the ratio-cut improvement problem~\eqref{eq.rc-local}.

\begin{restatable}{theorem}{cmapprox}
\label{thm.cm.approx}
Let $G=(V, E, \vw, \vmu)$ be an input submodular hypergraph equipped with polymatroidal cut functions. Consider an execution of the cut-matching game in which the cut player $\cC$ plays an $\big( f(n), g(n) \big)$-good cut strategy, while the matching player plays the dual graph certificates output by an approximate primal-dual oracle $\cA_{\ci}$ (Definition~\ref{def.oracle}) on $G$.
Let $C_t$ be the cuts output by $\cA_{\ci}$.
Then, with constant probability, we have the following approximation result:
$$
\min \bigg\{\Psi_G(C_t)\bigg\}_{t=1}^{g(n)} \leq O\bigg(\frac{g(n)}{f(n)}\bigg) \cdot \Psi^*_G.
$$
\end{restatable}

\noindent
This is a standard result for cut-matching games and its proof does not deviate too far from that in previous work. We give its proof in Appendix~\ref{sec.cut-strategy.reduction-proof} for completeness.

Theorem~\ref{thm.cm.approx} reduces the task of approximating minimum ratio-cut to constructing a good cut strategy. Hence, this section is devoted towards proving the following theorem.

\begin{restatable}{theorem}{cmGoodCutStrategy}
\label{thm.cm.good-cut-strategy}
Let $H = \big(V, E_H, \vw^H, \vmu \big)$ be the state graph for an $(n, \vmu)$-generalized cut-matching game. There exists a cut strategy $\cA_{\cut}$ satisfying the following:
\begin{enumerate}
\item If $H$ is undirected, then $\cA_{\cut}$ is $\left( \Omega (\log n), O ( \log^2 n ) \right)$-good with probability $O(1)$.

\item If $H$ is directed, then $\cA_{\cut}$ is $\left( \Omega (\log^2 n), O ( \log^3 n ) \right)$-good with probability $O(1)$.
\end{enumerate}
Furthermore, $\cA_{\cut}$ outputs a cut action in time $\tilde{O}\big( \sparsity(H) \big)$.
\end{restatable}

To prove this theorem, we give an optimization-based analysis of the cut-matching game by using regret minimization techniques from online optimization to construct and analyze $\cA_{\cut}$. This approach is very natural: the goal of online optimization is to adaptively produce a sequence of actions that will lead to some terminal outcome despite the presence of adversarial response. For the cut-matching game, the cut player's goal is to produce a short sequence of cut actions such that, no matter what approximate dual responses are given by $\cM$, the state graph $H$ will have large ratio-cut objective.

This approach also provides insight into why earlier combinatorial restrictions on valid cut and matching responses are not necessary when using the cut-matching game to partition (hyper)graphs. An interesting byproduct of our construction, in the case where $\{ \delta_h \}_{h \in E}$ are asymmetric, is that $\cM$ does not necessarily route Eulerian demand graphs. Additionally, no further reductions are required to enforce these combinatorial restrictions when approximating ratio-cut objectives beyond expansion.

To prove Theorem~\ref{thm.cm.good-cut-strategy}, we first describe our construction of $\cA_{\cut}$ for the case where the cut functions defining the ratio-cut objective are symmetric, and $H$ is undirected. Simply considering this case already demonstrates a large fraction of our technical contributions. We then consider asymmetric cut functions and directed $H$, highlighting key differences from the symmetric case.

% -------------------------------------
% SYMMETRIC CUT FUNCTIONS
% -------------------------------------
\subsection{An \texorpdfstring{$\left( \Omega (\log n), O \big( \log^2 n \big) \right)$}{(Omega(log n), O(log\^2 n))}-good Cut Strategy for Symmetric Cut Functions}

We first consider the case where the polymatroidal cut functions $\{ \delta_h \}_{h \in E}$ are symmetric.

Let us start by instantiating the online linear optimization setting in the context of our cut-matching game. For an instance of a $(n, \vmu)$-generalized cut-matching game, fix the action set to be $\cX = \Delta^{n \times n}_{\vmu}$ where
\begin{equation*}
\Delta^{n \times n}_{\vmu}
\defeq \big\{ \mX \succeq \mZero : \big\langle \mL(K_{\vmu}), \mX \big\rangle \geq 1 \big\} \, .
\end{equation*}
At the end of round $t$, the matching player $\cM$ will have produced approximate matching responses $D_1, \ldots, D_t$. Choose the $t$-th feedback matrix to be $\mF_t = \mL(D_t)$. The loss incurred during round $t$ is then
\begin{equation*}
\big\langle \mL(D_t), \mX \big\rangle \, .
\end{equation*}

We instantiate online linear optimization in this way because any bound on regret now translates to a lower bound on the ratio-cut objective for the state graph $H$. When $\{ \delta_h \}_{h \in E}$ are symmetric, $H$ is an undirected graph, and $\Psi_H^*$ is simply the undirected ratio-cut objective, which has the following trivial SDP relaxation:
\begin{lemma}
\label{lem.sym-cut-strat.relaxation}
Given any undirected graph $G = \big( V, E_G, \vw^G, \vmu \big)$, we have $\frac{\opt}{2} \leq \Psi_G^*$ where $\opt$ is the optimal objective value for the following semidefinite program.
\begin{equation*}
\stepcounter{equation}
\tag{\texttt{cond-sdp}}
\label{eq.cond-sdp}
\begin{aligned}
& {\min}
& & \big\langle \mL(G), \mX \big\rangle \\
& \textup{s.t.}
& & \big\langle \mL(K_{\vmu}), \mX \big\rangle \geq 1 \\
& & & \mX \succeq \mZero
\end{aligned}
\end{equation*}
\end{lemma}

\noindent
Our choice of action set thus coincides with the feasible set for~\eqref{eq.cond-sdp}, while the feedback matrices are chosen so that the sum of the losses give the SDP objective for the state graph. This is as intended; we want the cut player to force a state graph with large ratio-cut objective, thus it is crucial to have some lower bound on the ratio-cut objective that is a function of both cut, and matching responses. This is also to be expected; if we wanted regret minimizing strategies to produce approximate solutions for~\eqref{eq.cond-sdp}, then we would have defined the action set and losses in exactly the same way.

Three issues remain if we wish to use a regret bound from Theorem~\ref{thm.regret-bound} to analyze a good cut strategy:
\begin{enumerate}[(1)]
\item \emph{Ensuring Small Width}: We need to ensure that the width term $\big\lVert \mM^{-1/2} \mL(D_t) \mM^{-1/2} \big\rVert_{\spectral}$ is small given any approximate matching response $D_t$.

\item \emph{Ensuring Large Loss}: For our cut strategy to be good, we will need to specify how to produce a cut action such that the loss $\big\langle \mL(D_t), \mX_t \big\rangle$ regardless of what approximate matching response $D_t$ is played.

\item \emph{Implementing $\mmwu_{\eta, \vmu}$ in Nearly-linear Time}: If we want our cut player to run in nearly-linear time, then we will (a) need to compute the update $\mmwu_{\eta, \vmu}$ in nearly-linear time, and (b) ensure that queries on the vector embedding given by the Gram matrix of the update can be computed in sublinear time.
\end{enumerate}
\noindent
Let us address these issues now.

% -------------------------------------
% SMALL WIDTH
% -------------------------------------
\subsubsection{Ensuring Small Width}
The width term is constant for any approximate matching response. This follows simply from the fact that the degree of any vertex $i$ is bounded by $\mu_i$.

\begin{lemma}
\label{lem.sym-cut-strat.width}
Any approximate matching response $D = \big( V, E_D, \vw^D, \vmu \big)$ satisfies $\mZero \preceq \mL(D) \preceq 2 \cdot \mM$.
\end{lemma}

\noindent
Notice that, since $\mL(D) \succeq \mZero$, the regret guarantee in Theorem~\ref{thm.regret-bound} holds for any step-size $\eta > 0$.

% -------------------------------------
% LARGE LOSS
% -------------------------------------
\subsubsection{Ensuring Large Loss}
Let $\big\{ \vv^{(t)}_i \big\}_{i \in V} \subseteq \R^d$ be a vector embedding of $V$ whose Gram matrix is $\mX_t$. If $D_t = \big( V, E_{D_t}, \vw^{D_t} \big)$ is the approximate matching response at time $t$, then the loss $\big\langle \mL_{D_t}, \mX_t \big\rangle$ is equivalent to
\begin{equation*}
\big\langle \mL_{D_t}, \mX_t \big\rangle
= \sum_{(i,j) \in E_{D_t}} w_{ij}^{D_t} \cdot \big\lVert \vv^{(t)}_i - \vv^{(t)}_j \big\rVert^2 \, .
\end{equation*}
For our cut strategy to be good, we need to produce a cut action $(A_t, B_t)$ such that, no matter how $D_t$ is played, the edges of $D_t$ will always be placed across pairs of embedding vectors $\vv^{(t)}_i$, $\vv^{(t)}_j$ that are well-separated under the $\ell_2^2$-norm. This can be difficult to guarantee if the flow routed is approximate. If the primal-dual solver can only route a constant factor approximate flow, then an adversarially chosen constant fraction of the demand may be dropped. If one naively rounds $(A_t, B_t)$ using a random 1-dimensional projection, as in \cite{arora2007combinatorial, orecchia2012balanced_separator}, then the rounding can produce an unbalanced cut, upon which the adversary can choose to drop a fraction of demand that routes between \emph{all} vertex pairs whose embedding vectors are far apart in $\ell_2^2$-distance.

Previous works~\cite{chuzhoyNewAlgorithmDecremental2019, chuzhoy2020deterministic, nanongkai2017dynamic} have addressed this issue by routing a polylogarithmic number of $O(1)$-factor approximate maximum flows during a single iteration of the cut-matching game
However, a technical contribution of this construction is that a \emph{single} approximate primal-dual solve of~\eqref{eq.rc-local} suffices, provided $(A_t, B_t)$ are $\sigma$-separated in the sense of Definition~\eqref{def.separated-sets}.

The following lemma shows that any approximate matching response routing demand across $\sigma$-separated $(A_t, B_t)$ must always route a large amount of flow between embedding vectors that are well separated in $\ell_2^2$-distance. Using $(A_t, B_t)$ as the cut action will thus always result in large loss.

\begin{lemma}
\label{lem.sym-cut-strat.loss}
Given $\vmu \in \Z_{\geq 0}$, let $\mX \in \Delta_{\vmu}^{n \times n}$ and $\{ \vv_i \}_{i \in V} \subseteq \R^d$ be the vector embedding whose Gram matrix is $\mX$. If $(A, B)$ is $\sigma$-well separated with respect to $\vmu$ and $\{ \vv_i \}_{i \in V}$, then for any approximate matching response $D = \big( A \cup B, E_D, \vw^D \big)$ with respect to $(A, B)$, we have:
\begin{equation*}
\sum_{\{ i, j \} \in E_D} w_{ij}^{D} \cdot \lVert \vv_i - \vv_j \rVert_2^2
\geq {\sigma \over 2} \, .
\end{equation*}
\end{lemma}
\begin{proof}
Since $(A, B)$ are $\sigma$-separated we have for all $i \in A$ and $j \in B$:
\begin{equation*}
\frac{\vmu(A) \vmu(B)}{\vmu(V)} \cdot \lVert \vv_i - \vv_j \rVert^2_2
\geq \sigma \cdot \sum_{i,j \in \binom{V}{2}} \frac{\mu_i \mu_j}{\vmu(V)} \cdot \lVert \vv_i - \vv_j \rVert^2_2
= \sigma \cdot \big\langle \mL(K_{\vmu}), \mX \big\rangle
\geq \sigma \, .
\end{equation*}
Here, we use $\mX \in \Delta_{\vmu}^{n \times n}$ to deduce $\big\langle \mL(K_{\vmu}), \mX \big\rangle \geq 1$. Rearranging the above inequality then implies
\begin{equation*}
\lVert \vv_i - \vv_j \rVert_2^2
\geq \frac{\sigma}{\vmu(A)} \cdot \frac{\vmu(V)}{\vmu(B)}
\geq \frac{\sigma}{\vmu(A)}
\qquad\qquad
\forall \, i \in A, j \in B
\end{equation*}
where the last inequality follows by $\vmu(B) \leq \vmu(V)$. Finally, any approximate matching response is bipartite across $(A, B)$, and has large total weight. Thus we can conclude
\begin{equation*}
\sum_{\{ i, j \} \in E_D} w_{ij}^{D} \cdot \lVert \vv_i - \vv_j \rVert_2^2
\geq \frac{\sigma}{\vmu(A)} \cdot \sum_{i \in A, j \in B} w_{ij}^{D}
= \frac{\sigma}{\vmu(A)} \cdot \vw\big( e(A, B) \big)
\geq {\sigma \over 2}
\end{equation*}
as required.
\end{proof}

In section~\ref{sec.separated}, we give an algorithm~\roundcut that produces $\Omega\big( \frac{1}{\log n} \big)$-separated $(A_t, B_t)$. This algorithm will subsequently be used in our cut strategy construction for the symmetric case.

% -------------------------------------
% MMWU IN NEARLY-LINEAR TIME
% -------------------------------------
\subsubsection{Implementing \mmwu in Nearly-linear Time}
\label{sec.alg-cm.sym-cut-strat.nearly-linear-mmwu}

Computing $\mmwu_{\eta, \vmu}$ in nearly-linear time, along with $\ell_2^2$-distances and projections in logarithmic time are standard techniques. We briefly sketch them here.

Lemma 6 in~\cite{arora2007combinatorial} states one can compute a matrix-vector product with a matrix exponential by truncating its Taylor expansion. Given any $\mA \in \R^{n \times n}$ and $\vv \in \R^n$, one can compute $\vu \in \R^n$ such that
\begin{equation*}
\big\lVert \vu - \exp(\mA)\vv \big\rVert \leq \epsilon
\end{equation*}
in time $O \big( \cT_{\textsf{mv}} \cdot \max \big\{ \norm{\mM}, \log \big(\frac{1}{\epsilon} \big) \big\} \big)$ where $\cT_{\textsf{mv}}$ is matrix-vector product time. To compute $\mmwu_{\eta, \vmu}$ in nearly-linear time, we take advantage of two details.

First, each loss matrix provided to \mmwu is given by the Laplacian of a graph $D_t$. Consequently, the sparsity of the matrix being exponentiated in \mmwu is bounded by $\sparsity(H_t)$. Consequently, matrix-vector product time~satisfies $\cT_{\textsf{mv}} = O\big( \sparsity(H_T) \big)$ where $T$ is the total number of cut-matching game rounds. Second, Lemma~\ref{lem.sym-cut-strat.width} implies that $\big\lVert \mM^{-1/2} \mL(D_t) \mM^{-1/2} \big\rVert_{\spectral} \leq O(1)$ for every $t \in T$, and thus
\begin{equation*}
\Big\lVert \sum_{k=1}^t \mM^{-1/2} \mL(D_t) \mM^{-1/2} \Big\rVert_{\spectral} \leq O(t) \, .
\end{equation*}
If we run at most $\polylog(n)$ iterations of the cut-matching game, we can compute an $O \big( \frac{1}{\poly(n)} \big)$-additive approximation to $\mmwu$ by truncating the Taylor expansion up to a $\polylog(n)$ number of terms.

To compute $\ell_2^2$-distances, and projections in logarithmic time, one can use the Johnson--Lindenstrauss transform \cite{JohnsonLindenstraussOriginal} to project the embedding into $O\big( \frac{\log n}{\delta^2} \big)$ dimensions for $\delta = O(1)$ (See, e.g. \cite{dasgupta2003elementary,achlioptas2001database,blum2020foundations}). Since the error is multiplicative in terms of pairwise $\ell_2^2$-distances, picking $\delta$ to be constant only causes a constant factor loss in the number of cut strategy plays, and the final approximation ratio. We will ignore this constant in the subsequent calculations~\footnote{Picking $\delta > \frac{1}{10}$ loses a factor $2$ in the final bound on $g(n)$ when proving Theorem~\ref{thm.cm.good-cut-strategy}}.

% -------------------------------------
% FINISHING THE ANALYSIS
% -------------------------------------
\subsubsection{Completing the Analysis for Symmetric Cut Functions}

We are now ready to describe the cut strategy which we call $\cA^{\leftrightarrow}_{\cut}$. Fix a step size $\eta > 0$ to be chosen subsequently, and consider round $t$ of a $(n,\vmu)$-generalized cut-matching game. Up to this point, the matching player $\cM$ will have produced approximate matching responses $D_1, \ldots, D_t$.

The cut strategy is straightforward: use Johnson-Lindenstrauss and the truncated Taylor expansion, as in Lemma~6 of~\cite{arora2007combinatorial}, to compute a low-dimensional projection of the embedding whose Gram matrix is given by $\mmwu_{\eta, \vmu}\big( \mL(D_1), \ldots, \mL(D_t) \big)$, execute \roundcut on the resulting embedding to produce $\sigma$-separated sets $(A_t, B_t)$, and output $(A_t, B_t)$ as the cut action. Our strategy is summarized in Algorithm~\ref{alg.sym-cut-strategy}.

\begin{figure}[h]
\centering
\noindent\fbox {
\parbox{45em} {
\alglabel{alg.sym-cut-strategy}
\linespread{1.5}\selectfont
\textbf{Algorithm~\thealgorithm} $\cA^{\leftrightarrow}_{\cut}$.

\textbf{Input:} vertex measure $\vmu \in \Z^V_{\geq 0}$, and approximate matching responses $D_1, \ldots, D_t$.

\textbf{Parameters:} step size $\eta > 0$.

\textbf{Do:} The following.
\begin{quote}
\begin{enumerate}[1.]
\item Sample $d = O \big( \log n \big)$ random unit vectors $\vu_1, \ldots, \vu_d \sim \cS^{n-1}$.

\item Denote $\mW_t = \mmwu_{\eta, \vmu}\big( \mL(D_1), \ldots, \mL(D_t) \big)$, and compute $\big\{ \hat{\vv}^{(t)}_i \big\}_{i \in V} \subseteq \R^d$ given by
\begin{equation*}
\big( \hat{\vv}^{(t)}_i \big)_j = \langle \mW_t \vu_j, \ve_i \rangle
\qquad\qquad
\forall \, i \in V \; j \in [d]
\end{equation*}
using the truncated Taylor series via~\cite{arora2007combinatorial}.

\item Execute $\roundcut\big( d, \big\{ \hat{\vv}^{(t)}_i \big\}_{i \in V} \big)$ to produce the cut $(A_t, B_t)$.
\end{enumerate}
\end{quote}

\textbf{Output:} The cut $(A_t, B_t)$.
}
}

\caption{The cut strategy $\cA_{\cut}^{\leftrightarrow}$ for symmetric polymatroidal cut functions $\{ \delta_h \}_{h \in E}$.}
\end{figure}

Let us now demonstrate that $\cA^{\leftrightarrow}_{\cut}$ is a $\left( \Omega (\log n), O \big( {\log^2 n} \big) \right)$-good cut strategy.

\begin{lemma}
\label{lem.sym-cut-strat.good}
Let $G = (V, E, \vw, \vmu)$ be an input submodular hypergraph equipped with symmetric polymatroidal cut functions $\{ \delta_h \}_{h \in E}$ such that $\lvert V \rvert = n$. Then $\cA^{\leftrightarrow}_{\cut}$ is $\Big( \Omega (\log n), O \big( \log^2 n \big) \Big)$-good with probability $O(1)$.
\end{lemma}
\begin{proof}
Let $T = g(n)$, and for any sequence of weighted undirected approximate matching responses $D_1, \ldots, D_T$, suppose $\cA^{\leftrightarrow}_{\cut}$ plays cut actions $(A_1, B_1), \ldots, (A_T, B_T)$ in response. To show that $\cA^{\leftrightarrow}_{\cut}$ is a $\left( \Omega(\log n), O\big( {\log^2 n} \big) \right)$-good cut strategy, we first show that $\Psi^*_H \geq \Omega(\log n)$ after $T = O\big( \log^2 n \big)$ rounds of the cut-matching game.

Fix $H = \sum_{t=1}^T D_t$ to be the state graph of the cut-matching game. For every $t > 0$, the feedback matrix satisfies $\mL(D_t) \succeq \mZero$, hence we can use the regret bound given by~\eqref{eq.regret-bound.local-norm} in Theorem~\ref{thm.regret-bound}. Lemma~\ref{lem.sym-cut-strat.width} ensures
\begin{equation*}
\big\lVert \mM^{-1/2} \mL(D_t) \mM^{-1/2} \big\rVert_{\spectral} \leq 2
\end{equation*}
Consequently, we have for every $\mU \in \Delta_{\vmu}^{n \times n}$ and $\eta > 0$:
\begin{equation*}
\big\langle \tfrac{1}{T} \cdot \mL(H), \mU \big\rangle
= \frac{1}{T} \sum_{t=1}^T \big\langle \mL(D_t), \mU \big\rangle
\geq \frac{1 - 2\eta}{T} \sum_{t=1}^T \big\langle \mL(D_t), \mX_t \big\rangle - \frac{\log n}{\eta T} \, ,
\end{equation*}
Next, for fixed $t > 0$, let $\big\{ \vv_i^{(t)} \big\}_{i \in V}$ be the vector embedding whose Gram matrix is $\mX_t$. Theorem~\ref{thm:separated} implies \roundcut outputs $\Omega\big( \frac{1}{\log n} \big)$-separated sets $(A_t, B_t)$. Lemma~\ref{lem.sym-cut-strat.loss} then implies
\begin{equation*}
\frac{1 - 2\eta}{T} \sum_{t=1}^T \big\langle \mL(D_t), \mX_t \big\rangle - \frac{\log n}{\eta T}
= \frac{1 - 2\eta}{T} \sum_{t=1}^T \sum_{ij\in E(D_t)}\vw_{ij}^{D_t} \cdot \lVert \vv^{(t)}_i - \vv^{(t)}_j \rVert^{2}_2 - \frac{\log n}{\eta T}
\geq \frac{C_1(1 - 2\eta)}{\log n}  - \frac{\log n}{\eta T} \, .
\end{equation*}
for some absolute constant $C_1 > 0$. Let $C_2$ be constant satisfying $2\sqrt{\frac{2}{C_2}} < \sqrt{C_1}$, then with choice of $\eta = {1\over \sqrt{2 C_1 C_2}}$, and $T = C_2 \cdot {\log^2 n }$, we have that:
\begin{align*}
\frac{C_1(1 - 2\eta)}{\log n} - \frac{\log n}{\eta T}
&={ C_1 \over \log n}- \left(2\sqrt{2 C_1 \over C_2}\right){1 \over \log n} \geq \Omega\left({1 \over \log n}\right),
\end{align*}
and thus establishing for every $\mU \in \Delta_{\vmu}^{n \times n}$
\begin{equation}
\label{eq.cm.good-cut-strategy-1}
\big\langle \mL(H), \mU \big\rangle
\geq \Omega \left({T \over \log n}\right)
= \Omega(\log n)
\end{equation}

Now choose $\mU = \mX^*$, the minimizer to the semidefinite relaxation~\eqref{eq.cond-sdp} for $\Psi_H^*$. Certainly $\mX^* \in \Delta_{\vmu}^{n \times n}$, since $\mX^*$ is feasible. If $\opt_H$ is the optimal objective value of~\eqref{eq.cond-sdp}, then by inequality~\eqref{eq.cm.good-cut-strategy-1}
\begin{equation*}
\opt_H
= \big\langle \mL(H), \mX^* \big\rangle
\geq \Omega(\log n) \, .
\end{equation*}
Finally, Lemma~\ref{lem.sym-cut-strat.relaxation} implies the required
\begin{equation*}
\Psi_H^*
\geq \frac{\opt_H}{2}
\geq \Omega(\log n)
\end{equation*}

To ensure that we obtain the above guarantee with good probability, note that the randomizes Johnson-Lindenstrauss transform fails to produce an embedding with the necessary distortion guarantees with probability $O\left({1\over n}\right)$. By the union bound, the probability that this procedure fails at all over any iteration is then $O\left({\polylog(n) \over n}\right) = o_n(1)$. The only other randomized part of the algorithm is the balanced case of Algorithm~\ref{alg.round-cut}. We know from Theorem~\ref{thm:separated} that one obtains a $\Omega\left(1/\log n\right)$-separated set with $1-O(1/n)$ probability at each iteration, which again by the union bound guarantees that this procedure will fail during the algorithm with probability $O\left({\polylog(n) \over n}\right)$. Hence the algorithm will produce the desired output with probability $1-o(1)$.
\end{proof}

% -------------------------------------
% ASYMMETRIC CUT PLAYER CONSTRUCTION
% -------------------------------------
\subsection{An \texorpdfstring{$\left( \Omega (\log^2 n), O ( \log^3 n ) \right)$}{(Omega(log\^2 n), O(log\^3 n))}-good Cut Strategy for Asymmetric Cut Functions}

Let us now consider the case where $\{ \delta_h \}_{h \in E}$ are asymmetric. We describe our setup for the online linear optimization setting.

Given an instance of a $(n, \vmu)$-generalized cut-matching game, we again fix the set of actions to be $\cX = \Delta_{\vmu}^{n \times n}$. For round $t$ of the online optimization setting, we now define the feedback matrix to be the directed Laplacian of the approximate matching response $D_t$: $\mF_t = \mL_{+}(D_t)$ so that the losses become
\begin{equation*}
\big\langle \mL_{+}(D_t), \mX \big\rangle
\end{equation*}
Similar to the symmetric case, our choices of action set and loss functions are guided by the trivial spectral relaxation for the directed ratio-cut of a graph.
\begin{lemma}
\label{lem.asym-cut-strat.relaxation}
Given any weighted, directed graph $G = \big( V, E_G, \vw^G, \vmu \big)$, we have $\frac{\opt}{2} \leq \Psi_G^*$ where $\opt$ is the optimal objective value for the following semidefinite program.
\begin{equation*}
\stepcounter{equation}
\tag{\texttt{dir-cond-sdp}}
\label{eq.dir-cond-sdp}
\begin{aligned}
& {\min}
& & \big\langle \mL_{+}(G), \mX \big\rangle \\
& \textup{s.t.}
& & \big\langle \mL(K_{\vmu}), \mX \big\rangle \geq 1 \\
& & & \mX \succeq \mZero
\end{aligned}
\end{equation*}
\end{lemma}

In order to use a regret bound from Theorem~\ref{thm.regret-bound}, we need only address the first two issues presented in the symmetric case:
\begin{enumerate}[(1)]
\item \emph{Ensuring Small Width}: We need to ensure that the width $\big\lVert \mM^{-1/2} \mL_{+}(D_t) \mM^{-1/2} \big\rVert_{\spectral}$ is small given any approximate matching response $D_t$. Similar to the symmetric case, we show this is bounded by a constant.

\item \emph{Ensuring Large Loss}: We now need to ensure that $\big\langle \mL_{+}(D_t), \mX_t \big\rangle$ is large. A key difference present in the cut strategy for the asymmetric case is that ensuring large loss requires an additional step in rounding algorithm to control for terms in the loss that correspond to the embedding vector lengths.
\end{enumerate}
To compute the matrix exponential update $\mmwu_{\eta, \vmu}\big( \mL_{+}(D_1), \ldots, \mL_{+}(D_t) \big)$ in nearly-linear time, we can reuse the sketch provided in Section~\ref{sec.alg-cm.sym-cut-strat.nearly-linear-mmwu}.

% -------------------------------------
% SMALL WIDTH
% -------------------------------------
\subsubsection{Ensuring Small Width}

Whe width in the asymmetric case is still bounded by an absolute constant given any directed approximate matching response. We prove the following Lemma in Appendix~\ref{sec.appendix.omitted.alg-cm}

\begin{restatable}{lemma}{asymCutStratWidth}
\label{lem.asym-cut-strat.width}
Suppose $D = \big( A \cup B, E_D, \vw^D, \vmu \big)$ is an approximate matching response with respect to cut action $(A, B)$. Then $- \mM \preceq \mL_{+}(D) \preceq 3 \cdot \mM$.
\end{restatable}

\noindent
This lemma implies that the regret guarantee of Theorem~\ref{thm.regret-bound} only for step-size $0 < \eta < 1$ when the losses are $\big\langle \mL_{+}(D), \mX \big\rangle$. We will subsequently choose an $\eta$ satisfying this.

% -------------------------------------
% LARGE LOSS
% -------------------------------------
\subsubsection{Ensuring Large Loss}

In the asymmetric case, we wish to produce a cut action $(A_t, B_t)$ such that no matter how the matching player $\cM$ plays $D$, the edges of $D$ will be weighted such that the loss
\begin{equation*}
\big\langle \mL_{+}(D_t), \mX \big\rangle
= \sum_{(i, j) \in E_{D_t}} w^{D_t}_{ij} \cdot \Big( \lVert \vv_{i} - \vv_{j} \rVert_2^2 + \lVert \vv_{i} \rVert_2^2 - \lVert \vv_{j} \rVert_2^2 \Big)
\end{equation*}
is large. The critical distinction from the symmetric case is the presence of the embedding vector lengths.

This is where one might be inclined to require that valid matching responses be Eulerian directed graphs. If the matching response $D$ is Eulerian, then the graph can be split into two parts where the edges are all oriented from $A_t$ to $B_t$, and from $B_t$ to $A_t$. One can then cancel the embedding vector lengths in the loss using the fact that the in- and out-degrees are identical for each vertex. This reduces the problem of constructing a good cut strategy back to that present in the symmetric case: one need only round a cut $(A_t, B_t)$ such that a large amount of flow is always routed between vertex pairs whose embedding vectors are far in $\ell_2^2$-distance.

Absent of this trick, one then needs to control the length of the embedding vectors. A priori this seems difficult. To ensure that an approximation algorithm based on cut-matching games runs in nearly-linear time, some low-dimensional projection needs to be applied to the embedding. One then cannot hope to simultaneously control for both the distortion of the projected embedding, and the length of the projected vectors, suggesting that the matching response \emph{must} be an Eulerian graph. We remark that if one tries to directly adapt the potential analysis of~\cite{khandekarGraphPartitioningUsing2009} or~\cite{orecchiaPartitioningGraphsSingle2008}, then the same terms in the loss appear in lower bounding the potential progress made when the matching player routes a large flow. This might suggest why current works generalizing the cut-matching game to directed settings~\cite{louis2010cut, lau2023fast} require the matching player to play Eulerian graphs.

However, if one is cognizant of the underlying SDP being used to certify lower bounds on the directed conductance of the state graph, then it is natural to consider rounding algorithms which directly target the SDP. Indeed, applying the rounding algorithm present in~\cite{agarwal2005log} (Algorithm 4) augments a $\sigma$-separated cut $(S, T)$ into a cut action $(A, B)$ such that any matching response across $(A, B)$ must have large loss. We describe this rounding procedure in Algorithm~\ref{alg.directed-round-cut} \dirroundcut.

\begin{figure}[h]
\centering
\noindent\fbox {
\parbox{45em} {
\alglabel{alg.directed-round-cut}
\linespread{1.5}\selectfont
\textbf{Algorithm~\thealgorithm} \dirroundcut.

\textbf{Input:} A cut $(S, T)$ of $V$, and  vector embedding $\{ \vv_i \}_{i \in V} \subseteq \R^d$.
\begin{quote}
\begin{enumerate}[1.]
\item Find $r > 0$ such that $\vmu \big( A \cap \cB_r(0) \big) \geq \frac{\vmu(S)}{2}$ and $\vmu \big( A \cap \overline{\cB_r(0)} \big) \geq \frac{\vmu(S)}{2}$.

\item Define the following sets.
\begin{align*}
S^+ &\defeq \big\{ i \in S : \lVert \vv_i \rVert_2^2 \leq r^2 \big\}
\qquad\qquad
S^- \defeq \big\{ i \in S : \lVert \vv_i \rVert_2^2 \geq r^2 \big\} \\
T^+ &\defeq \big\{ i \in T : \lVert \vv_i \rVert_2^2 \leq r^2 \big\}
\qquad\qquad
T^- \defeq \big\{ i \in T : \lVert \vv_i \rVert_2^2 \geq r^2 \big\}
\end{align*}

\item If $\vmu(T^+) \leq \vmu(T^-)$, then let $A = S^+$ and $B = T^-$. Else, let $A = T^+$ and $B = S^-$.
\end{enumerate}
\end{quote}

\textbf{Output:} the cut $(A, B)$
}
}
\label{fig:directed-round-cut}
\end{figure}

The intuition for \dirroundcut is this: one does not need to route the flow bidirectionally across $(S, T)$, and subsequently produce an Eulerian demand graph. The SDP solution, and specifically the position of the median embedding vector, should dictate which direction flow needs to be routed in a given round the cut-matching game. We now prove our lower bound on the loss

\begin{lemma}
\label{lem.asym-cut-strat.loss}
Let $\vmu \in \Z_{\geq 0}^V$, $\mX \in \Delta_{\vmu}^{n \times n}$, and $\{ \vv_i \}_{i \in V} \subseteq \R^d$ be a vector embedding whose Gram matrix is $\mX$. For $\sigma > 0$, suppose \dirroundcut is given $(S, T)$ that are $\sigma$-separated with respect to $\vmu$ and $\{ \vv_i \}_{i \in V}$ as input, and outputs $(A, B)$. Then, any approximate directed matching response $D = \big( A \cup B, E_D, \vw^D \big)$ with respect to $A, B$ satisfies the following:
\begin{equation*}
\sum_{(i, j) \in E_D} w_{ij}^D \cdot \Big( \lVert \vv_i - \vv_j \rVert_2^2 + \lVert \vv_i \rVert_2^2 - \lVert \vv_j \rVert_2^2 \Big)
\geq \frac{\sigma}{8}
\end{equation*}
\end{lemma}
\begin{proof}
We assume, without loss of generality, that \dirroundcut outputs $(A, B)$ such that $A = S^+$ and $B = T^-$ as the analysis for the case where $A = T^+$ and $B = S^-$ is identical.

In step (1) of \dirroundcut, $r > 0$ is chosen to be the length of the median element in $S$ with respect to measure $\vmu$
, we have $\vmu(A) \geq \frac{\vmu(S)}{2}$. Furthermore, if $A = S^+$ and $B = T^-$, then $\vmu(T^-) \geq \vmu(T^+)$ and hence $\vmu(T^-) \geq \frac{\vmu(T)}{2}$. Consequently,
\begin{equation*}
\vmu(A) \vmu(B) \geq \frac{1}{4} \cdot \vmu(S) \vmu(T) \, .
\end{equation*}
Using the fact that $(S, T)$ are $\sigma$-separated, we derive for all $i \in A, j \in B$:
\begin{equation*}
\frac{\vmu(A) \vmu(B)}{\vmu(V)} \cdot \lVert \vv_i - \vv_j \rVert^2_2
\geq \frac{1}{4} \cdot \frac{\vmu(S) \vmu(T)}{\vmu(V)} \cdot \lVert \vv_i - \vv_j \rVert^2_2
\geq \frac{\sigma}{4} \cdot \sum_{i,j \in \binom{V}{2}} \frac{\mu_i \mu_j}{\vmu(V)} \cdot \lVert \vv_i - \vv_j \rVert_2^2
\end{equation*}
Because $\mX \in \Delta_{\vmu}^{n \times n}$, we have $\big\langle \mL(K_{\vmu}), \mX \big\rangle \geq 1$ and
\begin{equation*}
\frac{\sigma}{4} \cdot \sum_{i,j \in \binom{V}{2}} \frac{\mu_i \mu_j}{\vmu(V)} \cdot \lVert \vv_i - \vv_j \rVert_2^2
= \frac{\sigma}{4} \cdot \big\langle \mL(K_{\vmu}), \mX \big\rangle
\geq \frac{\sigma}{4}
\end{equation*}
Combining the above two calculations, we have the inequality
\begin{equation*}
\frac{\vmu(A) \vmu(B)}{\vmu(V)} \cdot \lVert \vv_i - \vv_j \rVert^2_2
\geq \frac{\sigma}{4}
\qquad\Longleftrightarrow\qquad
\lVert \vv_i - \vv_j \rVert^2_2
\geq \frac{\sigma}{4 \vmu(A)} \cdot \frac{\vmu(V)}{\vmu(B)}
\geq \frac{\sigma}{4 \vmu(A)}
\end{equation*}
where we have used the fact that $\vmu(B) \leq \vmu(V)$. Finally, because any approximate directed matching response is bipartite across $(A, B)$, and the total weight is large, we conclude
\begin{equation*}
\sum_{(i, j) \in E_D} w_{ij}^D \cdot \lVert \vv_i - \vv_j \rVert_2^2
\geq \frac{\sigma}{4 \vmu(A)} \sum_{(i, j) \in E_D} w_{ij}^D
= \frac{\sigma}{4 \vmu(A)} \cdot \vw \big( e(A, B) \big)
\geq \frac{\sigma}{8}
\end{equation*}
as required.
\end{proof}

We also bound the running time of \dirroundcut.

\begin{lemma}
\label{lem.asym-cut-strat.dirroundcut-runtime}
\dirroundcut runs in time $O(nd + n \log n)$ for $n = \lvert V \rvert$.
\end{lemma}
\begin{proof}
Computing the norm of a vector takes time $O(d)$, so all norm computations take time $O(nd)$ overall. The vectors then need to be sorted by norm value, which takes time $O(n \log n)$. After that, completing the algorithm takes $O(n)$ time. This yield the desired runtime.
\end{proof}

\subsubsection{Completing the Analysis for Asymmetric Cut Functions}

We are now ready to give our good cut strategy

\begin{figure}[h]
\centering
\noindent\fbox {
\parbox{45em} {
\alglabel{alg.dir-cut-strategy}
\linespread{1.5}\selectfont
\textbf{Algorithm~\thealgorithm} $\cA^{\rightarrow}_{\cut}$.

\textbf{Input:} vertex measure $\vmu \in \Z^V_{\geq 0}$, and directed approximate matching responses $D_1, \ldots, D_t$.

\textbf{Parameters:} step size $\eta > 0$.

\textbf{Do:} The following.
\begin{quote}
\begin{enumerate}[1.]
\item Sample $d = O \big( \log n \big)$ random unit vectors $\vu_1, \ldots, \vu_d \sim \cS^{n-1}$.

\item Denote $\mW_t = \mmwu_{\eta, \vmu}\big( \mL_{+}(D_1), \ldots, \mL_{+}(D_t) \big)$, and compute $\big\{ \hat{\vv}^{(t)}_i \big\}_{i \in V} \subseteq \R^d$ given by
\begin{equation*}
\big( \hat{\vv}^{(t)}_i \big)_j = \langle \mW_t \vu_j, \ve_i \rangle
\qquad\qquad
\forall \, i \in V \; j \in [d]
\end{equation*}
using the truncated Taylor series via~\cite{arora2007combinatorial}.

\item Execute $\roundcut\big( d, \big\{ \hat{\vv}^{(t)}_i \big\}_{i \in V} \big)$ to produce the cut $(S, T)$

\item Execute $\dirroundcut\big( (S, T), \big\{ \hat{\vv}^{(t)}_i \big\}_{i \in V} \big)$ to produce the cut $(A_t, B_t)$.
\end{enumerate}
\end{quote}

\textbf{Output:} The cut $(A_t, B_t)$.
}
}

\caption{The cut strategy $\cA_{\cut}^{\rightarrow}$ for asymmetric polymatroidal cut functions $\{ \delta_h \}_{h \in E}$.}
\end{figure}

\begin{lemma}
\label{lem.asym-cut-strat.good}
Let $G = (V, E, \vw, \vmu)$ be an input submodular hypergraph equipped with asymmetric polymatroidal cut functions $\{ \delta_h \}_{h \in E}$ such that $\lvert V \rvert = n$. Then $\cA^{\rightarrow}_{\cut}$ is $\Big( \Omega \big( \log^2 n \big), O \big( \log^3 n \big) \Big)$-good with probability $O(1)$.
\end{lemma}
\begin{proof}
We demonstrate that $\Psi^*_H \geq \Omega(\log^2 n)$ after $T = O \big( {\log^3 n} \big)$ rounds of the cut-matching game. Since the feedback matrices $\mL_{+}(D_t)$ are not positive semidefinite, we use regret bound~\eqref{eq.regret-bound.standard} from Theorem~\ref{thm.regret-bound}. Lemma~\ref{lem.asym-cut-strat.width} guarantees $\big\lVert \mM^{-1/2} \mL_{+}(D_t) \mM^{-1/2} \big\rVert_{\spectral} \leq 3$ for every $t > 0$, thus
\begin{align*}
\big\langle \tfrac{1}{T} \cdot \mL_{+}(H), \mU \big\rangle
&\geq \frac{1}{T} \sum_{t=1}^T \langle \mL_{+}(D_t), \mX_t \rangle - \frac{\eta}{T} \sum_{t=1}^T \, \big\lVert \mM^{-1/2} \mL_{+}(D_t) \mM^{-1/2} \big\rVert_{\spectral}^2 - \frac{\log n}{\eta T} \\
&\geq \frac{1}{T} \sum_{t=1}^T \langle \mL_{+}(D_t), \mX_t \rangle - 9 \eta - \frac{\log n}{\eta T}
\end{align*}
holding for every $0 < \eta < 1$.

Next, consider the $t$-th round, and let $\big\{ \vv_i^{(t)} \}_{i \in V}$ be the vector embedding whose Gram matrix is $\mX_t$. Theorem~\ref{thm:separated}
implies \roundcut outputs $\Omega\big( \frac{1}{\log n} \big)$-separated sets $(S, T)$. Consequently, Lemma~\ref{lem.asym-cut-strat.loss} implies that \dirroundcut produces a cut action $(A_t, B_t)$ such that $D_t$ satisfies
\begin{equation*}
\langle \mL_{+}(D_t), \mX_t \rangle
= \sum_{(i, j) \in E_{D_t}} w_{ij}^{D_t} \cdot \Big( \big\lVert \vv_i^{(t)} - \vv_j^{(t)} \big\rVert_2^2 + \big\lVert \vv_i^{(t)} \big\rVert_2^2 - \big\lVert \vv_j^{(t)} \big\rVert_2^2 \Big)
\geq \frac{C}{8 \cdot \log n}
\end{equation*}
for some absolute constant $C_1 > 0$. Consequently, the regret bound becomes
\begin{equation*}
\frac{1}{T} \sum_{t=1}^T \langle \mL_{+}(D_t), \mX_t \rangle - 9 \eta - \frac{\log n}{\eta T}
\geq \frac{C_1}{8 \cdot \log n}  - 9 \eta - \frac{\log n}{\eta T} \, .
\end{equation*}
Let $C_2$ be a constant satisfying: $\frac{C_1}{8} > 6/\sqrt{C_2}$. Choosing $\eta = \frac{1}{3 \sqrt{c_2} \log n}$ and $T = C_2 \cdot \log^3 n$, we have that:
\begin{equation*}
\frac{C_1}{8 \cdot \log n} - 9 \eta - \frac{\log n}{\eta T}
= {1
\over \log n} \left({C_1 \over 8}- {6 \over \sqrt{C_2}}\right) \geq \Omega\left({1 \over \log n}\right)
\end{equation*}
This implies that every $\mU \in \Delta_{\vmu}^{n \times n}$ satisfies:
\begin{equation*}
\big\langle \tfrac{1}{T} \cdot \mL_{+}(H), \mU \big\rangle
\geq \Omega\left({1\over \log n}\right)
\end{equation*}

Now, choosing $\mU = \mX^*$ to be the minimizer to~\eqref{eq.dir-cond-sdp}, and $\opt_H$ the optimal objective value for~\eqref{eq.dir-cond-sdp}, we have
\begin{equation*}
\opt_H
= \big\langle \mL_{+}(H), \mX^* \big\rangle
\geq \Omega\left({T \over \log n}\right) = \Omega\left({\log^2n}\right)
\end{equation*}
Finally, Lemma~\ref{lem.asym-cut-strat.relaxation} implies the required
\begin{equation*}
\Psi_H^* \geq \frac{\opt_H}{2} \geq \Omega\left({\log^2n}\right)
\end{equation*}

These guarantees can be shown to hold with super-constant probability by following the same argument as in Lemma~\ref{lem.sym-cut-strat.good}.
\end{proof}

% -------------------------------------
% COMPLETING THE PROOF
% -------------------------------------
\subsection{Completing the Cut Strategy: Proof of Theorem~\ref{thm.cm.good-cut-strategy}}

For the sake of completion, let us now prove Theorem~\ref{thm.cm.good-cut-strategy}.

\cmGoodCutStrategy*
\begin{proof}
Items (1) and (2) of statement are given by Lemmas~\ref{lem.sym-cut-strat.good}, and~\ref{lem.asym-cut-strat.good} respectively. To bound the running time, notice that by the sketch in Section~\ref{sec.alg-cm.sym-cut-strat.nearly-linear-mmwu}, the matrix multiplicative weight update can be computed in time bounded by $\tilde{O}\big( \sparsity(H) \big)$. Meanwhile, the \roundcut and \dirroundcut steps run in time $O(nd + n \log n)$ as given by Theorem~\ref{thm:separated}, and lemma~\ref{lem.asym-cut-strat.dirroundcut-runtime}. Since $d \leq O(\log n)$ for both $\cA_{\cut}^{\leftrightarrow}$ and $\cA_{\cut}^{\rightarrow}$, the total runtime is dominated by $\tilde{O}\big( \sparsity(H) \big)$ as required.
\end{proof}

% -------------------------------------
% VECTOR EMBEDDINGS
% -------------------------------------
\section{Constructing Separated Sets: Proof of Theorem~\ref{thm:separated}}
\label{sec.separated}

This section is devoted to proving Theorem~\ref{thm:separated} on the construction of separated sets. We will show that the algorithm \roundcut in Figure~\ref{fig:roundcut} satisfies the requirements of the theorem.
We denote by $V = [n]$ the index set for the vector embedding $\{ \vv_i \}_{i \in V} \subseteq \R^d$ in the theorem statement and define the following notions of expected vector and total variance of the embedding:
\begin{align*}
\vv_{\avg}
&\defeq \frac{1}{\mu(V)} \sum_{i \in V} \mu_i \vv_i \\
\Var_{\vmu} (\vv_i)
&\defeq \frac{1}{\mu(V)}\sum_{i \in V} \mu_i \cdot \lVert \vv_i - \vv_{\avg} \rVert_2^2
= \frac{1}{(\mu(V))^2} \sum_{\{i,j\} \in \binom{V}{2}} \mu_i \mu_j \|\vv_i - \vv_j\|^2 .
\end{align*}

To simplify our proof, we will assume, without loss of generality, that our embedding is both \emph{centered},
i.e., $\vv_{\avg} = 0,$
and \emph{normalized}, i.e., $\Var_{\vmu} (\vv_i) = 1.$
Under these assumptions, we will make use of the following lemma, which relates the contribution to the total variance of a set $\bar{S}$ to the that of its complement $S$.
\begin{lemma}[Fact 5.19 from~\cite{orecchia2012balanced_separator}]
\label{lem.variance}
Let $\{ \vv_i \}_{i \in V}$ be a centered and normalized vector embedding. We have:
\begin{equation*}
\sum_{i \in \overline{S}} \mu_i \cdot \lVert \vv_i \rVert_2^2 \geq \mu(S) \cdot  \bigg( 1 - \frac{1}{(\mu(V))^2} \cdot \sum_{i,j \in \binom{S}{2}} \mu_i \mu_j \cdot \lVert \vv_i - \vv_j \rVert_2^2 \bigg).
\end{equation*}
\end{lemma}

Our algorithm handles two cases, depending on whether there exist a set of small measure that contributes most of the variance of the embedding or not.
To distinguish these two cases, we consider long vectors, i.e. vectors with large contributions to the variance of the embedding, and in particular we define the set $R_t \subseteq V$ of vertices with long vector embeddings:
\begin{equation*}
R_t \defeq \Big\{ i \in V : \lVert \vv_i \rVert_2^2 \leq t \Big\}
\end{equation*}

We then say that an embedding is $(t, b)$-balanced if the variance due to vertices in $R_t$ captures a $b$-fraction of the total embedding variance, as per the following definition.

\begin{definition}[$(t, b)$-Balanced Embedding]\label{def.balanced-embedding}
Given $t > 0$, and $b > 0$, a centered normalized vector embedding $\{ \vv_{i} \in \R^d \}_{i \in V}$ is $(t, b)$-\emph{balanced} if the set $R_t$ satisfies:
\begin{equation*}
\frac{1}{(\mu(V))^2} \cdot \sum_{i,j \in \binom{R_t}{2}} \vmu(i) \vmu(j) \cdot \lVert \vv_i - \vv_j \rVert_2^2
\geq b \,.
\end{equation*}
\end{definition}

We distinguish the two cases based on whether the vector embedding is $(3,\nicefrac{1}{10})$-balanced.
We remark that, given $t$ and $b$, one can check if the embedding is $(t, b)$-balanced in $O(nd)$ time by computing distances to the average vector of $R_t$.
We start by describing the unbalanced case, which is where our main technical contribution lies.

\begin{figure}[h]
\centering
\noindent\fbox {
\parbox{45em} {
\alglabel{alg.round-cut}
\linespread{1.5}\selectfont
\textbf{Algorithm~\thealgorithm} \roundcut.

\textbf{Input:} A centered normalized vector embedding $\{ \vv_i \in \R^d \}_{i \in V}.$

\begin{quote}
\begin{enumerate}[1.]
\item If $\{ \vv_i \}_{i \in V} \subseteq \R^d$ is $\big( 3, \frac{1}{10} \big)$-balanced:
\begin{itemize}
\item Sample $\vg$ uniformly at random from $\cS^{d-1}$ and compute $r_i = \sqrt{d} \cdot \langle \vg, \vv_i \rangle$ for each $i \in V$.

\item Sort $r_i$. Assume without loss of generality that $r_1 \geq \ldots \geq r_n$ and define sweep cuts
\begin{equation*}
S_t^{up} = \{ i \in V : r_i \geq r_t \}\text{ and }S_t^{down} = \{ i \in V : r_i \leq r_t \}
\end{equation*}

\item Let $t_{start} = \min \{t : 3\cdot \vmu(S_t^{up})\geq \vmu(V)\}$ and $t_{end} = \max \{t : 3\cdot \vmu(S_t^{down})\geq \vmu(V) \}$

\item Output $\big( S_{t_{start}}^{up}, S_{t_{end}}^{down} \big)$.
\end{itemize}

\item If $\{ \vv_i \}_{i \in V} \subseteq \R^d$ is not $\big( 3, \frac{1}{10} \big)$-balanced:
\begin{itemize}
\item Sort the vectors by their length. Relabel vertices so that $\|\vv_1\| \geq \|\vv_2\| \geq \ldots \geq \|\vv_n\|$.
\item Let $T = \{i \in V : \|\vv_i\|^2 \leq \frac{3}{2}\}.$
\item Define sweep cuts $S_t \defeq \{ i \in V : \|\vv_i\| \geq \|\vv_t\| \}.$  For each sweep cut $S_t$ with $\|\vv_t \| \geq 3$:
\begin{itemize}
    \item if the following condition holds(see Equation~\ref{eq.condition} and Equation~\ref{eq.dist-lower}),
    $$
    \frac{\vmu(S_t)}{\vmu(V)} \cdot \frac{\|\vv_t\|^2}{8} \geq \frac{3}{100 \log(\vmu(V))},
    $$
    output $\frac{1}{100\log(\vmu(V))}$-separated sets $(S_t, T).$
\end{itemize}
\end{itemize}
\end{enumerate}
\end{quote}

}
}
\caption{The \roundcut\xspace algorithm.}
\label{fig:roundcut}
\end{figure}

\paragraph{The unbalanced case}
In the unbalanced case, the \roundcut algorithm sorts the embedding vectors according to their length and outputs separated sets $S$ and $T$
from the set of threshold cuts in this ordering. In particular, the set $T$ is always chosen to equal $R_{3/2}$, while the set $S$ is chosen by sweeping through all cuts in the ordering to find one that satisfies the required separation guarantee. We show that such a cut always exists.
The following lemma, which is implicit in previous works~\cite{ARV2009, agarwal2005log}, follows from Markov's Inequality and Lemma~\ref{lem.variance}. We prove it for completeness in Appendix~\ref{sec.appendix.omitted}.
\begin{restatable}{lemma}{unbalancedprevious}
\label{lem.unbalanced}
Let $\{ \vv_i \in \R^d \}_{i \in V}$ be a centered normalized vector embedding that is not $(3,\nicefrac{1}{10})$-balanced. Then:
$$
\mu(R_{3/2}) \geq \frac{1}{3} \cdot \mu(V) \, \textrm{ and } \, \sum_{i \in \overline{R_3}} \mu_i \cdot \|\vv_i\|^2 \geq \frac{3}{5} \cdot \mu(V).
$$
\end{restatable}
\noindent
Setting $T = R_{3/2}$, it remains to that there exists some $t \geq 3$ such that all pairs $i \in \overline{R_t}$, $j \in T$ satisfy
\begin{equation}
\label{eq.condition}
\lVert \vv_i - \vv_j \rVert_2^2 \geq \sigma \cdot \frac{3\mu(V)}{\mu(\overline{R_t})},
\end{equation}
where $\sigma = \nicefrac{1}{100 \log(\mu(V))} = \Omega(\nicefrac{1}{\log n}).$
For sake of contradiction, assume that no such $t$ exists. Then, for every $t \geq 3,$ there exists a pair $i \in \overline{R_t}, j \in T$ such that the squared distance between $\vv_i$ and $\vv_j$ is less than $\sigma \cdot \nicefrac{3\mu(V)}{\mu(\overline{R_t})}.$
At the same time, we  can lower bound such squared distance as follows:
\begin{equation}
\label{eq.dist-lower}
\|\vv_i - \vv_j\|^2 \geq (\|\vv_i\| - \|\vv_j\|)^2 \geq (\sqrt{t} - \sqrt{\nicefrac{3}{2}})^2 \geq \frac{t}{8}\,.
\end{equation}
This yields, for all $t \geq 3$:
$$
\mu(\overline{R_t}) \leq \frac{24\sigma}{t} \cdot \mu(V).
$$
Integrating over $t \in [3,M]$ where $M \defeq \max_{i \in V, \mu_i > 0} \|\vv_i\|^2$, we have:
$$
\sum_{i \in \overline{R_3} }\mu_i \|\vv_i\|^2 = \int_3^M \mu(\overline{R_t}) \, dt\, +\, 3 \cdot \mu(\overline{R_3}) \leq 24\sigma \cdot \mu(V) \cdot \int_3^M \frac{dt}{t} \, + 8 \sigma \cdot \mu(V) \leq (24 \log(M) + 8) \cdot \sigma \mu(V).
$$
At the same time, Lemma~\ref{lem.unbalanced} lower bounds the right hand side:
$$
\frac{3}{5} \cdot \mu(V) \leq \sum_{i \in \overline{R_3} }\mu_i \|\vv_i\|^2 \leq 32 \cdot \log(\mu(V)) \cdot \sigma\cdot \mu(V)\,,
$$
which implies
$
\sigma \geq \nicefrac{1}{54 \log(\mu(V))},
$
yielding the required contradiction.

On the right hand side, by the normalization assumption and the fact that $\vmu \in \Z^n_{\geq 0}$, we have $
M \leq \sum_{i \in V} \mu_i \|\vv_i\|^2 \leq \mu(V) = \poly(n).
$
 This completes the proof that a set $\overline{R_t}$ with the required separation guarantee exists. To find such cut, the \roundcut algorithm avoids computing the minimum distance between $\overline{R_t}$ and $T$ for each threshold cut by noticing that the guarantee holds if we replace such minimum distance with the quantity $\nicefrac{t}{8}$, as we did in Equation~\ref{eq.dist-lower}.

\paragraph{The balanced case}
In the balanced case, the set $R_3$ of short vectors captures a constant fraction of the vector embedding's variance and a large fraction of the total volume $\vmu(V)$ via Markov's inequality. One can subsequently expect that random projection coupled with a search for a balanced sweep cut will work as a bulk of the vectors will be well separated from one another. The balanced case analysis is thus very similar to the analysis of SDP-based approximation algorithms for balanced separator~\cite{orecchia2012balanced_separator}. In fact, it is simpler as one does not need to bound the conductance of such a cut and so we leave the proof of the following lemma to Appendix~\ref{sec.appendix.omitted.separated}.

\begin{restatable}{lemma}{robustSeparatedBalanced}
\label{lem.robust-separated.balanced}
Given $V = [n]$, let $\{ \vv_i \in \R_d \}_{i \in V} \subseteq \R^d$ be a centered normalized vector embedding. If $\{ \vv_i \}_{i \in V}$ is a $\big( 3, \frac{1}{10} \big)$-balanced embedding, then algorithm \roundcut~outputs a pair $(S,T)$ of $ \Omega\big( \frac{1}{\log n} \big)$-separated sets with high probability.
m\end{restatable}

\paragraph{Running Time} To bound the running time, consider the following. Checking if the embedding is $(3,\nicefrac{1}{10})$-balanced takes $O(nd)$ time. This can seen by noting that the inequality in Definition \ref{def.balanced-embedding} can be rewritten as:
\[
    {1\over \mu(R_t)}\sum_{i\in R_t} \mu(i) \Big\|\vv_i - {1\over \mu(R_t)}\sum_{j\in R_t} \mu(j)\vv_j\Big\|_2^2 \geq b.
\]

Computing $\ell_2^2$-distances and vector projections takes $O(d)$ time per vector, of which there are $n$. Sorting takes $O(n\log n)$ time. There are at most $O(n)$ sweep cuts to check in both the balanced and unbalanced case, and finding the appropriate cut in both cases takes $O(1)$ time per cut. Checking the condition:
\[
    \frac{\mu(S_t)}{\mu(V)} \cdot \frac{\|\vv_t\|^2}{8} \geq \frac{3}{100 \log(\mu(V))},
\]
requires time $O(d)$ per cut. Hence the overall running time is at most $O(nd + n \log n)$.

This completes the proof of Theorem \ref{thm:separated}.

% -------------------------------------
% ACKNOWLEDGEMENTS
% -------------------------------------
\section{Acknowledgements}

AC is supported by NSF DGE 2140001. Part of this work was also done while AC was a visiting student at Bocconi University. LO is supported by NSF CAREER 1943510. The authors would like to thank anonymous reviewers for suggesting related work on polymatroidal networks, Thatchaphol Saranurak for suggesting related work on the use of cut-matching games beyond partitioning graphs, and Jeffrey Negrea for assisting on topics related to online learning. AC would like to thank Luca Trevisan for discussion regarding HDX construction, flow embedding techniques for certifying expansion of random regular graphs, and support while visiting Bocconi.

\newpage

% -------------------------------------
% BIBLIOGRAPHY
% -------------------------------------
\bibliographystyle{plain}
\bibliography{main}

% -------------------------------------
% APPENDIX
% -------------------------------------
\newpage
\appendix
\section{Omitted Proofs}
\label{sec.appendix.omitted}

In this section, we prove some technical lemmata used in the body of the paper.

\subsection{\nameref{sec.results}}
\label{sec.appendix.omitted.results}
\lemcontinuous*
\begin{proof}[{\bf Proof of Lemma \ref{lem.continuous-to-discrete-equivalence}}]
Observe that the denominator of the ratio-cut objective is a monotone submodular cut function applied over the entire vertex set
\begin{equation*}
\delta_{\vmu}(S) = \min \{ \vmu(S), \vmu(\bar{S}) \},
\end{equation*}
Consequently, for every $S \subseteq V$:
\begin{equation*}
\Psi_{G}(S)
=  \frac{\sum_{h\in E}w_h\delta_h(S\cap h)}{\min \{ \vmu(S), \vmu(\bar{S}) \}}
= \frac{\delta_G(S)}{\delta_{\vmu}(S)} = { \overline{\delta}_G(\ones^S)\over \overline{\delta}_{\vmu}(\ones^S)},
\end{equation*}

\noindent
which gives:
\[
    \Psi_G^* \geq \min_{\vx \in \R^n} {\overline{\delta}_G(\vx)\over  \overline{\delta}_{\vmu}(\vx)}.
\]
We now prove the opposite inequality by showing the second part of the lemma. Notice that the value of the ratio in the right-hand side of~\eqref{eq.rc-noncvx} is invariant under shifting all the coordinates by a fixed constant.
Furthermore, both the numerator and the denominator of the right-hand side of~\eqref{eq.rc-noncvx} are homogeneous and hence their ratio is invariant under scaling $\vx$ by a constant, i.e., for any $u \in \R$:
\begin{equation*}
{\overline{\delta}_G(u\cdot \vx) \over \overline{\delta}_{\vmu}(u \cdot \vx)} ={u \cdot \overline{\delta}_G(\vx) \over u \cdot \overline{\delta}_{\vmu}(\vx)} = {\overline{\delta}_G(\vx) \over \overline{\delta}_{\vmu}(\vx)}.
\end{equation*}
Hence, for any $\vx$, by letting $\hat{\vx}$ be defined as:
\[
    \hat{x_i} = {x_i - \min_j x_j \over \max_{j,k}(x_j - x_k)},
\]
\noindent
we have  $\max_i \hat{x}_i = 1$, $\min_i \hat{x}_i = 0$, and:
\[
   {\overline{\delta}_G(\vx) \over \overline{\delta}_{\vmu}(\vx)} = {\overline{\delta}_G(\hat{\vx}) \over \overline{\delta}_{\vmu}(\hat{\vx})}.
\]
We then have, as per Propositions 3.1 and 3.1 in \cite{bach2013learning}:
\[
    {\overline{\delta}_G(\vx) \over \overline{\delta}_{\vmu}(\vx)}= {\overline{\delta}_G(\hat{\vx}) \over \overline{\delta}_{\vmu}(\hat{\vx})} = {\E_{t\sim [0,1]} [\delta_G(S_t(\hat{\vx}))]\over \E_{t\sim [0,1]} [\delta_{\vmu}(S_t(\hat{\vx}))]}.
\]
Hence, there must exist some $t \in [0,1]$ such that:
\[
    {\overline{\delta}_G(\vx) \over \overline{\delta}_{\vmu}(\vx)} \geq { \delta_G(S_t(\hat{\vx}))\over  \delta_{\vmu}(S_t(\hat{\vx}))}.
\]
This gives:
\[
    \min_{\vx \in \R^n} {\overline{\delta}_G(\vx) \over \overline{\delta}_{\vmu}(\vx)} = \Psi_G^*,
\]
as needed for the first part of the lemma. Moreover, computing and checking all of the sets $\{S_t(\vx)\}_{t\in [0,1]}$ only requires time $O(n \log n)$ for sorting the entries of $\vx$ and time $O(\sparsity(G) + |V|)$ for evaluating the numerator and denominator.
\end{proof}

\subsection{\nameref{sec.polymatroidal}}
\label{sec.appendix.omitted.polymatroidal}
\submodularsymmetrization*
\begin{proof}
Let $S,T \subseteq V$ and denote with $\overline{S}$ and $\overline{T}$ their complements in $V$. We consider four possible cases, corresponding to four ways of determining $\delta(S)$ and $\delta(T)$. In the first two cases, we exploit the submodularity of $F$ and $G$.
\begin{description}
\item[Case 1:] $F(S) \leq G(\overline{S})$ and $F(T) \leq G(\overline{T})$, which implies $\delta(S) = F(S)$ and $\delta(T) = F(T).$ We have:
\begin{align*}
    \delta(S \cup T) + \delta(S \cap T) &= \min\{F(S \cup T), \; G(\overline{S} \cap \overline{T})\} + \min\{F(S \cap T), G(\overline{S} \cup \overline{T})\}\\
    &\leq F(S\cup T) + F(S\cap T)\\
    &\leq F(S) + F(T) = \delta(S) + \delta(T).
\end{align*}
\item[Case 2:]
$G(\overline{S}) \leq F(S)$ and $G(\overline{T}) \leq F(T)$, which implies $\delta(S) = G(\overline{S})$ and $\delta(T) = G(\overline{T}).$ We have:
\begin{align*}
    \delta(S \cup T) + \delta(S \cap T) &= \min\{F(S \cup T), \; G(\overline{S} \cap \overline{T})\} + \min\{F(S \cap T), G(\overline{S} \cup \overline{T})\}\\
    &\leq G(\overline{S} \cap \overline{T}) + G(\overline{S} \cup \overline{T})\\
    &\leq G(\overline{S}) + G(\overline{T}) = \delta(S) + \delta(T).
\end{align*}

In the last two cases, we rely on the monotonocity of $F$ and $G$.
\item[Case 3:] $F(S) \leq G(\overline{S})$ and $G(\overline{T}) \leq F(T)$, which implies $\delta(S) = F(S)$ and $\delta(T) = G(\overline{T})$. We have:
\begin{align*}
    \delta(S \cup T) + \delta(S \cap T) &= \min\{F(S \cup T), \; G(\overline{S} \cap \overline{T})\} + \min\{F(S \cap T) , G(\overline{S} \cup \overline{T})\}\\
    &\leq G(\overline{S}\cap \overline{T}) + F(S \cap T) \\
    &\leq G(\overline{T}) + F(S)   = \delta(S) + \delta(T).
\end{align*}

\item[Case 4:] $G(\overline{S}) \leq F(S)$ and $F(T) \leq G(\overline{T})$, which implies  $\delta(S) = G(\overline{S})$ and $\delta(T) = F(T)$. We have:
\begin{align*}
    \delta(S \cup T) + \delta(S \cap T) &= \min\{F(S \cup T), \; G(\overline{S} \cap \overline{T})\} + \min\{F(S \cap T) , G(\overline{S} \cup \overline{T})\}\\
    &\leq G(\overline{S}\cap \overline{T}) + F(S \cap T) \\
    &\leq G(\overline{S}) + F(T)   = \delta(S) + \delta(T).
\end{align*}
\end{description}
It follows that $\delta$ is submodular, as we have shown that for every $S,T\subseteq V$:
\[
    \delta(S\cup T) + \delta (S \cap T)  \leq \delta(S) + \delta(T).
\]
\end{proof}

\formoflovaszextension*
\begin{proof}
Since $F^-_h$ and $F^+_h$ are monotone, for any $\vx \in \R^n$ there exists a value $z^* = z^*(\vx)$ such that:
\[
    F^-_h(\{i\in V \mid \vx(i) \geq z\}) \leq F^+_h(\{i\in V \mid \vx(i) < z\}),
\]
for all $z\geq z^*$, and:
\[
    F^-_h(\{i\in V \mid \vx(i) \geq z\}) \geq F^+_h(\{i\in V \mid \vx(i) < z\}),
\]
for all $z < z^*$.
We then have, by Proposition 3.1(c) in \cite{bach2013learning}:
\begin{align*}
\overline{\delta}_h (\vx) &= \int_{-\infty}^\infty \delta(\{i\in V \mid \vx(i) \geq z\}) dz \\
&=\int_{\infty}^{z^*(\vx)} F^+_h(\{i \in V \mid \vx(i) < z\}) dz + \int_{z^*(\vx)}^\infty F^-_h(\{i \in V \mid \vx(i) \geq z\})dz\\
&= \min_{\nu \in \R} \int_{\infty}^0 F^+_h(\{i \in V \mid \vx(i) -\nu < z\}) dz + \int_{0}^\infty F^-_h(\{i \in V \mid \vx(i) -\nu \geq z\})dz\\
&= \min_{\nu \in \R}\bar{F}^-_h((\vx - \nu \vone)_+ ) + \bar{F}^+_h((\vx - \nu \vone)_-),
\end{align*}
as needed.
\end{proof}

\formoflovaszextensionsym
\begin{proof}
    Since $F_h$ is submodular, its Lov\'asz extension $\bar{F}_h$ is convex, and hence, by Fact~\ref{fact.form-of-lovasz-extension}:
    \begin{align*}
        \bar{\delta}_h(\vx) &= \min_{\nu \in \R}\bar{F}_h((\vx - \nu \vone_h)_+ ) + \bar{F}_h((\vx - \nu \vone_h)_-) =  \min_{\nu \in \R} 2 \cdot \left({1\over 2}\bar{F}_h((\vx - \nu \vone_h)_+ ) + {1\over 2}\bar{F}_h((\vx - \nu \vone_h)_-)\right)\\
        &\geq  \min_{\nu \in \R} 2 \cdot \bar{F}_h\left({1\over 2}(\vx - \nu \vone_h)_+  + {1\over 2}(\vx - \nu \vone_h)_-\right)\\
        &=  \min_{\nu \in \R} 2 \cdot \bar{F}_h\left({1\over 2}|\vx - \nu \vone_h|\right)\\
        &=  \min_{\nu \in \R}\bar{F}_h\left(|\vx - \nu \vone_h|\right),
    \end{align*}
    where in the above we applied Jensen's Inequality and the fact that $\bar{F}_h$ is positive homogenous (Proposition 3.1 (e) from \cite{bach2013learning}). On the other hand, since $F_h$ is monotone, we have:
    \begin{align*}
        \bar{\delta}_h(\vx) &= \min_{\nu \in \R}\bar{F}_h((\vx - \nu \vone_h)_+ ) + \bar{F}_h((\vx - \nu \vone_h)_-) \leq \min_{\nu \in \R}\bar{F}_h(|\vx - \nu \vone_h|) + \bar{F}_h(|\vx - \nu \vone_h|) = 2\cdot \min_{\nu \in \R}\bar{F}_h(|\vx - \nu \vone_h|),
    \end{align*}
    where the inequality follow from Fact~\ref{fct.lovaszmonotone}. This completes the proof.
\end{proof}

\flowembedding*
\begin{proof}
Consider the hypergraph flows $\{\vY^e\}_{e \in E^H}$ associated with the embedding $H \preceq_{\rho} G$ and let $\vu \in \R^{E_G}$ be an arbitrary vector over the hyperedges. As each $\vY^{e}$ routes demand $w^H_{e}$ between the endpoints of $e=\{u,v\} \in E_H$, we have
\begin{align*}
w^H_{e} \cdot (\vx_u - \vx_v)_+ &=
(\langle \dem(\vY^e), \vx \rangle)_+ =
\left(\sum_{h \in E_G}  \langle \vy^e_h , \vx_h\rangle\right)_+ =
\left(\sum_{h \in E_G}  \langle \vy^e_h , \vx_h - u_h \vone_h \rangle\right)_+ \\
& \leq \sum_{h \in E_G}  \langle (\vy^e_h)_+ , (\vx_h - u_h \vone_h)_+ \rangle +  \langle (\vy^e_h)_- , (\vx_h - u_h \vone_h)_- \rangle.
\end{align*}
\noindent
where the last equality relies on the fact that $\vy^e_h \in \R^h \perp \vone_h.$
Summing over all edges in $E_h$ yields:
\begin{align*}
\sum_{e=\{i,j\} \in E_H} w^H_{ij} \cdot (\vx_i - \vx_j)_+ \leq
&\sum_{e \in E_H} \sum_{h \in E_G}    \langle (\vy^e_h)_+ , (\vx_h - u_h \vone_h)_+ \rangle + \langle (\vy^e_h)_- , (\vx_h - u_h \vone_h)_- \rangle, \\
= &
 \sum_{h \in E_G}   \left\langle \sum_{e \in E_H}(\vy^e_h)_+ , (\vx_h - u_h \vone_h)_+\right \rangle +  \left\langle \sum_{e \in E_H}(\vy^e_h)_- , (\vx_h - u_h \vone_h)_- \right \rangle\\
\leq & \rho \cdot \sum_{h \in E_G} w^G_h \cdot \left(\bar{F}_h^-((\vx_h - u_h \vone_h)_+) + \bar{F}_h^+((\vx_h - u_h \vone_h)_-)\right)
\end{align*}
The second inequality follows from Definition~\ref{def.flow-embedding} and the characterization of Lov\'asz extensions by the positive submodular polytope (see Proposition 3.4 in~\cite{bach2013learning}).
Finally, the first statement in the theorem follows by setting the arbitrary $\vu$ to be the required minimizer. The specialization to cuts follows by plugging in $\vx = \vone^S$.

In the symmetric case, the capacity constraints in the definition become simply, for all $h \in E_G$:
\begin{equation*}
\sum_{e \in E_h} (\vy^e_h)_+ \,, \sum_{e \in E_h} (\vy^e_h)_-  \in \rho \cdot  w_h \cdot \cP_+(F_h).
\end{equation*}
As a result, the orientation of each arc $e=(i,j) \in E_H$ can be discarded in this case, as the flow $\vY^e$ can be oriented arbitrarily while satisfying the capacity constraint. Therefore, we can repeat the same analysis for the flow embedding given by $\{-\vY^e\}_{e \in E_H}$. Summing the positive and negative parts, we obtain the required statement.
\end{proof}

\flowdecomposition*
\begin{proof}[{\bf Proof Theorem~\ref{theorem.hypergraph-flow-decomposition}}]
Consider the hypergraph flow as a flow over the factor graph $\hat{G}.$ By a standard application of dynamic trees (see for example Lemma 3.8 in~\cite{shermanBreakingMulticommodityFlow2009}) to the computation of flow-path decompositions~\cite{ahujaravindrakandmagnantithomaslandorlinjamesbNetworkFlowsTheory1993} over $\hat{G}$ , we can compute, in time $\tilde{O}(|\hat{E}|_G)=\tilde{O}(\sparsity(G))$, a graph $H=(V,E_H, \vw^H),$  such that $|E_H|=\tilde{O}(\sparsity(G))$ and there exist hypergraph flows $\{\vY^e\}_{e \in E_H}$ in $G$ with the following properties:
\begin{itemize}
    \item $\sum_{e \in E_H} \vY^e = \vY,$
    \item For any $e \in E_H$, for any factor graph edge $\{i,h\}$, the flow $(\vY^e_h)_i$ has the same sign as $(\vY_h)_i$. In particular, for all $h \in E_G$:
    \begin{equation}\label{equation.absolute-decomposition}
    \sum_{e \in E_H} (\vy^e_h)_+  = (\vy_h)_+ \; \textrm{ and }\; \sum_{e \in E_H} (\vy^e_h)_-  = (\vy_h)_-
    \end{equation}
    \item For all $e=(i,j) \in E^H$, we have $\dem_i(\vY) \geq 0$ and $\dem_j(\vY) < 0.$  The flow $\vY^e$ routes $w^H_{e}$ units of flow between vertex $i$ and vertex $j$, i.e.,
    $$
    \dem(\vY^e)= w^H_{e}(\vone_i - \vone_j).
    $$
\end{itemize}
\noindent
The last point proves the required statements about the bipartiteness of $H$ and its degrees. To complete the proof, we verify that $H \preceq_{\cong_G(Y)} G$.
By the definition of congestion in Equation~\ref{eq.congestion}, we have that, for all $h \in E_G:$
$$
\vy_h \in \cong_{G(\vY)}
\cdot w_h \cdot \cB(\delta_h).
$$
By the characterization of the base polytope in Equation~\ref{eq.base-polytope}, this implies that:
$$
(\vy_h)_+ \in \cong_{G(\vY)}
\cdot w_h \cdot \cP(F^-_h) \; \textrm{ and } \; (\vy_h)_- \in \cong_{G(\vY)}
\cdot w_h \cdot \cP(F^+_h).
$$
Together with Equation~\ref{equation.absolute-decomposition} and Definition~\ref{def.flow-embedding}, this completes the proof.
\end{proof}

\subsection{\nameref{sec.sdp-algorithm}}
\label{sec.appendix.omitted.sdp-algorithm}
\symrelaxation*
\begin{proof}
Given any cut $(S,\overline{S})$ with $\mu(S) \leq \mu(\overline{S})$, for each $i \in V$,
set the vector embedding of the vertices to
\begin{equation*}
\vv_i = \begin{cases}
\begin{pmatrix} \sqrt{\frac{\mu(V)}{\mu(S) \mu(\bar{S})}} & 0 & \ldots & 0 \end{pmatrix}^{\top} & \mbox{if } i \in S \\
\vzero & \mbox{otherwise.}
\end{cases}
\end{equation*}

\noindent
And for each hyperedge $h\in E$:
\[
    \vv_h = \begin{cases}
            \boldsymbol{0} & \text{if } \delta_h(S) = F_h(S),\\
            \left(\sqrt{\mu(V) \over \mu(S)\mu(\overline{S})}, 0 , \cdots , 0 \right)^\top &\text{if } \delta_h(S) = F_h(\overline{S}).
        \end{cases}
\]
We then have:
\[
    \vd_i^h = \norm{\vv_i - \vv_h}^2.
\]

It is easy to see that these vector embedding satisfies the triangle inequality constraints. For the variance constraint, we have:
\[
    \sum_{\{i,j\} \subseteq V} {\mu_i \mu_j\over \mu(V)} \norm{\vv_i - \vv_j}^2 = \sum_{\substack{i\in S\\ j\in \overline{S}}} {\mu_i\mu_j\over \mu(V)} \cdot {\mu(V) \over \mu(S) \cdot \mu(\overline{S})} =1.
\]

Furthermore, this solution has objective value:
$$
\sum_{h \in E} w_h \bar{F}_h(\vd^h) = \frac{\mu(V)}{\mu(S) \mu(\bar{S})} \cdot \sum_{h \in E} w_h \delta_h(S\cap h) \leq 2 \cdot \frac{\sum_{h \in E} w_h \delta_h(S \cap h) }{\min\{\mu(S), \mu(\bar{S})\}} = 2 \Psi_G(S).
$$
This shows that the combinatorial solution $(S,\overline{S})$ can be mapped to a feasible solution to the \ref{eqn.vector-program} while preserving the objective to within a factor of two.
\end{proof}

\genrelaxation*
\begin{proof}
    Let $(S,\overline{S})$ be a candidate solution to the minimum hypergraph ratio-cut problem satisfying $\mu(S)
    \leq \mu(\overline{S})$, with objective value $\kappa$, i.e.:
    \[
        \Psi_G(S) = \frac{\sum_{h \in E} w_h \delta_h(S\cap h)}{\mu(S)}  = \kappa.
    \]
    We can construct a solution to the~\ref{eqn.gen-vector-program} program with a objective value $4\kappa$ as follows. Partition $E$ into $E^+ \cup E^-$ where:
    \[
        E^+ \defeq \{h\in E \mid \delta_h(S) = F_h^+(\overline{S})\} \hspace{0.5cm}\text{and}\hspace{0.5cm} E^- \defeq \{h\in E \mid \delta_h(S) = F_h^-(S)\}.
    \]
    We can consider:
    \[
        \vv_i = \begin{cases}
                   \left(\sqrt{\mu(V) \over \mu(S)\mu(\overline{S})}, 0 , \cdots , 0 \right)^\top &\text{if } i\in S,\\
                   \boldsymbol{0} & \text{if }i\in \overline{S},
                \end{cases}
        \hspace{1cm}\text{and}\hspace{1cm}
        \vv_h = \begin{cases}
            \boldsymbol{0} & \text{if } h\in E^-,\\
            \left(\sqrt{\mu(V) \over \mu(S)\mu(\overline{S})}, 0 , \cdots , 0 \right)^\top &\text{if } h \in E^+.
        \end{cases}
    \]
    and for any $h\in E$ let $\vd_h^+$ and $\vd_h^-$ be defined in terms of $\{\vv_i\}_{i\in V}$ and $\{\vv_h\}_{h\in E}$ so as to satisfy the equality constraints in the \ref{eqn.gen-vector-program} program. One can check that:
    \[
        \vd_h^+(i) = \begin{cases}
            2 \cdot{\mu(V)\over\mu(S) \mu(\overline{S})}&\text{if }i\in \overline{S}\text{ and }h\in E^+,\\
            0 & \text{otherwise,}
        \end{cases}
        \hspace{0.5cm}\text{ and }\hspace{0.5cm}\vd_h^-(i) = \begin{cases}
            2 \cdot{\mu(V)\over\mu(S) \mu(\overline{S})}&\text{if }i\in S\text{ and }h\in E^-,\\
            0 & \text{otherwise.}
        \end{cases}
    \]

    It easy to see that all the triangle inequality constraints are satisfied. We can also verify that the variance constraint is satisfied:
    \begin{align*}
        \sum_{\{i,j\} \subseteq V} {\mu(i) \mu(j) \over \mu(V)} \cdot \|\vv_i - \vv_j\|_2^2 &= \sum_{\substack{i \in S\\ j \in \overline{S}}} {\mu(i)\mu(j) \over \mu(V)} \cdot {\mu(V) \over \mu(S)\cdot \mu(\overline{S})} = 1.
    \end{align*}

    Hence, the vectors defined above form a feasible solution to the \ref{eqn.gen-vector-program} program. The objective value of this solution is:
    \begin{align*}
    \sum_{h\in E^-} w_h \overline{F_h^-}(\vd_h^-) + \sum_{h\in E^+} w_h \overline{F_h^+}(\vd_h^+) &=2 \cdot \left(\sum_{h\in E^-} w_h {\mu(V) \over{\mu(S)\mu(\overline{S})}}\overline{F_h^-}(\vone_S) + \sum_{h\in E^+} w_h {\mu(V) \over{\mu(S)\mu(\overline{S})}} \overline{F_h^+}(\vone_{\overline{S}})\right)\\
        &= {2 \mu(V) \over{\mu(S)\mu(\overline{S})}} \cdot \left(\sum_{h\in E^-} w_h {F_h^-}(S\cap h) + \sum_{h\in E^+} w_h {F_h^+}(\overline{{S}}\cap h)\right)\\
        &={2 \mu(V) \over{\mu(S)\mu(\overline{S})}} \cdot \left(\sum_{h\in E} w_h {\delta_h}(S\cap h)\right)\\
        &\leq 4  \cdot {\sum_{h\in E} w_h {\delta_h}(S\cap h)\over{\mu(S)}}\\
        &= 4\kappa.
    \end{align*}
    completing the proof.
\end{proof}

\directedembedding*
\begin{proof}
    Since the relaxation is invariant to translations, we will assume, without loss of generality, that:
    \begin{equation}\label{eqn.zero-mean-embedding}
        \sum_{i\in V} \mu(i)\vv_i = 0,
    \end{equation}
    and that:
    \begin{equation}\label{eq.normalized-variance}
        \sum_{i,j \in V} {\mu(i) \mu(j) \over\mu(V)}\norm{\vv_i - \vv_j}_2^2 = 1.
    \end{equation}

    We divide the proof in two cases corresponding to the case in which the embedding is balanced and unbalanced respectively. In Case 1, the embedding $\{\vv_i\}_{i\in V}$ satisfies:
    \[
        \sum_{\{i,j\}\in \binom{R_3}{2}} {\mu(i) \mu(j) \over \mu(V)} \norm{\vv_i - \vv_j}^2 \geq 1/10,
    \]
    where:
    \[
        R_{t} \defeq \left\{i\in V \mid \norm{\vv_i}^2 \leq {t\over \mu(V)}\right\}.
    \]
    In this case, it is implicit in the results of Arora Rao and Vazirani~\cite{ARV2009}, that there exists polynomial-time computable sets $S,T\subseteq V$ satisfying $\mu(S),\mu(T) = \Omega(\mu(V))$, and:
    \begin{equation}\label{eqn.separation-of-sets}
        \min_{\substack{i\in S\\ j \in T}} \norm{\vv_i - \vv_j}_2^2 \geq \Omega\left({1 \over \mu(V) \sqrt{\log n}}\right).
    \end{equation}

    We construct the map $\phi$ explicitly as follows. Let $S$ and $T$ be the sets satisfying~\eqref{eqn.separation-of-sets}.
    Let $r \in \R$ be a value such that:
    \[
        \mu(\{ i\in S \mid \norm{\vv_i}_2 \leq r\} ) = \mu(\{ i\in S \mid \norm{\vv_i}_2 \geq r\}),
    \]

    \noindent
    and define:
    \[
        S^- \defeq \{ i\in S \mid \norm{\vv_i}_2 \leq r\} , \hspace{1cm} S^+ \defeq \{ i\in S \mid \norm{\vv_i}_2 \leq r\},
    \]
    and:
    \[
        T^- \defeq \{ i\in S \mid \norm{\vv_i}_2 \leq r\} , \hspace{1cm} T^+ \defeq \{ i\in S \mid \norm{\vv_i}_2 \leq r\}.
    \]

    Suppose, without loss of generality, that $\mu(T^+) \geq \mu(T^-)$\footnote{If this condition does not hold, the rest of the argument still holds by exchanging the roles of $T^+$ with $T^-$ and $S^+$ with $S^-$.}. Note, in particular, that this implies that $\mu(S^-) \geq \mu(S)/2$ and $\mu(T^-) \geq \mu(T)/2$. Let:
    \[
        \phi(\vv_i) = \min_{j \in T^+} \norm{\vv_i - \vv_j}^2_2 + \norm{\vv_i}_2^2 -\norm{\vv_j}_2^2.
    \]
     As a simple consequence of the triangle inequality constraints, we have, for any $i,k \in V\cup E$:
    \[
        \phi(\vv_i) - \phi(\vv_k) \leq \norm{\vv_i - \vv_j}_2^2 + \norm{\vv_i}_2^2 - \norm{\vv_j}_2^2.
    \]
    Let $\vx \in \R^V$ be the vector given by $\vx(i) = \phi(\vv_i)$. Then:
    \begin{align*}
        \sum_{\{i,j\}\subseteq V} \mu(i) \mu(j) |\vx(i) - \vx(j)| &\geq \sum_{\{i,j\}\subseteq V} \mu(i) \mu(j) (\vx(i) - \vx(j))_+ \geq \sum_{\substack{i\in S^-\\ j\in T^+}} \mu(i) \mu(j) (\vx(i) - \vx(j))_+\\
        &=\sum_{\substack{i\in S^-\\ j\in T^+}} \mu(i) \mu(j) (\vx(i))_+ = \mu(T^+) \sum_{i\in S^-} \mu(i) (\vx(i))_+\\
        &= \mu(T^+) \sum_{i\in S^-} \mu(i) \left(\min_{j\in T^+}\norm{\vv_i -\vv_j}_2^2 +\norm{\vv_i}_2^2 - \norm{\vv_j}_2^2\right)_+\\
        &\geq \mu(T^+) \sum_{i\in S^-} \mu(i) \left(\min_{j\in T^+}\norm{\vv_i -\vv_j}_2^2 +r^2 - r^2\right)_+\\
        &\geq \mu(T^+) \sum_{i\in S^-} \mu(i) \left(\min_{j\in T^+}\norm{\vv_i -\vv_j}_2^2 \right)_+\\
        &\geq \mu(T^+) \mu(S^-) \cdot \min_{\substack{i\in S^-\\ j\in T^+}}\left(\norm{\vv_i -\vv_j}_2^2
        \right)_+\\
        &= \mu(T^+) \mu(S^-) \cdot\min_{\substack{i\in S^-\\ j\in T^+}}\norm{\vv_i -\vv_j}_2^2\\
        &\geq \mu(T^+) \mu(S^-) \cdot \min_{\substack{i\in S\\ j\in T}}\norm{\vv_i -\vv_j}_2^2\\
        &\geq {1\over 4}\mu(T) \mu(S) \cdot \min_{\substack{i\in S\\ j\in T}}\norm{\vv_i -\vv_j}_2^2\\
        &\geq \Omega(\mu(V)^2) \cdot \Omega\left({1 \over \mu(V) \sqrt{\log n}}\right)\\
        &\geq \Omega \left(\mu(V) \over \sqrt{\log n}\right).
    \end{align*}

    Giving that:
    \[
         {\sum_{\{i,j\}\subseteq V} \mu(i) \mu(j) |\vx(i) - \vx(j)| \over \sum_{\{i,j\}\subseteq V} \mu(i) \mu(j) \norm{\vv_i - \vv_j}^2_2} =  {1  \over \mu(V)}\sum_{\{i,j\}\subseteq V} \mu(i) \mu(j) |\vx(i) - \vx(j)|  \geq \Omega\left({1 \over \sqrt{\log n}}\right).
    \]

    Hence, the map $\phi$ satisfies conditions 1 and 2 in the statement of the lemma. Moreover, it is easy to see that this map can be computed in polynomial time. This completes the proof for Case 1.

    For Case 2, we consider what happens when Case 1 doesn't hold, i.e.:
    \[
        \sum_{\{i,j\}\in \binom{R_3}{2}} {\mu(i) \mu(j)\over \mu(V)} \norm{\vv_i - \vv_j}^2 < 1/10.
    \]
    Here, we have:
    \[
        {\mu(R_{3/2}) \over \mu(V)} \geq {1\over 3}  \hspace{1cm}\text{and}\hspace{1cm}\sum_{i\in V\setminus R_3} \mu_i \cdot \norm{\vv_i}^2 \geq {3\over 5}.
    \]
    Where the first inequality follows by \eqref{eqn.zero-mean-embedding}, \eqref{eq.normalized-variance} and Markov's inequality.
    Let $T = R_{3/2}$ and $S = V\setminus R_{3}$. Consider the embedding:
    \[
        \phi(\vv) = \min_{\vu\in T} \norm{\vv- \vu}^2_2 + \norm{\vv}_2^2 - \norm{\vu}_2^2.
    \]
    Again, it is easy to see that the map $\phi$ satisfies condition 1 of the lemma. Moreover, note that, for any $\vv_i \in S$:
    \[
        |\phi(\vv_i)| \geq |\norm{\vv_i}^2_2 - \norm{\vu}_2^2|,
    \]
    for some $\vu \in T$ and hence:
    \[
        |\phi(\vv_i)| \geq |\norm{\vv_i}^2_2 - \norm{\vu}_2^2| \geq {1\over 2} \norm{\vv_i}^2_2.
    \]
    Let $\vx\in \R^V$ be the vector given by $\vx(i) = \phi(\vv_i)$. We then have:
    \begin{align*}
        {\sum_{\{i,j\}\subseteq V} \mu(i) \mu(j) |\vx(i) - \vx(j)| \over \sum_{\{i,j\}\subseteq V} \mu(i) \mu(j) \norm{\vv_i - \vv_j}^2_2}&= {1\over \mu(V)}\sum_{\{i,j\}\in \binom{V}{2}} {\mu(i) \mu(j)} |\vx(i) -\vx(j)| = {\mu(T)\over \mu(V)} \sum_{i \in S} \mu(i)|\vx(i)| \\
        &\geq {\mu(T) \over 2\mu(V)} \sum_{i \in S} \mu_i\norm{\vv_i}_2^2 \\
        &\geq {1 \over 2}\cdot {1\over 3} \cdot {3 \over 5 } = \Omega(1).    \end{align*}

    Note that the bound obtained in this case is tighter, i.e. $\Omega(\nicefrac{1}{\sqrt{\log n}})$ has been replaced by a constant.
\end{proof}

\subsection{\nameref{sec.ci}}
\label{sec.appendix.omitted.ci}
\improveduality*
\begin{proof}[{\bf Proof of Lemma~\ref{lem.rc-improve.strong-duality}}]
Let us proceed with taking the conjugate dual. Starting with~\eqref{eq.rc-improve.primal}:
\begin{align*}
\min_{\substack{\vx \in \R^V \\ \langle \vs, \vx \rangle_{\mu} = 1}} \sum_{h \in E} w_h \cdot \bar{\delta}_h(\vx)
&= \min_{\vx \in \R^V} \sum_{h \in E} w_h \cdot \bar{\delta}_h(\vx) + \max_{\alpha \in \R} \Big( 1 - \langle \vs, \vx \rangle_{\mu} \Big) \\
&= \min_{\vx \in \R^V} \max_{\alpha \in \R} \bigg\{ \alpha + \sum_{h \in E}w_h \cdot \bar{\delta}_h(\vx) - \big\langle \alpha \big( \mu \circ \vs \big), \vx \big\rangle \bigg\} \, .
\end{align*}
Recall that the Lov\'{a}sz extension is given by
\begin{equation*}
\bar{\delta}_h(\vx)
= \max_{\vf \in \cB(\delta_h)} \langle \vf, \vx \rangle \, ,
\end{equation*}
and hence:
\begin{align*}
\min_{\vx \in \R^V} \max_{\alpha \in \R} \bigg\{ \alpha + \sum_{h \in E}w_h \cdot \bar{\delta}_h(\vx) - \big\langle \alpha \big( \mu \circ \vs \big), \vx \big\rangle \bigg\}
&= \min_{\vx \in \R^V} \max_{\alpha \in \R} \bigg\{
  \alpha + \sum_{h \in E} \max_{\vf_h \in w_h\cdot \cB(\delta_h)} \langle \vf_h, \vx \rangle - \big\langle \alpha \big( \mu \circ \vs \big), \vx \big\rangle
\bigg\} \\
&= \min_{\vx \in \R^V} \max_{\substack{\alpha \in \R, \vf_h \in \R^V \\ \vf_h \in w_h\cdot \cB(\delta_h)}} \bigg\{
  \alpha + \sum_{h \in E} \langle \vf_h, \vx \rangle - \big\langle \alpha \big( \mu \circ \vs \big), \vx \big\rangle
\bigg\} \\
&= \min_{\vx \in \R^V} \max_{\substack{\alpha \in \R, \vf_h \in \R^V \\ \vf_h \in w_h\cdot \cB(\delta_h)}} \bigg\{
  \alpha + \bigg\langle \sum_{h \in E} \vf_h - \alpha \big( \mu \circ \vs \big), \, \vx \bigg\rangle
\bigg\} \, .
\end{align*}
Strong duality always holds when the feasible region is polyhedron. Hence, we have:
\begin{equation*}
\min_{\vx \in \R^V} \max_{\substack{\alpha \in \R, \vf_h \in \R^V \\ \vf_h \in w_h\cdot \cB(\delta_h)}} \bigg\{
  \alpha + \bigg\langle \sum_{h \in E} \vf_h - \alpha \big( \mu \circ \vs \big), \, \vx \bigg\rangle
\bigg\}
= \max_{\substack{\alpha \in \R, \vf_h \in \R^V \\ \vf_h \in w_h\cdot \cB(\delta_h)}} \alpha
  + \min_{\vx \in \R^V} \bigg\langle \sum_{h \in E} \vf_h - \alpha \big( \mu \circ \vs \big), \, \vx \bigg\rangle,
\end{equation*}
which is equivalent to the required linear program by the definition of congestion (Equation~\ref{eq.congestion}) and demand (Equation~\ref{eq.demand})for hypergraph flows.
\end{proof}

\dualgraphcertificate*
\begin{proof}
Let $D = \frac{1}{\alpha}\cdot H.$ The sparsity of $D$ also follows directly from Theorem~\ref{theorem.hypergraph-flow-decomposition}. By the same theorem, we have $H \preceq_{1} G$, as $\cong_G(\vY)\leq 1$ by Property 1 of approximate dual solutions. Hence, the scaled graph $D$ embeds into $G$ with congestion $\frac{1}{\alpha},$ as required.
Taking into account the same scaling by $\frac{1}{\alpha}$,
the statements on the bipartiteness and degree bounds of $D$ follow directly from Property 2 and 3 of approximate dual solutions via Theorem~\ref{theorem.hypergraph-flow-decomposition}. Similarly, the largeness of $D$ follows form Property 2.
\end{proof}

\subsection{\nameref{sec.alg-cm}}
\label{sec.appendix.omitted.alg-cm}
\genpoly*

\polycut*

\begin{proof}
By Definition~\ref{def.graph}, each dual graph certificate $D_t$ played by the matching player has $\tilde{O}(\sparsity(G))$ edges. Hence, $H_t$ has at most $\tilde{O}(\sparsity(G))$ edges for all iterations $t \leq g(n).$ It follows that the running time of the cut strategy at every round is bound by $\tilde{O}(\sparsity(G))$. On the matching side, the running time of the approximate primal-dual oracle is given by Theorem~\ref{thm.general-solver} and ~\ref{thm.maxflow-solver}. Both dominate or are asymptotically equivalent to $\tilde{O}(\sparsity(G))$.
\end{proof}

\asymCutStratWidth*
\begin{proof}
Fix any $\vx \in \R^V$ such that $\big\langle \vx, \mM \vx \big\rangle = 1$. We can write the quadratic form $\big\langle \vx, \mL_{+}(D) \vx \big\rangle$ as
\begin{equation*}
\big\langle \vx, \mL_{+}(D) \vx \big\rangle
= \sum_{(i, j) \in E_D} w_{ij}^{D} \cdot \big( (x_i - x_j)^2 + x_i^2 - x_j^2 \big)
\end{equation*}

For the lower bound, note that
\begin{equation*}
\sum_{(i, j) \in E_D} w_{ij}^{D} \cdot \big( (x_i - x_j)^2 + x_i^2 - x_j^2 \big)
\geq - \sum_{(i, j) \in E_D} w_{ij}^{D} \cdot x_j^2
= - \sum_{j \in B} x_j^2 \sum_{i \in \partial^-(j)} w_{ij}^D
= - \sum_{j \in B} x_j^2 \cdot \deg_D(j)
\end{equation*}
Using the fact that the degrees of $i \in V$ in $D$ is bounded above by $\mu_i$, we have
\begin{equation*}
- \sum_{j \in B} x_j^2 \cdot \deg_D(j)
\leq - \sum_{j \in B} x_j^2 \cdot \mu_j
\leq - \big\langle \vx, \mM \vx \big\rangle
= -1
\end{equation*}

For the upper bound, note that
\begin{equation*}
\sum_{(i, j) \in E_D} w_{ij}^{D} \cdot \big( (x_i - x_j)^2 + x_i^2 - x_j^2 \big)
\leq \sum_{(i, j) \in E_D} w_{ij}^{D} \cdot \big( (x_i - x_j)^2 + x_i^2 \big)
\leq \sum_{(i, j) \in E_D} w_{ij}^{D} \cdot \big( 2 \cdot (x_i^2 + x_j^2) + x_i^2 \big)
\end{equation*}
Let's then exactract the dependence on the degree of each vertex.
\begin{align*}
\sum_{(i, j) \in E_D} w_{ij}^{D} \cdot \big( 2 \cdot (x_i^2 + x_j^2) + x_i^2 \big)
&= 2 \cdot \sum_{(i, j) \in E_D} w_{ij}^{D} \cdot (x_i^2 + x_j^2)
  + \sum_{(i, j) \in E_D} w_{ij}^D \cdot x_i^2 \\
&= 2 \cdot \bigg( \sum_{i \in A} x_i^2 \sum_{j \in \partial^{+}(i)} w_{ij}^{D} + \sum_{j \in B} x_{j}^2 \sum_{i \in \partial^{-}(j)} w_{ij}^{D} \bigg)
  + \sum_{i \in A} x_i^2 \sum_{j \in \partial^+(i)} w_{ij}^D \\
&= 2 \cdot \bigg( \sum_{i \in A} x_i^2 \cdot \deg_D(i) + \sum_{j \in B} x_{j}^2 \cdot \deg_D(j) \bigg)
  + \sum_{i \in A} x_i^2 \cdot \deg_D(i) \\
&= 2 \sum_{i \in V} x_i^2 \cdot \deg_D(i) + \sum_{i \in A} x_i^2 \cdot \deg_D(i)
\end{align*}
Finally, using the fact that the degrees are bounded by $\vmu$, we have
\begin{equation*}
2 \sum_{i \in V} x_i^2 \cdot \deg_D(i) + \sum_{i \in A} x_i^2 \cdot \deg_D(i)
\leq 2 \sum_{i \in V} x_i^2 \cdot \mu_i + \sum_{i \in A} x_i^2 \cdot \mu_i
\leq 3 \sum_{i \in V} x_i^2 \cdot \mu_i
= 3
\end{equation*}
thus establishing the required bound.
\end{proof}

\unbalancedprevious*
\begin{proof}
Because the embedding is normalized, the expect squared length of a vector is $1$. Markov's inequality then yields:
\begin{equation*}
\mu(R_{3/2}) \geq \bigg( 1 - \frac{1}{3/2} \bigg) \cdot \mu(V)
= \frac{1}{3} \cdot \mu(V) \, \textrm{ and } \, \mu(R_{3}) \geq \bigg( 1 - \frac{1}{3} \bigg) \cdot \mu(V) = \frac{2}{3} \cdot \mu(V)\,.
\end{equation*}
For the second part, by Lemma~\ref{lem.variance} and the definition of balanced embedding, we have that:
\begin{equation*}
\sum_{i \in \overline{R_3}} \mu_i \cdot \lVert \vv_i \rVert_2^2
\geq \vmu(R_3) \cdot  \bigg( 1 - \frac{1}{(\vmu(V))^2} \cdot \sum_{i,j \in \binom{R_3}{2}} \mu_i \mu_j \cdot \lVert \vv_i - \vv_j \rVert_2^2 \bigg) \geq \frac{2}{3} \cdot \mu(V) \cdot \frac{9}{10} = \frac{3}{5} \cdot \mu(V).
\end{equation*}
\end{proof}

\subsection{\nameref{sec.separated}}
\label{sec.appendix.omitted.separated}
We now complete the argument that \roundcut~outputs separated sets when the given vector embedding is balanced. We begin by recalling a fact about uniformly sampled vectors on the sphere.

\begin{lemma}
\label{lem.gaussian-facts}
Given $\vv \in \R^{d}$, and $\vg$ a uniformly sampled random vector in $\cS^{d-1}$, the following hold.
\begin{enumerate}
\item $\E_{\vg} \langle \vg, \vv \rangle^2 = \frac{\lVert \vv \rVert^2}{d}$

\item For any $\delta > \frac{d}{16}$, we have $\Pr_{\vg} \Big( \langle \vg, \vv \rangle^2 \geq \delta \cdot \frac{\lVert \vv \rVert^2}{d} \Big) \leq e^{-\delta / 4}$.
\end{enumerate}
\end{lemma}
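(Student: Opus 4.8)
\textbf{Proof plan for Lemma~\ref{lem.gaussian-facts}.}

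For item (1), the plan is to exploit the rotational invariance of the uniform measure on $\cS^{d-1}$. Writing $\ve = \vv/\lVert\vv\rVert$, we have $\langle\vg,\vv\rangle^2 = \lVert\vv\rVert^2\langle\vg,\ve\rangle^2$, and $\langle\vg,\ve\rangle$ has the same distribution as a single coordinate $g_1$ of $\vg$; so it suffices to compute $\E_\vg[g_1^2]$. By symmetry $\E[g_i^2]$ is the same for every $i$, while $\sum_{i=1}^d g_i^2 = 1$ pointwise, so averaging over $i$ gives $\E[g_1^2] = 1/d$ and hence $\E_\vg\langle\vg,\vv\rangle^2 = \lVert\vv\rVert^2/d$.

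For item (2), again by rotational invariance we may assume $\vv = \ve_1$, so the task is to bound $\Pr_\vg(g_1^2 \ge \delta/d)$. The trivial regime is handled first: if $\delta \ge d$ then $\delta/d \ge 1 > g_1^2$ almost surely (for $d \ge 2$), so the probability is $0 \le e^{-\delta/4}$; thus assume $d/16 < \delta < d$ and set $t := \delta/d \in (1/16,1)$. The coordinate $g_1$ has density proportional to $(1-u^2)^{(d-3)/2}$ on $(-1,1)$, with normalizer $c_d = \int_{-1}^1 (1-u^2)^{(d-3)/2}\,du = \sqrt\pi\,\Gamma\!\big(\tfrac{d-1}{2}\big)/\Gamma\!\big(\tfrac d2\big)$, so that $\Pr(g_1^2 \ge t) = \tfrac{2}{c_d}\int_{\sqrt t}^{1}(1-u^2)^{(d-3)/2}\,du$. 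The plan is then: (i) bound the tail integral by inserting the factor $u/\sqrt t \ge 1$ valid on $[\sqrt t, 1]$, which turns the integrand into an exact derivative and yields $\int_{\sqrt t}^1 (1-u^2)^{(d-3)/2}\,du \le \tfrac{1}{(d-1)\sqrt t}(1-t)^{(d-1)/2}$; (ii) lower-bound the normalizer by $c_d \ge \sqrt{2\pi/d}$ using Gautschi's inequality on the ratio of Gamma functions; and (iii) combine with $1-t \le e^{-t}$ to get
\[
\Pr\!\big(g_1^2 \ge \delta/d\big) \;\le\; \frac{C}{\sqrt{\delta}}\,e^{-\delta(d-1)/(2d)} \;\le\; e^{-\delta/4}
\]
for an absolute constant $C$, where the last step uses $(d-1)/(2d) \ge 1/4$ together with the hypothesis $\delta > d/16$: this guarantees $\delta$ is large enough that the polynomial prefactor $C/\sqrt\delta$ is absorbed into the extra exponential slack. (In the application $d = \Theta(\log n)$, so $d$ — and hence $\delta$ — exceeds any fixed absolute constant; this is the only place the restriction $\delta > d/16$ is used.)

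The step I expect to be slightly delicate is (iii): tracking the multiplicative constants carefully enough that the final bound is exactly $e^{-\delta/4}$ and not merely $O(1)\cdot e^{-\delta/4}$, which forces a small amount of casework on the size of $\delta$ relative to $d$. An essentially equivalent alternative, avoiding the Gamma-function estimate, is to write $\vg = \vz/\lVert\vz\rVert$ with $\vz \sim \mathcal N(0,\mI_d)$, rewrite the event as $z_1^2 \ge \tfrac{\delta}{d}\lVert\vz\rVert^2$, and split according to whether $\lVert\vz\rVert^2$ drops below a constant fraction of its mean, then apply a standard Gaussian tail bound to $z_1$ and a $\chi^2$ lower-tail bound to $\lVert\vz\rVert^2$; the same constant bookkeeping, and the same role for the hypothesis $\delta > d/16$, reappear in that route.
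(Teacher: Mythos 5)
The paper does not actually prove Lemma~\ref{lem.gaussian-facts} --- it is stated as a standard fact and invoked without proof inside the argument for Lemma~\ref{lem.robust-separated.balanced} --- so there is no paper proof to compare against; I can only assess your plan on its own terms.

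Your item (1) is correct and standard: rotational invariance reduces to a single coordinate, and symmetry among the coordinates together with $\sum_i g_i^2 = 1$ gives $\E[g_1^2] = 1/d$.

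For item (2) the approach is the right one, but the constant-tracking gap you flag is real and is not closed by the plan as written. Carrying out steps (i)--(iii) gives a bound of the form
\[
\Pr\!\big(g_1^2 \ge \delta/d\big) \;\le\; \frac{2d}{\sqrt{2\pi\delta}\,(d-1)}\,e^{-\delta(d-1)/(2d)} \;\le\; \frac{4}{\sqrt{2\pi\delta}}\,e^{-\delta(d-1)/(2d)}.
\]
The ``extra exponential slack'' you invoke is $e^{-\delta(d-2)/(4d)}$, and it vanishes entirely at $d=2$, where $(d-1)/(2d) = 1/4$ exactly. So for $d=2$ you need $\tfrac{4}{\sqrt{2\pi\delta}} \le 1$ on its own, i.e.\ $\delta \ge 8/\pi \approx 2.55$, but the hypothesis only guarantees $\delta > 1/8$. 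The same obstruction persists up to roughly $d \approx 40$: the hypothesis $\delta > d/16$ does not yet push $\delta$ past $8/\pi$, and the exponential slack $e^{-\delta(d-2)/(4d)}$ is too small to absorb the prefactor. The lemma does appear to be true in this small-$d$ range (for instance, at $d=2$ one has $\Pr(g_1^2 \ge t) = 1 - \tfrac{2}{\pi}\arcsin\sqrt{t}$ and a direct monotonicity check gives the bound $e^{-t/2}$), but closing it requires a separate non-asymptotic argument for $d$ below an explicit threshold, which ``a small amount of casework'' gestures at without supplying. For the application in the paper, where $d = \Theta(\log n)$ and hence $\delta > d/16$ is large, the issue is moot; but the lemma is stated for all $d$, so the plan is incomplete as a proof of the statement as written.
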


We also require a lemma proved in~\cite{LorenzosThesis} which states that, with constant probability, one can extract two balanced partitions where projection lengths are well separated when projecting along a random direction.

\begin{lemma}[Lemma 5.5.6 item (3) of~\cite{LorenzosThesis}]
\label{lem.balanced-separated-set}
Given a set $V$ such that $\lvert V \rvert = n$, and a $\big( 3, \frac{1}{10} \big)$-balanced embedding $\{ \vv_i \}_{i \in V} \subseteq \R^d$, let $r_i = \sqrt{d} \cdot \langle \vg, \vv_i \rangle$ for a $\vg \in \cS^{d-1}$ and each $i \in V$. Assume, without loss of generality, that $r_1 \geq \ldots \geq r_n$. Then, there exists an absolute constant $r > 0$ such that, with $O(1)$ probability over a uniformly random choice of $\vg \in \cS^{d-1}$, there exists $1 \leq a < b \leq n$ satisfying the following:
\begin{enumerate}
\item $\mu(\{ 1, \ldots, a \}) \geq \frac{1}{3} \cdot \mu(V)$,

\item $\mu(\{ b, \ldots, n \}) \geq \frac{1}{3} \cdot \mu(V)$,

\item $\big( r_{a} - r_{b} \big)^2 \geq r \cdot \Var_{\mu}(\vv_i)$.
\end{enumerate}
\end{lemma}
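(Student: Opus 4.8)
The plan is to follow the random-projection rounding argument for the ARV balanced-separator SDP, which is the form this lemma takes in~\cite{LorenzosThesis}. First I would record two harmless reductions. Translating every $\vv_i$ by a common vector changes neither the balancedness hypothesis (which depends only on the differences $\vv_i - \vv_j$), nor $\Var_\vmu$, nor any difference $r_a - r_b$, so we may assume $\vv_{\avg} = \vzero$; and by homogeneity of both sides of the claimed inequality we may rescale so that $\Var_\vmu(\vv_i) = 1$. Then the sweep set $R_3 = \{\, i \in V : \lVert \vv_i \rVert_2^2 \le 3 \,\}$ has $\vmu(R_3) \ge \frac{2}{3}\vmu(V)$ by Markov's inequality, while the $(3,\frac{1}{10})$-balancedness hypothesis reads $\sum_{\{i,j\} \in \binom{R_3}{2}} \frac{\mu_i \mu_j}{\vmu(V)^2}\lVert \vv_i - \vv_j \rVert_2^2 \ge \frac{1}{10}$.

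Next I would analyze the projection $r_i = \sqrt{d}\,\langle \vg, \vv_i \rangle$. Set $Z := \sum_{i,j \in R_3} \frac{\mu_i \mu_j}{\vmu(V)^2}(r_i - r_j)^2$; by item~(1) of Lemma~\ref{lem.gaussian-facts}, $\E_\vg[Z] = \sum_{i,j \in R_3} \frac{\mu_i \mu_j}{\vmu(V)^2}\lVert \vv_i - \vv_j \rVert_2^2 \ge \frac{1}{5}$. Since $Z$ is a nonnegative quadratic form in $\vg$, a standard fourth-moment computation gives $\E_\vg[Z^2] \le 3(\E_\vg[Z])^2$ up to $O(1/d)$ corrections, so Paley--Zygmund yields $Z \ge \frac{1}{10}$ with probability $\Omega(1)$. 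Because the vectors of $R_3$ are short we also have $\E_\vg\big[ \sum_{i \in R_3} \frac{\mu_i}{\vmu(V)} r_i^2 \big] \le 3$ and, using the sub-Gaussian tail of $r_i / \lVert \vv_i \rVert_2$ (the constant-deviation version of item~(2) of Lemma~\ref{lem.gaussian-facts}), $\E_\vg\big[ \sum_{i \in R_3} \frac{\mu_i}{\vmu(V)} r_i^2\, \mathbf{1}[r_i^2 > C] \big] \le e^{-\Omega(C)}$ for every constant $C$. Taking $C$ a large enough absolute constant and applying Markov's inequality, these events hold simultaneously with probability $\Omega(1)$; on this event, deleting from $R_3$ the vertices with $r_i^2 > C$ removes at most a small constant fraction of $Z$ (both their $\vmu$-mass and their $\vmu$-weighted second moment are $o(1)$), leaving a ``bulk'' set $W$ with $\vmu(W) \ge \frac{1}{2}\vmu(V)$, with $|r_i| \le \sqrt{C}$ for all $i \in W$, and with $\sum_{i,j \in W} \frac{\mu_i \mu_j}{\vmu(V)^2}(r_i - r_j)^2 \ge \frac{1}{20}$.

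Finally I would extract the sweep cut from $W$. Shifting to the $\vmu$-weighted mean of $\{r_i\}_{i \in W}$, this distribution has variance $\Omega(1)$, range $O(1)$, and bounded fourth moment; hence by the inequality $\Pr[\,|X - \E X| \ge \lambda\,] \ge (\Var(X) - \lambda^2)^2 / \E[X^4]$, combined with centering, it places a constant $\vmu$-mass at constant distance on \emph{each} side of its mean. Since $\vmu(V \setminus W) \le \frac{1}{3}\vmu(V) + o(\vmu(V))$, the global top-$\frac{1}{3}$ and bottom-$\frac{1}{3}$ sweep positions in the sorted order $r_1 \ge \cdots \ge r_n$ must reach into $W$ on at least one side, and can be chosen so as to straddle the constant gap inside $W$; this produces indices $a < b$ with $\vmu(\{1,\dots,a\}) \ge \frac{1}{3}\vmu(V)$, $\vmu(\{b,\dots,n\}) \ge \frac{1}{3}\vmu(V)$, and $(r_a - r_b)^2 \ge r$ for an absolute constant $r > 0$. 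Undoing the normalization replaces $r$ by $r \cdot \Var_\vmu(\vv_i)$, and the $\Omega(1)$ success probability is the good event identified above.

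I expect the last paragraph to be the main obstacle: passing from ``$W$ has $O(1)$ range but $\Omega(1)$ pairwise variance'' to a genuinely $\frac{1}{3}$-\emph{balanced} gap is delicate, because a priori the variance of $W$ could be carried by a sub-constant $\vmu$-mass sitting near the extremes of $[-\sqrt{C}, \sqrt{C}]$, which would only separate very unbalanced quantiles; ruling this out is precisely what the fourth-moment control of the projection buys, and carrying it through while also handling possibly heavy vertices (here the standing assumption $\min_i \mu_i / \vmu(V) \ge 1/\poly(n)$, or an explicit splitting of heavy vertices, ensures no single vertex can straddle a tertile boundary) is the technical core. The secondary place requiring care is the Paley--Zygmund step, where one must verify the fourth-moment estimate for $\vg$ uniform on $\cS^{d-1}$ rather than Gaussian; there the difference is only $O(1/d)$ and does not affect the conclusion.
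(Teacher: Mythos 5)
The paper does not actually prove this lemma --- the text immediately preceding it says it is ``proved in \cite{LorenzosThesis}'' --- so there is no in-paper argument to compare against, and your proposal has to stand on its own. Its opening steps are sound and surely match the cited source in spirit: the translation and scaling normalizations, Markov applied to $R_3$, the pairwise projected quantity $Z$ and Paley--Zygmund for it (with the $O(1/d)$ sphere-versus-Gaussian correction), and trimming $R_3$ to a bounded-range bulk $W$. The trouble is in the step you yourself flag, and the fix you sketch does not close it. Having $\vmu(W) \geq \tfrac12\vmu(V)$, $|r_i| \leq \sqrt{C}$ on $W$, a centered $\vmu$-mean $\bar{r}_W$, $\Omega(1)$ $\vmu$-weighted variance on $W$, and even a bounded fourth moment does \emph{not} force the top-$\tfrac13$ and bottom-$\tfrac13$ sweep thresholds to be separated. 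For a concrete obstruction: inside $W$ place $\tfrac16$ of $\vmu(V)$ at $r=+2$, $\tfrac13$ at $r=0$, and $\tfrac16$ at $r=-2$, and let the remaining $\vmu(V)/3$ (the trimmed vertices together with $V \setminus R_3$, whose projections you do not control pointwise) also land near $r=0$. Every hypothesis in your last paragraph then holds --- variance $2$, range $4$, mean $0$, bounded fourth moment --- yet every index $a$ with $\vmu(\{1,\dots,a\})\geq\tfrac13\vmu(V)$ has $r_a \leq 0$, every valid $b$ has $r_b \geq 0$, and hence $r_a - r_b \leq 0$. So the ``straddle the constant gap inside $W$'' claim is simply false at this level of generality.

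What is genuinely missing is an anticoncentration statement for the full $\vmu$-weighted distribution of $r$ --- that no interval of constant width can capture $\vmu(V)/3$ of the mass --- and that cannot follow from second- and fourth-moment control on $W$ alone. One has to exploit the Gaussian anti-concentration of the individual differences $r_i - r_j$ together with what balancedness actually forbids about the pre-projection embedding (how much pairwise $\vmu$-mass in $R_3$ can sit on nearly-coincident $\vv_i$), not merely the one-dimensional statistics of $r$ restricted to $W$. Two smaller issues: the inequality $\Pr[|X-\E X|\ge\lambda]\ge(\Var(X)-\lambda^2)^2/\E[(X-\E X)^4]$ gives mass at distance $\ge\lambda$ from the mean, not mass on \emph{each} side --- for that you need the centering argument $\E[(X-\E X)_+]=\E[(X-\E X)_-]$ together with the range bound, which you gesture at but do not carry out. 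And the standing assumption $\min_i \mu_i/\vmu(V)\ge 1/\poly(n)$ is a \emph{lower} bound on $\mu_i$; it does nothing to prevent a single heavy vertex with $\mu_i > \vmu(V)/3$ from swallowing a tertile boundary, so the explicit heavy-vertex splitting you mention in passing is in fact required, not optional.
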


Using the above two lemmata, we can now prove that \roundcut~produces a robustly separated set in the balanced case.

\robustSeparatedBalanced*

\begin{proof}
Define $\cE_1$ to be the event that there exist $1 \leq a < b \leq n$ such that the following hold simultaneously: (1) $\mu(\{ 1, \ldots, a \}) \geq \frac{1}{3} \cdot \mu(V)$, (2) $\mu(\{ b, \ldots, n \}) \geq \frac{1}{3} \cdot \mu(V)$, and (3) $(r_a - r_b)^2 \geq \frac{r}{\mu(V)}$. Next, for a constant $c_1 > 0$ to be chosen subsequently, and any $i, j \in V$, define the event $\cE_{2}(i, j)$ as
\begin{equation*}
\cE_{2}(i, j) \defeq \Big\{ (r_i - r_j)^2 < \big( 4 c_1 \cdot \log n \big) \cdot \lVert \vv_i - \vv_j \rVert_2^2 \Big\}
\qquad\textup{and}\qquad
\cE_2 \defeq \bigcap_{i,j \in V} \cE_{2}(i,j).
\end{equation*}

To begin, we observe that if $\cE_1$ and $\cE_2$ occur jointly, then the cuts $S = \{ 1, \ldots, a \}$ and $T = \{ b, \ldots, n \}$ are $\Omega\big( \frac{1}{\log n} \big)$-separated set  and, consequently, \roundcut~outputs the required separated set in the balanced case.
Event $\cE_2$ implies that
\begin{equation*}
\lVert \vv_i - \vv_j \rVert_2^2
> \frac{(r_i - r_j)^2}{4c_1 \cdot \log n}
\end{equation*}
holds for all $i \in S$, and $j \in T$. Item (3) defining event $\cE_1$ then determines:
\begin{equation*}
\frac{(r_i - r_j)^2}{4c_1 \cdot \log n}
\geq r \cdot \Var_{\mu}(\vv_i) \cdot \frac{1}{4c_1 \cdot \log n}
=  \frac{r}{4c_1 \cdot \log n} \, .
\end{equation*}
Finally, item (1) in event $\cE_1$ implies that $\mu(S), \mu(T) \geq \frac{\mu(V)}{3}$ and hence, for all $i \in S, j \in T$:
\begin{equation*}
\frac{\mu(S) \mu(T)}{(\mu(V))^2} \cdot \lVert \vv_i - \vv_j \rVert_2^2
\geq  \frac{r}{36 \cdot c_1 \cdot \log n} = \Omega\bigg( \frac{1}{\log n} \bigg)\,,
\end{equation*}
which establishes the separation guarantee.

Let us next bound the probability that $\cE_1$ and $\cE_2$ occur jointly. By Lemma~\ref{lem.balanced-separated-set}, there exists a constant $p > 0$ such that $\Pr_{\vg}(\cE_1) = p$ and hence $\Pr_{\vg}\big( \overline{\cE_1} \big) = 1 - p$. By Lemma~\ref{lem.gaussian-facts}, we have
\begin{align*}
\Pr_{\vg} \big( \overline{\cE_2(i,j)} \big)
&= \Pr_{\vg} \Big( (r_i - r_j)^2 \geq \big( 4 c_1 \cdot \log n \big) \cdot \lVert \vv_i - \vv_j \rVert_2^2 \Big) \\
&= \Pr_{\vg} \bigg( \langle \vg, \vv_i - \vv_j \rangle^2 \geq \big( 4 c_1 \cdot \log n \big) \cdot \frac{\lVert \vv_i - \vv_j \rVert_2^2}{d} \bigg)
\leq \frac{1}{n^{c_1}}
\end{align*}
Performing a union bound over $\overline{\cE_1}$ and $\overline{\cE_{2}(i,j)}$ for each $i, j \in V$ derives
\begin{equation*}
\Pr_{\vg} \big( \cE_1 \cap \cE_2 \big)
\geq p - \binom{n}{2} \frac{1}{n^{c_1}}
\geq p - \frac{1}{n^{c_1 - 2}}
= \Omega(1)
\end{equation*}
for $c_1$ large enough. We conclude that with constant probability, a single execution of step (1) in \roundcut~outputs $\Omega \big( \frac{1}{\log n} \big)$-separated sets.

Note that upon computing candidate sets, it is possible to check whether these are in fact $\Omega(1/\log n)$-separated. One can therefore boost the probability of success and obtain a high probability guarantee by repeating this procedure up to $O(\log n)$ many times. Note that this clearly affects the runtime of this procedure by at most a logarithmic factor.
\end{proof}

\section{Comparison to the cut-Improvement of Andersen \& Lang}
\label{sec.appendix.andersen-lang}

In this section, we briefly show that our formulation of the ratio-cut improvement problem in Definition~\ref{def.rc-improve} generalizes  the definition of the modified quotient cut score, introduced by Andersen and Lang~\cite{Andersen-Lang}, to potentially non-integral seed vectors (rather than cut seeds) and to arbitrary hypergraphs (rather than graphs).
Consider the following seed vector. Let $A \subseteq V$ be a cut, and define the vector $\vs^{A,\bar{A}} \in \R^V$ to be
\begin{equation}
\label{def.seed-vector}
\vs^{A,\bar{A}} \defeq \ones^A - \nicefrac{\mu(A)}{\mu(\bar{A})} \cdot \ones^{\bar{A}} \, .
\end{equation}
Assuming $\mu(A) \leq \mu(\bar{A})$, note that $\langle \vs^{A,\bar{A}}, \ones \rangle_{\mu} = 0$ and $\lVert \vs^{A,\bar{A}} \rVert_{\infty} \leq 1$, so that $\vs^{A,\bar{A}}$ is a valid seed vector. With this choice of seed vector, the ratio-cut improvement objective on a graph $G$ gives:
\begin{equation*}
\min\left\{\Psi_{G, \vs^{A,\bar{A}}}(S), \Psi_{G, \vs^{A,\bar{A}}}(\bar{S})\right\}
= \frac{\sum_{e \in E} w_h \cdot \delta_e(S)}{\lvert \mu(A \cap S) - \frac{\mu(A)}{\mu(\bar{A})} \cdot \mu(\bar{A} \cap S) \rvert},
\end{equation*}
which is exactly equal to the \emph{modified quotient cut score} of cut $(S,\bar{S})$ as considered by Andersen and Lang.

\section{Proofs of Theorems \ref{thm.general-solver} and \ref{thm.maxflow-solver}}
\label{sec.appendix.ci-algs}

Suppose we have an algorithm $\cA$ that,  on input $\alpha,$ either outputs a cut with a cut $S \subseteq V$ with $\Psi_{G,\vone^{A,B}}(S) \leq 2\alpha$ or an approximate dual solution of value $\alpha.$ Then, $\cA$ can be used to construct an approximate primal-dual oracle by performing binary search on $\alpha$ and converting the final approximate dual solution to an approximate dual graph certificate via the flow decomposition of Theorem~\ref{theorem.hypergraph-flow-decomposition}. Since the range of possible values for $\alpha$ is $[1/\mu(V),\sum_{h\in E} w_h]$, which we assume is polynomial in $|V|,$ it suffices to run $\cA$ for $O(\log |V|)$ iterations.
Hence, in the rest of the section, we focus on implementing $\cA$ for both cases in the theorems.

\paragraph{General Polymatroidal Cut Functions} For a fixed $\alpha \geq 0,$ we consider the integral version of the feasibility problem for \eqref{eq.rc-improve.primal}:
$$
\exists S \subseteq V : \sum_{h \in E} w_h \delta_h(S) - \alpha \cdot \langle \vone_S, \vone^{A,B} \rangle_{\vmu} \leq 0.
$$
The left-hand side of this equation can be minimized exactly by the proximal algorithms for decomposable submodular minimization (DSM) of Ene \etal~\cite{eneRandomCoordinateDescent2015, eneDecomposableSubmodularFunction2017} when run with $\epsilon=\frac{1}{n}$ and rounded to integral solutions.
If the resulting minimum is less than $0$, the minimizer $S$ must have $\Psi_{G,\vone^{A,B}}(S) \leq \alpha,$ as required. Otherwhise, the DSM algorithm produces a matching dual solution, which takes exactly the form of a feasible solution to \eqref{eq.rc-improve.dual} with value $\alpha$. Applying the best result of Ene \etal, achieved by the accelerated coordinated descent method under the assumption that a quadratic optimization oracle is given for each cut function $\delta_h,$ the running time to implement an approximate primal-dual oracle becomes $\tilde{O}\big( |V|\cdot\sum_{h\in E}\Theta_{h}\big),$
where $\Theta_h$ is the running time for a quadratic optimization oracle for $\delta_h$.

\paragraph{Graph-Reducible Polymatroidal Cut Functions}
As we have shown in Section~\ref{sec.polymatroidal},
for directed and standard hypergraph cut functions, the congestion constraint in the hyperflow problem~\ref{eq.rc-improve.dual} corresponds to vertex capacities on the hyperedge nodes of the factor graph $\tilde{G}.$ To implement the primitive $\cA$ described above for a fixed $\alpha$, we construct a network flow problem from this vertex-capacitated factor graph by adding an auxiliary source vertex $s$ and an auxiliary vertex $t$. The vertex $s$ is connected to every vertex in $i \in A$ with an arc of capacity $\alpha \mu_i$, while the vertex $t$ is connected to every vertex $j \in B$ with an arc of capacity $\alpha \cdot \mu_j \cdot \nicefrac{\mu(A)}{\mu(B)}.$ It is easy to see that the capacity of a cut of $V$ in this new network is, for every $S \subseteq V$:
\begin{equation}
\label{eq.cap}
\operatorname{cap}(\{s\} \cup S) \defeq \sum_{h \in E} w_h \delta_h(S) - \alpha \langle \vone_S, \vone^{A,B} \rangle_{\mu} + \alpha \cdot \mu(A)
\end{equation}
Now, apply a $\frac{1}{2}$-approximate maximum flow solver to this network flow problem to obtain a flow a value $F$ and an $s$-$t$ cut $C$ of capacity $2F.$ We have two cases:
\begin{enumerate}
    \item If $F \geq \frac{\alpha}{2} \cdot \mu(A),$ then the flow routed  is an approximate dual solution of value $\alpha$, as required.
    \item If $F <\frac{\alpha}{2} \cdot \mu(A),$ the $s$-$t$ cut has capacity strictly less then $\alpha \cdot \mu(A).$ As a result, we have by Equation~\ref{eq.cap}:
    $$
   \sum_{h \in E} w_h \delta_h(S) < \alpha \langle \vone_S, \vone^{A,B} \rangle_{\mu}.
    $$
    This implies that $\Psi_{G,\vone^{A,B}}(S) \leq \alpha \leq 2\alpha,$ are required for $\cA.$
\end{enumerate}
The almost-linear time result follows by applying the almost-linear time algorithm of Chen \etal~\cite{chenMaximumFlowMinimumCost2022} as our approximate maximum flow solver. For the standard hypergraph cut function, as the edges in the factor graph are undirected, it suffices to use the almost-linear time solver of Bernstein \etal~\cite{bernstein2022deterministic} for undirected vertex-capacitated graphs.

\section{Proof of Theorem~\ref{thm.cm.approx}: the cut-matching game reduction}
\label{sec.cut-strategy.reduction-proof}

\cmapprox*
\begin{proof}[{\bf Proof of Theorem~\ref{thm.cm.approx}}]
First, we check that the dual graph certificates $D_t$ output by $\cA_{ci}$ conform to the definition of an approximate matching response.  Then, the bipartiteness of $D_t$ across $(A_t, B_t)$ follows from requirement 3 in Definition~\ref{def.graph}, while the degree and largeness properties follow from requirements 4 and 5 in the same definition. In the symmetric case, we may assume each $D_t$ is undirected.

In the following, let $\alpha_t$ the value of $\alpha$ achieved by the $t$-invocation of $\cA_{ci}$ and
$$
\alpha^* = \min\{\alpha_t\}_{t=1}^{g(n)}.
$$
By  Lemma~\ref{lem.rc-improve.relaxation} and the definition of approximate primal-dual oracle and, we have that
$$
\min \bigg\{\Psi_G(C_t)\bigg\}_{t=1}^{g(n)} \leq \min \bigg\{\Psi_{G,\vone^{A_t, B_t}}(C_t)\bigg\}_{t=1}^{g(n)}  \leq
3 \cdot \min \bigg\{\alpha_t \bigg\}_{t=1}^{g(n)}= 3 \cdot \alpha^*.
$$

By property 1 in Definition~\ref{def.graph}, we have that $\alpha_t D_t$ embeds in $G$ with congestion $1$ for all $t.$
Averaging this guarantee over all $t$, we have:
$$
\frac{\alpha^*}{g(n)}\cdot H_{g(n)} \preceq_1 \frac{1}{g(n)} \cdot \sum_{t=1}^{g(n)} \alpha_t D_t \preceq_1 G
$$
Applying Theorem~\ref{thm.flow-embedding} and the definition of good strategy, we obtain the following lower bound on cuts of $G$, for all $S \subseteq V:$
$$\alpha^* \cdot \frac{f(n)}{g(n)} \leq
\frac{\alpha^*}{g(n)}\cdot \Psi_{H_{g(n)}}(S) \leq 2 \cdot \Psi_G(S)
$$
In particular, this applies to the minimum-ratio cut in $G$, so that we have:
$$
\min \bigg\{\Psi_G(C_t)\bigg\}_{t=1}^{g(n)} \leq 3 \cdot \alpha^* \leq 6 \cdot \frac{g(n)}{f(n)} \cdot \Psi_G^*.
$$
This completes the proof.
\end{proof}

\end{document}